\theoremstyle{plain}
\newtheorem{theorem}{Theorem}
\newtheorem{corollary}{Corollary}
\newtheorem{assumption}{Assumption}
\newtheorem{definition}{Definition}
\newtheorem{example}{Example}
\newtheorem{lemma}{Lemma}
\newtheorem{proposition}{Proposition}
\theoremstyle{remark}
\DeclareMathOperator*{\argmax}{arg\,max}
\DeclareMathOperator*{\argmin}{arg\,min}
\newcommand\cites[1]{\citeauthor{#1}'s\ (\citeyear{#1})}
\newcommand{\com}[1]{}
\begin{document}

\renewcommand\thmcontinues[1]{Continued}

\begin{frontmatter}{}

\title{Auction design with ambiguity:\\
Optimality of the first-price and all-pay auctions}

\author[A1]{Sosung Baik}

\ead{sosung.baik@gmail.com}

\author[A1]{Sung-Ha Hwang\corref{cor1}}

\ead{sungha@kaist.ac.kr}

\cortext[cor1]{\today. Corresponding author. The research of S.-H. H. was supported by
the National Research Foundation of Korea.}

\address[A1]{Korea Advanced Institute of Science and Technology (KAIST), Seoul, Korea}

\begin{abstract}

We study the optimal auction design problem when bidders' preferences follow the maxmin expected utility model. We suppose that each bidder's set of priors consists of beliefs close to the seller's belief, where  ``closeness'' is defined by a divergence. For a given allocation rule, we identify a class of optimal transfer candidates, named the \textit{win-lose dependent transfers}, with the following property: each type of bidder's transfer conditional on winning or losing is independent of the competitor's type report. Our result reduces the infinite-dimensional optimal transfer problem to a two-dimensional optimization problem. By solving the reduced problem, we find that: (i) among efficient mechanisms with no premiums for losers, the first-price auction is optimal; and, (ii) among efficient winner-favored mechanisms where each bidder pays smaller amounts when she wins than loses: the all-pay auction is optimal. Under a simplifying assumption, these two auctions remain optimal under the endogenous allocation rule.

\end{abstract}
\begin{keyword}
Auctions, mechanism design, ambiguity.

\textbf{JEL Classification Numbers:} D44, D81, D82.
\end{keyword}
\end{frontmatter}

\thispagestyle{empty}

\newpage

\section{Introduction\setcounter{page}{1}}

We study the optimal auction design problem with ambiguity where bidders' preferences follow the maxmin expected utility (MMEU) model \citep{Gil89}. In the MMEU model, the decision maker holds multiple priors on the state and maximizes the \textit{worst-case utility}, the minimum expected utility over the set of priors. Despite a growing interest in auctions with ambiguity\footnote{\citet{Lo98, Bo06, Bo09, Bodoh12, Chi15, Lao19, Aus20, Kocyi20, Ghosh21}; see Section \ref{sec:literature} for details.}, existing results in the popular setup of independent private values (IPV) auctions are somewhat negative. The optimal mechanism turns out to be a full insurance auction, where the seller provides premiums to losers so that each bidder's payoff remains constant with respect to the competitor's type report \citep{Bo06}. However, the full insurance auction is rarely observed in reality. This gap between theory and practice poses a puzzle to researchers. In addition, the revenue ranking between the first-price and second-price auctions is sensitive to the choice of the bidders' sets of priors \citep{Bodoh12}.

In this paper, we ask a question rather different from the existing literature: under which circumstances does one of the standard auction formats, such as the first-price, second-price or all-pay auctions, become optimal? Our interest in these mechanisms is motivated by their being easily implementable since they do not require the designer's specific knowledge, for example, the bidders' beliefs about each other's types \citep{Wil87}.\footnote{Unlike these standard formats, the implementation of the full insurance auction requires the seller to know the bidders' sets of priors, which depends on the specific environment.}
We find that under  plausible assumptions on the set of priors, (i) the first-price auction is optimal among efficient mechanisms in which the seller provides no premiums; and, (ii) the all-pay auction is optimal among efficient mechanisms in which each bidder pays smaller amounts when she wins than loses.

To derive our results, we first suppose that each bidder's set of priors consists of beliefs close to some reference belief, interpreted as either the true probability law or the focal point distribution \citep{Lo98, Bo06}. We measure ``closeness'' between beliefs using a \textit{divergence}---a coefficient of discrepancy between two probability distributions used in statistics and information theory \citep{Kull51, Kull59, Ali66}. Our assumption on the set of priors is general enough to incorporate interesting examples in the existing literature (Example \ref{ex:div}), yet specific enough to produce a definite answer to the optimal mechanism problem (Theorem \ref{thm:main} and Propositions \ref{prop:limited_prem}-\ref{prop:winner_favored}).

Then, we study the problem of finding the revenue-maximizing transfer rule for a given allocation rule, called the \textit{optimal transfer problem}. Besides the usual feasibility constraint (i.e., incentive compatibility and individual rationality), to identify environments where some standard format becomes optimal, we assume that the seller faces an additional constraint (Assumption \ref{assum:T}). Since this additional constraint can be chosen to be trivially satisfied, our setup also incorporates the standard setup with only the feasibility constraint.

Our first contribution lies in identifying a class of optimal transfer candidates (Theorem \ref{thm:main}) with the following special properties: (i) each bidder's transfer (and hence the payoff) conditional on winning or losing is independent of the competitor's type report, and (ii) her winning payoff is higher than the losing payoff. We name the mechanism (or transfer) with this property as the \textit{win-lose dependent} mechanism (or transfer) (Definition \ref{def:win_lose_dependent}). Hence, the win-lose dependent mechanism generalizes the full insurance mechanism by allowing the winning and losing payoffs to differ. Moreover, for win-lose dependent mechanisms, the feasibility constraint simplifies to a formula for the bidder's interim worst-case utility analogous to the standard envelope characterization in Bayesian mechanism design \citep{Myer81}. These results reduce the infinite-dimensional problem of choosing the transfer rule to a two-dimensional problem of determining each type of bidder's winning and losing payoffs.

\com{
Theorem \ref{thm:main} can be regarded as a generalization of \cites{Bo06} result that the full insurance transfer with the same interim worst case utility as a given feasible transfer generates a higher revenue than the original. We generalize this result by showing the following principle: to an ambiguity averse decision maker, the conditional expectation of a payoff schedule (with respect to coarser information) yields a higher worst-case utility than the original (Proposition \ref{prop:SOSD}). In other words, the decision maker prefers payoff schedules with less variation.\footnote{More precisely, the MMEU preference exhibits monotonicity with respect to the second-order stochastic dominance.} Thus, given any transfer, by offering its conditional expectation with respect to the bidder's winning and losing events, the seller can create room to extract greater revenues. As a special case, the payoff with no variation (or a constant payoff function as in the full insurance auction) can yield a higher revenue than the original. The novel insight from our approaches, then, lies in showing that seemingly unrelated auctions including full insurance auctions, first-price auctions, and all-pay auctions belong to a single class of transfers in which transfer functions are piecewise constant depending on winning and losing and transfers in this class provide higher revenues than those not in this class such as the second price auction and the war of attrition.
}

Theorem \ref{thm:main} can be regarded as a generalization of \cites{Bo06} result on the optimality of the full insurance mechanisms. To prove Theorem \ref{thm:main}, we establish the following principle: to an ambiguity averse decision maker, the conditional expectation of a payoff schedule (with respect to coarser information) yields a higher worst-case utility than the original (Proposition \ref{prop:SOSD}).\footnote{In other words, the MMEU preference exhibits monotonicity with respect to (a special kind of) the second-order stochastic dominance.} Thus, given any transfer, by offering its conditional expectation with respect to the bidders' winning and losing events, the seller can create room to extract greater revenues. Exploiting this property along with characterizations of feasible mechanisms (Proposition \ref{prop:feasibility}), for any feasible transfer, we construct a win-lose dependent transfer which yields a greater revenue than the original while preserving incentive compatibility. Our result provides a novel insight that seemingly unrelated formats---the full insurance, first-price and all-pay auctions---share the common property of being win-lose dependent, which makes them superior over other formats such as the second-price auction and the war of attrition.

The second contribution of our paper is that using Theorem \ref{thm:main}, we find two plausible classes of transfers within which the first-price and all-pay auctions become optimal, respectively: (i) the limited premium transfers (Section \ref{subsec:limited_prem}), and (ii) the winner-favored transfers (Section \ref{subsec:winner_favored}). Under \textit{limited premium transfers}, the maximum premium provided to losers is limited by a certain amount. This class describes situations in which the seller provides only partial premiums to losers, as in real-world premium auctions such as Amsterdam auctions;\footnote{\citet{Mil04, Goe04, Hu11, Brun14, Hu18}.} see the related literature in Section \ref{sec:literature}.

Under \textit{winner-favored transfers}, each bidder's winning transfer is less than (or equal to) the losing transfer. This class interests us because it contains auctions whose revenues are frequently compared in the contest literature---the all-pay auction vs. the war of attrition \citep{Kris97, Horisch10, Bos2012}, and the all pay-auction vs. the sad loser auction \citep{Ril81, Minchuk18}.\footnote{For instance, \citet{Kris97} show that the war of attrition revenue dominates the all-pay auction in the affiliated interdependent values environment \citep{Mil82}. Also, \citet{Minchuk18} compares the all-pay and sad loser auctions with non-linear effort functions, and finds that the sad loser auction revenue dominates the all-pay auction when the cost of effort is convex, and the opposite holds when the cost of effort is concave.} In all three auctions, the loser pays exactly her bid (possibly because bids are sunk). On the other hand, the winner pays her bid in the all-pay auction, the second-highest bid in the war of attrition, and nothing in the sad loser auction.  Thus, the relationship between the all-pay auction and the war of attrition is analogous to that between the first-price and second-price auctions. Also, the sad loser auction is obtained from the all-pay auction via the full reimbursement of the winner's bid. In contests, the winner's cost of efforts (bids) are often reimbursed to encourage contestants' higher efforts;\footnote{\citet{Kap02, Coh08, Matros09, Matros12, Min20}.} see the related literature in Section \ref{sec:literature}.

Within the first class of limited premium transfers, our results are as follows (Proposition \ref{prop:limited_prem} and Corollary \ref{cor:FPA_FIA}; Figures \ref{fig:solution_w_l_limited_prem} and \ref{fig:tw_tl_graph}). When the seller cannot provide premiums, the first-price auction is optimal among efficient mechanisms. This class includes many standard formats, for instance, the second-price and all-pay auctions. In contrast, when the seller can provide sufficiently large premiums, a full insurance auction is optimal, reproducing \cites{Bo06} result. When the seller can provide some premiums but cannot fully insure all types, the optimal mechanism is a hybrid of the two mechanisms in which low types are fully insured, whereas high types are only partially insured.

Within the second class of winner-favored mechanisms, the all-pay auction is optimal among efficient mechanisms (Proposition \ref{prop:winner_favored} and Corollary \ref{cor:APA}; Figure \ref{fig:solution_w_l_APA}). This result implies that within the class of  auctions where the loser pays her own bid, the all pay auction revenue dominates any auctions (i) in which the winner pays any amount between the second-highest bid and her own bid, and (ii) in which the winner is partially or fully reimbursed for her bid. 

The results explained thus far focus on finding the optimal transfer rule, taking the allocation rule as given. In Section \ref{sec:opt_alloc_rule}, under a simplifying assumption on the divergence (Example \ref{ex:div} (b)), we endogenously determine the optimal allocation rule. With endogenous allocation, the first-price and all-pay auctions with suitable reserve prices remain optimal in the classes of no premium and winner-favored mechanisms (Corollary \ref{cor:optimal_mechanism_endogenous_x}). This suggests that the assumption of the exogenous allocation rule may be unessential for our results, and that the two formats have superior performances in a wide range of environments.

This paper is organized as follows. Section \ref{sec:model} introduces the setup. Section \ref{sec:main} presents and proves the optimality of the win-lose dependent transfers. In Section \ref{sec:constraints_applications}, we solve the optimal transfer problem within the classes of limited premium transfers and winner-favored transfers. Section \ref{sec:opt_alloc_rule} studies the endogenous optimal allocation rule. Section \ref{sec:diss} discusses the related literature and concludes.

\section{Model} \label{sec:model}

\subsection{Agents and preferences}

We follow the standard setup in the literature on mechanism design with ambiguity averse agents \citep{Bo06, Bodoh12, Wol16, Kocher19}. A seller wants to sell an indivisible object to two bidders. The assumption of two bidders is unessential; our results readily extend to the model with an arbitrary number of bidders. Each bidder has a privately known type $\theta \in \Theta = [\underline \theta, \bar \theta] \subset \mathbb{R}$ representing her valuation for the object, where $0 < \underline \theta < \bar \theta$ and $\Theta$ is equipped with the Borel $\sigma$-algebra $\mathcal{B}$. The bidders' types are drawn independently.

Each bidder, unsure of the exact type distribution, holds a \textit{set of priors} $\Delta \subset \Delta(\Theta, \mathcal{B})$ rather than a single prior about the competitor's type, where $\Delta(\Theta, \mathcal{A})$ denotes the set of probability measures on a $\sigma$-algebra $\mathcal{A}$. Following the MMEU model \citep{Gil89}, each bidder evaluates a payoff schedule\footnote{Throughout the paper, a payoff schedule is a bounded measurable function $\pi: \Theta \rightarrow \mathbb{R}$; we interpret $\pi(\theta)$ as the payoff when state $\theta$ is realized.} (or often called an \textit{act} in the literature) by its \textit{worst-case utility}, defined as the minimum expected payoff over the set of priors.  A belief that minimizes the expected payoff is called a \textit{worst-case belief}.

The seller is ambiguity neutral, and hence has a single prior $P \in \Delta(\Theta, \mathcal{B})$ over each bidder's type, assumed to be atomless. We call $P$ the \textit{reference belief}; this can be interpreted as either the true probability distribution of types \citep{Lo98} or a focal point around which bidders' beliefs are perturbed \citep{Bo06}. We discuss the ambiguity averse seller in the concluding section.

Each bidder's set of priors $\Delta$ consists of probability measures close to the reference belief $P$, where the ``closeness'' between two beliefs is measured by a \textit{divergence}. To explain the concept of divergence, we clarify some terminologies. By a $\sigma$-algebra, we always mean a sub-$\sigma$-algebra of the Borel $\sigma$-algebra $\mathcal{B}$. Also, whenever we say ``almost everywhere'' (abbreviated as a.e.), we mean ``almost everywhere with respect to $P$ (or the product measure $P^2$, depending on the context)''. Finally, given two $\sigma$-algebras $\mathcal{E} \subset \mathcal{A}$ and a probability measure $P \in \Delta(\Theta, \mathcal{A})$, the restriction of $P$ to $\mathcal{E}$ is denoted as $P_\mathcal{E} \in \Delta(\Theta, \mathcal{E})$.

A divergence $D(\cdot||\cdot)$ assigns a non-negative number (possibly infinite) to each pair of probability measures defined on the same $\sigma$-algebra.\footnote{Formally, $D: \bigcup_{\mathcal{A}} \Delta(\Theta, \mathcal{A}) \times \Delta(\Theta, \mathcal{A}) \rightarrow [0, \infty]$, with the union taken over all $\sigma$-algebras.} Given two probability measures $P$ and $Q$ on the same $\sigma$-algebra, we measure how far, or distinguishable, $Q$ is from $P$ by the divergence of $Q$ from $P$, $D(Q||P)$. Following the literature on statistics and information theory \citep{Kull51, Kull59, Ali66},\footnote{Properties \textbf{D4} and \textbf{D5} are related to \cites{Kull51} Theorem 4.1, \cites{Kull59} Corollary 3.2, and \cites{Ali66} second property.} we require that the divergence satisfies the following basic properties:

\begingroup
\renewcommand\theassumption{D}
\begin{assumption} [Divergence] \label{assum:div}

For every $\sigma$-algebra $\mathcal{A}$ and $P, Q \in \Delta(\Theta, \mathcal{A})$, we have:

\noindent \textbf{D1.} $D(Q||P) = 0$ if $Q = P$.

\noindent \textbf{D2.} If $Q \ll P$ and $\frac{dQ}{dP}$ is bounded, $D(\epsilon Q + (1-\epsilon)P || P)$ is continuous in $\epsilon \in [0, 1]$.

\noindent \textbf{D3.} If $D(Q||P) < \infty$, then $Q \ll P$.

\noindent \textbf{D4.} For a sub-$\sigma$-algebra $\mathcal{E}$ of $\mathcal{A}$, $D(Q_\mathcal{E}||P_\mathcal{E}) \leq D(Q||P)$.

\noindent \textbf{D5.} For a sub-$\sigma$-algebra $\mathcal{E}$ of $\mathcal{A}$, $D(Q_\mathcal{E}||P_\mathcal{E}) = D(Q||P)$ if $\frac{dQ_\mathcal{E}}{dP_\mathcal{E}} = \frac{dQ}{dP}$ a.e.

\end{assumption}
\endgroup

Properties \textbf{D1} (indistinguishability) and \textbf{D2} (continuity) are minimal requirements. \textbf{D3} says that if one can observe an event generated by $Q$ that never occurs under $P$ (i.e., $Q \not \ll P$), then $Q$ is perfectly distinguishable from $P$ (i.e., $D(Q||P) = \infty$). \textbf{D4} means that under coarser information (a sub-$\sigma$-algebra), one is less able to distinguish between two probability measures. \textbf{D5} means that if the likelihood ratio $\frac{dQ}{dP}$ is preserved under the coarser information, then the distinguishability between the two probability measures is also preserved. This is a natural requirement because the likelihood ratio $\frac{dQ}{dP}$ is a sufficient statistic for distinguishing between $P$ and $Q$ \citep[][Chapter 2]{Cox74}.

The bidder's set of priors is given as the divergence neighborhood of the reference belief $P$:
\[
\Delta = \{ Q \in \Delta(\Theta, \mathcal{B}): D(Q||P) \leq \eta\},
\]
where $\eta > 0$ is the degree of ambiguity. We provide two examples of divergence; in Appendix \ref{appen:model}, we verify that they indeed satisfy Assumption \ref{assum:div}.

\begin{example} \label{ex:div} \normalfont \,

\noindent \textbf{(a) The $\phi$-divergence and the relative entropy \citep{Han01}.}\\
For a convex continuous function $\phi: \mathbb{R}_+ \rightarrow \mathbb{R}$ with $\phi(1) = 0$, define the $\phi$-divergence as follows \citep{Ali66}:
\[
D(Q||P) := 	\int_\Theta \phi\left(\frac{dQ}{dP}\right) dP \,\,\, \text{for  $Q \ll P$}
\]
and $D(Q||P)=\infty$ otherwise. For example, when $\phi(x) \equiv (x-1)^2$, the $\phi$-divergence becomes the variance of the likelihood ratio $\frac{dQ}{dP}$. More generally, the $\phi$-divergence measures the dispersion of the likelihood ratio $\frac{dQ}{dP}$ by evaluating the expectation of a convex function of the likelihood ratio.

If $\phi(\alpha) \equiv \alpha \log \alpha$,\footnote{When $\alpha = 0$, we adopt the convention that $0 \log 0 = \lim_{\alpha \rightarrow 0+} \alpha \log \alpha = 0$. \label{fn:0log0}} the $\phi$-divergence becomes the popular \textit{relative entropy}, also known as the \textit{Kullback-Leibler divergence} \citep{Kull51}:
\[
D(Q||P) := \int_\Theta \frac{dQ}{dP} \log \frac{dQ}{dP} dP = \int_\Theta \log \frac{dQ}{dP} dQ \quad \text{for $Q \ll P$.}
\]
This divergence is used, among many others, in \cite{Han01}.

\medskip

\noindent \textbf{(b) The contamination model \citep{Bo06}.}  \\
Define the \textit{contamination divergence} as follows:
\[
D(Q||P) := \textnormal{ess\,sup}_P \left( 1 - \frac{dQ}{dP} \right) \quad \text{if $Q \ll P$},
\]
and $D(Q||P)=\infty$ otherwise. Intuitively, this divergence measures the maximum downward deviation of the likelihood ratio $\frac{dQ}{dP}$ from $1$. In Appendix \ref{appen:model}, we show that this divergence generates the set of priors known as the \textit{contamination model} (hence the name contamination divergence):
\[
\Delta = \{ Q \in \Delta(\Theta, \mathcal{B}): Q = (1-\eta)P + \eta R \text{ for some } R \in \Delta(\Theta, \mathcal{B}) \text{ with } R \ll P \}.
\]
This model has been used extensively in the literature on mechanism design with ambiguity \citep[e.g., ][]{Bo06, Bo09, Aus18}. $\square$
\end{example}

\subsection{Mechanism design}

The seller's problem is to design a mechanism which maximizes the expected revenue under the reference belief $P$. By the revelation principle,\footnote{Following most of the literature on mechanism design with ambiguity \citep{Bo06, Bodoh12, Wol16, Kocher19}, we focus on static mechanisms; in dynamic mechanisms, due to dynamic inconsistency of the MMEU model, the equilibrium outcome may not be replicable by a static direct mechanism \citep{Bo09}. In this setup, assuming that bidders cannot hedge against ambiguity by randomization, the proof of the revelation principle in our setup is standard; see \cite{Wol16} for details.} we can restrict our attention to direct mechanisms, defined as follows:


\begin{definition} \label{def:direct_mech}
A direct mechanism $(x, t)$ consists of two bounded measurable functions, namely an allocation rule $x = (x_1, x_2): \Theta^2 \rightarrow \mathbb{R}^2$ and a transfer rule $t = (t_1, t_2): \Theta^2 \rightarrow \mathbb{R}^2$. The allocation rule $x$ satisfies the following probability conditions: (i) $x_1(\theta, \theta') \geq 0$ and $x_2(\theta', \theta) \geq 0$, and (ii) $x_1(\theta, \theta') + x_2(\theta', \theta) \leq 1$ for $\theta, \theta' \in \Theta$.
\end{definition}

\noindent When bidder $i$ reports $\theta$ and her competitor reports $\theta'$, bidder $i$ wins the object with probability $x_i(\theta, \theta')$ and pays $t_i(\theta, \theta')$ to the seller.

Given a mechanism $(x, t)$, we introduce the following notation:
\begin{align}
X_i(\theta) := &\int_\Theta x_i(\theta, \theta') dP(\theta') \label{eq:notation_X} \\
X_i^{\min}(\theta) := &\inf_Q \{ \int_\Theta x_i(\theta, \theta') dQ(\theta'): D(Q||P) \leq \eta \} \label{eq:notation_Xmin} \\
U_i^{\min}(\theta) := &\inf_Q \{ \int_\Theta [ \theta x_i(\theta, \theta') - t_i(\theta, \theta') ] dQ(\theta'): D(Q||P) \leq \eta\} \label{eq:notation_Umin} \\
T_i(\theta) := &\int_\Theta t_i(\theta, \theta') dP(\theta') \label{eq:notation_T}.
\end{align}
We call $X_i(\theta)$ and $X_i^{\min}(\theta)$ the reference and the minimum winning probability of bidder $i$ with type $\theta$, respectively. Also, we call $U_i^{\min}(\theta)$ the bidder's interim worst-case utility and $T_i(\theta)$ the interim expected revenue. Likewise, the interim worst-case utility and revenue under $(x, \hat t)$ are written as $\widehat U_i^{\min}(\theta)$ and $\widehat T_i(\theta)$.

Since the bidders' preferences follow the MMEU model, the feasibility constraint (incentive compatibility and individual rationality) is as follows:
\begin{definition} \label{def:IC_IR}
(i) We say that $(x, t)$ is incentive compatible if, for every $i$ and $\theta$,
\[
   \theta \in \argmax_{\hat \theta \in \Theta} \,\, \inf_Q \{ \int_\Theta [ \theta x_i(\hat \theta, \theta') - t_i(\hat \theta, \theta') ] d Q(\theta') : D(Q||P) \leq \eta \}.
\]
(ii) We say that $(x, t)$ is individually rational if, for every $i$ and $\theta$, $U_i^{\min}(\theta) \geq 0$.\\
(iii) We say that $(x, t)$ is feasible if it is incentive compatible and individually rational.
\end{definition}

Our study mainly focuses on the problem of finding the optimal transfer rule, taking an allocation rule $x$ as exogenously given (Sections \ref{sec:main}-\ref{sec:constraints_applications}); the endogenous determination of the optimal allocation rule is studied in Section \ref{sec:opt_alloc_rule}. We impose the following regularity conditions on the allocation rule:

\begingroup
\renewcommand\theassumption{X}
\begin{assumption} \label{assum:X}

(i) For $\theta' \neq \theta$, we have $x_i(\theta, \theta') \in \{0, 1\}$.

\noindent (ii) If $X_i(\theta)=1$, then $\theta = \bar \theta$.

\noindent (iii) $X_i^{\min}(\theta)$ is non-decreasing in $\theta$.

\end{assumption}
\endgroup

\noindent Assumption \ref{assum:X} (i) means that the winner is chosen deterministically based on the reported type profile (except in the case of a tie). Assumption \ref{assum:X} (ii) means that no type of bidder wins with certainty except for the highest possible type. Assumption \ref{assum:X} (iii), analogously to the usual monotonicity condition \citep{Myer81}, ensures that $x$ is implementable.\footnote{However, unlike in \cite{Myer81}, Assumption \ref{assum:X} (iii) is only sufficient, not necessary, for implementability.}

As mentioned in the introduction, our aim is to identify environments where a commonly used auction format---such as the first-price, second-price, or all-pay auction---becomes optimal. To do this, we consider two classes of transfer rules: (i) the limited premium transfers under which the premium provided to losers is limited by a certain amount (Section \ref{subsec:limited_prem}), and (ii) the winner-favored transfers under which each bidder pays smaller amounts when she wins than loses (Section \ref{subsec:winner_favored}). The following assumption on the class of transfer rules $\mathcal{T}$ incorporates these two classes as special cases:

\begingroup
\renewcommand\theassumption{T}
\begin{assumption} \label{assum:T}

Let $0 \leq \alpha \leq \beta \leq 1$ and $K \geq 0$ be constants. The class of transfer rules $\mathcal{T}$ is given as follows: $t \in \mathcal{T}$ if and only if for every $i$, $\theta < \bar \theta$, $\theta^w$ and $\theta^l$,
\[
x_i(\theta, \theta^w)=1 \,\text{and}\, x_i(\theta, \theta^l)=0 \implies \alpha t_i(\theta, \theta^w) - \beta t_i(\theta, \theta^l) \leq K.
\]
\end{assumption}
\endgroup

\noindent The limited premium transfers correspond to $\alpha = 0$ and $\beta =1$, and the winner-favored transfers correspond to $\alpha=\beta=1$ and $K=0$.

The seller's problem, studied in subsequent sections, is given as follows.

\medskip

\noindent \textbf{Optimal Transfer Problem \eqref{eq:opt_trans_problem}.} Given an allocation rule $x$, solve
\begin{align} \label{eq:opt_trans_problem}
\sup_t \{ \sum_i \int_\Theta T_i(\theta) dP(\theta): \,\, &(i) \, \text{$(x, t)$ is feasible} \notag \\
&(ii) \, t \in \mathcal{T} \}. \tag{$\mathbf{P}$}
\end{align}

\noindent Our setup extends the standard setup where the seller faces only constraint (i) in problem \eqref{eq:opt_trans_problem}, since constraint (ii) is trivially satisfied when $\alpha=\beta=K=0$.

\section{Main Result} \label{sec:main}

\subsection{Statement} \label{subsec:statement}

This section presents our main result. We first define the class of \textit{win-lose dependent mechanisms}:

\begin{definition} \label{def:win_lose_dependent}

We say that $(x, t)$ is a win-lose dependent mechanism if there exist $t_i^w, t_i^l: \Theta \rightarrow \mathbb{R}$ such that, for every $i$, $\theta$ and $\theta'$,
\begin{align}
t_i(\theta, \theta') &= t_i^w(\theta) x_i(\theta, \theta') + t_i^l(\theta) [1-x_i(\theta, \theta')] \label{eq:win_lose_dependent} \\
\theta-t_i^w(\theta) &\geq -t_i^l(\theta). \label{eq:winning_is_better_than_losing}
\end{align}
If $(x, t)$ is a win-lose dependent mechanism, we say $t$ is a win-lose dependent transfer.
\end{definition}

Equation \eqref{eq:win_lose_dependent} means that each type of bidder's transfers (and hence her payoffs) conditional on winning and losing are independent of the competitor's type report. Specifically, bidder $i$ with type $\theta$ pays $t_i^w(\theta)$ to the seller and receives a payoff of $\theta-t_i^w(\theta)$ if she wins, whereas she pays $t_i^l(\theta)$ and receives $-t_i^l(\theta)$ if she loses. Inequality \eqref{eq:winning_is_better_than_losing} means that the winning payoff is higher than the losing payoff, a common property of most auctions. If inequality \eqref{eq:winning_is_better_than_losing} holds with equality, then $(x, t)$ becomes a \textit{full insurance mechanism} \citep{Bo06} where each bidder's payoff is constant with respect to the competitor's type report. Hence, the win-lose dependent mechanism generalizes the full insurance mechanism by allowing the winning and losing payoffs to differ.

Our main result, Theorem \ref{thm:main}, identifies a class of optimal transfer candidates satisfying two properties: (i) $t$ is win-lose dependent, and (ii) a type $\theta$ bidder's interim worst-case utility equals the sum of the minimum winning probabilities of all types below $\theta$.

\begin{theorem} \label{thm:main}
Suppose $x$ is given, and Assumptions \ref{assum:div}, \ref{assum:X} and \ref{assum:T} hold.
Consider the following problem:

\medskip

\noindent \textnormal{\textbf{Reduced Problem \eqref{eq:reduced_problem}.}} Given an allocation rule $x$, solve
\begin{align} \label{eq:reduced_problem}
\sup_t \Big\{ \sum_i \int_\Theta T_i(\theta) dP(\theta): \,\, &(i) \, \text{$t$ is win-lose dependent} \notag\\
&(ii) \, U_i^{\min}(\theta) = \int_{\underline \theta}^\theta X_i^{\min}(z) dz \quad \text{for every $i$ and $\theta$} \notag \\
&(iii) \, t \in \mathcal{T} \Big\} \tag{$\mathbf{R}$}
\end{align}
where $X_i^{\min}(\theta)$, $U_i^{\min}(\theta)$ and $T_i(\theta)$ are defined in equations \eqref{eq:notation_Xmin}-\eqref{eq:notation_T}.
Then,
\begin{equation} \label{eq:R_in_P}
\text{solution set of problem \eqref{eq:reduced_problem}} \, \subset \, \text{solution set of problem \eqref{eq:opt_trans_problem}}.
\end{equation}
\end{theorem}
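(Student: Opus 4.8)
The plan is to prove \eqref{eq:R_in_P} by a sandwich argument on feasible sets and optimal values. Let $F_{\mathbf P}$ and $F_{\mathbf R}$ denote the feasible sets, and $v_{\mathbf P}$ and $v_{\mathbf R}$ the optimal values, of \eqref{eq:opt_trans_problem} and \eqref{eq:reduced_problem} respectively. I would establish two facts: (a) $F_{\mathbf R}\subseteq F_{\mathbf P}$, whence $v_{\mathbf R}\le v_{\mathbf P}$; and (b) $v_{\mathbf P}\le v_{\mathbf R}$. These suffice: if $t^{*}$ solves \eqref{eq:reduced_problem}, then $t^{*}\in F_{\mathbf R}\subseteq F_{\mathbf P}$ by (a) and its objective value equals $v_{\mathbf R}=v_{\mathbf P}$ by (b), so $t^{*}$ is feasible for \eqref{eq:opt_trans_problem} and attains its optimum; hence $t^{*}$ solves \eqref{eq:opt_trans_problem}, which is exactly \eqref{eq:R_in_P}.

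For (a), constraint (iii) ($t\in\mathcal T$) is shared by both problems, so the only issue is that constraints (i)--(ii) of \eqref{eq:reduced_problem} imply feasibility. Here I would invoke the feasibility characterization of Proposition \ref{prop:feasibility}: for a win-lose dependent transfer, incentive compatibility together with individual rationality is equivalent to monotonicity of $X_i^{\min}$ plus the envelope identity for $U_i^{\min}$ with a non-negative worst-case utility for the lowest type. Monotonicity of $X_i^{\min}$ is Assumption \ref{assum:X}(iii); the envelope identity is constraint (ii); and setting $\theta=\underline\theta$ in (ii) gives $U_i^{\min}(\underline\theta)=0$, while $X_i^{\min}\ge 0$ makes $U_i^{\min}(\theta)=\int_{\underline\theta}^\theta X_i^{\min}(z)\,dz\ge 0$ throughout. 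Thus every $t\in F_{\mathbf R}$ is feasible, giving $F_{\mathbf R}\subseteq F_{\mathbf P}$.

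The heart of the argument is (b). Given any $t\in F_{\mathbf P}$, I would build $t'\in F_{\mathbf R}$ with revenue at least that of $t$, in two stages. \emph{Coarsening.} For each $i$ and $\theta$, replace $t_i(\theta,\cdot)$ by its $P$-conditional mean on the winning set $\{\theta':x_i(\theta,\theta')=1\}$ and on the losing set $\{\theta':x_i(\theta,\theta')=0\}$, calling these $t_i^w(\theta)$ and $t_i^l(\theta)$. The tower property leaves the interim revenue $T_i(\theta)$ unchanged; since the inequality defining $\mathcal T$ in Assumption \ref{assum:T} is linear in the transfer values and holds for every realized pair $(\theta^w,\theta^l)$, it survives the averaging, so the coarsened transfer stays in $\mathcal T$. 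Most importantly, the $P$-conditional expectation of the payoff schedule $\theta x_i(\theta,\cdot)-t_i(\theta,\cdot)$ relative to the coarse win-lose partition is precisely the win-lose dependent payoff built from $(t_i^w,t_i^l)$, so Proposition \ref{prop:SOSD} delivers a weak increase of the worst-case utility. \emph{Hitting the envelope.} Note that $X_i^{\min}$ depends only on $x$, not on the transfer, so the target $\int_{\underline\theta}^\theta X_i^{\min}(z)\,dz$ is fixed, and feasibility of $t$ already forces $U_i^{\min}(\theta)\ge\int_{\underline\theta}^\theta X_i^{\min}(z)\,dz$ through the envelope inequality in Proposition \ref{prop:feasibility}. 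Since coarsening only raises the worst-case utility, the coarsened value also lies weakly above this target, so I lower the losing (and, if needed, winning) payoffs---equivalently raise $t_i^w,t_i^l$---until $U_i^{\min}$ meets the envelope exactly, using the representation $U_i^{\min}(\theta)=-t_i^l(\theta)+\bigl(\theta-t_i^w(\theta)+t_i^l(\theta)\bigr)X_i^{\min}(\theta)$ available once \eqref{eq:winning_is_better_than_losing} holds. Raising transfers weakly raises revenue, so the resulting $t'$ satisfies (i)--(iii) of \eqref{eq:reduced_problem} with revenue at least that of $t$, yielding $v_{\mathbf P}\le v_{\mathbf R}$.

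The main obstacle I anticipate is the bookkeeping of the second stage: ensuring $t'$ is simultaneously win-lose dependent, in $\mathcal T$, and exactly on the envelope. Two points need care. First, inequality \eqref{eq:winning_is_better_than_losing} (winning payoff at least losing payoff) must be verified for the coarsened transfer, since it is what licenses the worst-case belief to minimize the winning probability and hence validates the clean formula for $U_i^{\min}$; if it fails for some type I must argue it can be restored without loss of revenue. Second, I must confirm that the downward payoff adjustment needed to reach the envelope neither breaks the $\mathcal T$-constraint nor disturbs the monotonicity of $X_i^{\min}$---the latter is automatic, being a property of $x$ alone. The conceptual engine is Proposition \ref{prop:SOSD}, which does the real work of comparing worst-case utilities before and after coarsening; the remaining effort is the careful placement of the improved transfer inside $F_{\mathbf R}$.
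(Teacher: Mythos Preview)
Your proposal is correct and follows essentially the same route as the paper: the sandwich on feasible sets via Proposition~\ref{prop:feasibility}(ii) for the inclusion $F_{\mathbf R}\subseteq F_{\mathbf P}$, and the two-stage improvement (conditional-expectation coarsening via Proposition~\ref{prop:SOSD}, then a uniform upward shift of transfers to hit the envelope) for $v_{\mathbf P}\le v_{\mathbf R}$, which is exactly the content of Corollary~\ref{cor:SOSD_implication}. The two bookkeeping obstacles you flag are handled in the paper precisely as you suspect---the failure of \eqref{eq:winning_is_better_than_losing} after coarsening is repaired by a full-insurance replacement (Lemma~\ref{lem:w>=l}), and the $\mathcal T$-constraint survives the envelope adjustment because the paper shifts $t_i^w$ and $t_i^l$ by the \emph{same} nonnegative constant, whence $\alpha\le\beta$ in Assumption~\ref{assum:T} makes $\alpha\hat t_i^w-\beta\hat t_i^l\le\alpha\bar t_i^w-\beta\bar t_i^l\le K$; you should also treat the degenerate cases $X_i(\theta)\in\{0,1\}$ separately since the conditional averages are then undefined on one cell.
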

\begin{proof}
See Section \ref{subsec:proof}.
\end{proof}

\noindent It can be shown that the inclusion relation \eqref{eq:R_in_P} becomes an equality under suitable regularity conditions on the divergence.\footnote{This result is omitted because of space limitation, but its formal statement and proof are available upon request.} In this case, whenever the reduced problem \eqref{eq:reduced_problem} has a unique solution (which is the case in Sections \ref{subsec:limited_prem}-\ref{subsec:winner_favored}), so does the optimal transfer problem \eqref{eq:opt_trans_problem}.

Theorem \ref{thm:main} essentially reduces an infinite-dimensional problem \eqref{eq:opt_trans_problem} to a two-dimensional problem, which can be solved graphically (Figure \ref{fig:w_l_plane}). Concretely, the ex-ante formulation of the reduced problem \eqref{eq:reduced_problem} is equivalent to the interim formulation of choosing an interim transfer $t_i(\theta, \cdot)$ for each fixed $i$ and $\theta$. We now express the interim formulation in terms of a two-dimensional vector $(w, l)$, where $w$ and $l$ denote the winning and losing payoffs of bidder $i$ with type $\theta$. First, the interim expected revenue can be written as
\begin{align} \label{eq:T_w_l}
T_i(\theta) &= \underbrace{\theta X_i(\theta)}_\text{surplus} - \underbrace{\int_\Theta [ \theta x_i(\theta, \theta') - t_i(\theta, \theta') ] dP(\theta')}_{\substack{\text{bidder's share of the surplus}\\\text{($:=$ bidder's expected payoff under $P$)}}} \notag \\
&= \theta X_i(\theta) - [ w X_i(\theta) + l(1-X_i(\theta)) ].
\end{align}
Equation \eqref{eq:T_w_l} also shows that maximizing the revenue $T_i(\theta)$ is equivalent to minimizing the bidder's share of the surplus $[ w X_i(\theta) + l(1-X_i(\theta)) ]$. Next, since $t$ is win-lose dependent, we have $w \geq l$ (see inequality \eqref{eq:winning_is_better_than_losing}). This implies that the bidder's expected payoff is minimized under a belief which minimizes the winning probability; hence, her interim worst-case utility becomes
\begin{align} \label{eq:Umin_w_l}
U_i^{\min}(\theta) &= \inf_Q \{ w Q\{\theta': x_i(\theta, \theta')=1\} + l Q\{\theta': x_i(\theta, \theta')=0\}: D(Q||P) \leq \eta\} \notag\\
&= w X_i^{\min}(\theta) + l (1-X_i^{\min}(\theta)).
\end{align}
Finally, by Assumption \ref{assum:T}, the condition $t \in \mathcal{T}$ becomes $\alpha (\theta-w)+\beta l \leq K$ because the bidder's winning and losing transfers are $\theta-w$ and $-l$.

\begin{figure} [t]
\centering
\includegraphics[scale=0.9]{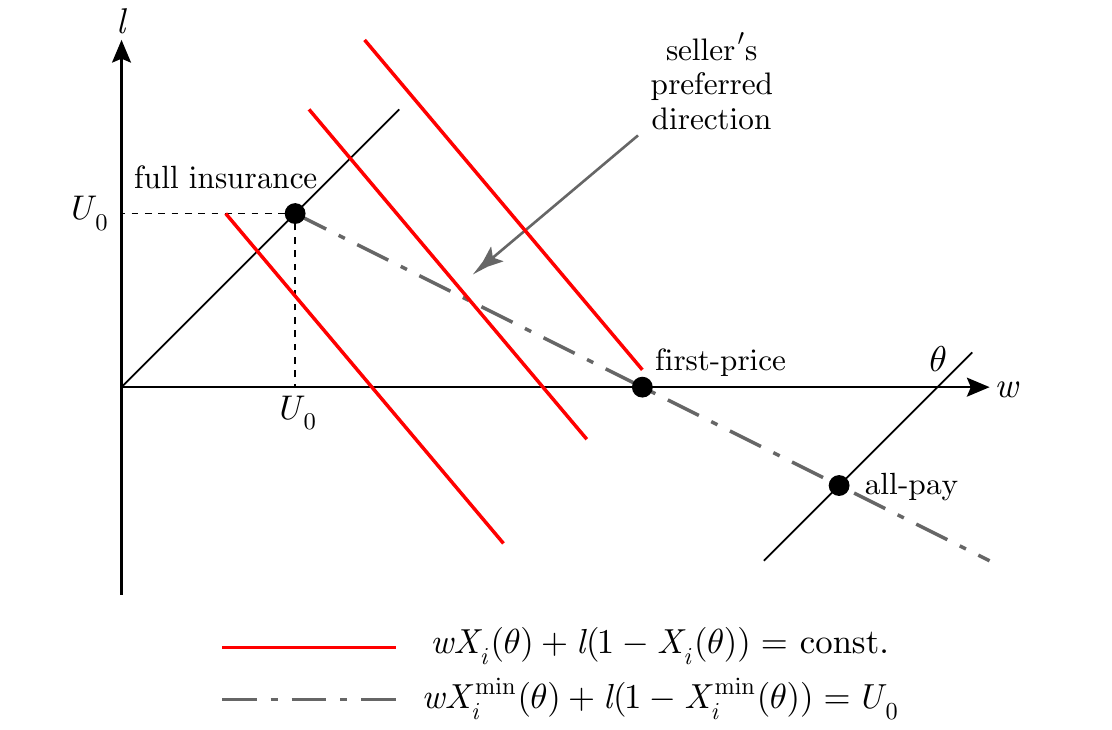}
\caption{\textbf{The interim formulation of the reduced problem.} On the dash-dotted line, the points with $w=l$, $l=0$, and $l=w-\theta$ correspond to the full insurance, first-price, and all-pay auctions, respectively (see Section \ref{sec:constraints_applications} for details).} \label{fig:w_l_plane}
\end{figure}

Thus, the reduced problem \eqref{eq:reduced_problem} simplifies as follows:

\noindent \textbf{Reduced Problem-Interim \eqref{eq:reduced_problem_interim}.} Let $x$ be given. For given $i$ and $\theta$, solve
\begin{align} \label{eq:reduced_problem_interim}
\textrm{inf}_{(w, l)} \{ w X_i(\theta) + l (1-X_i(\theta)) : \,\, &(i) \, w \geq l \notag \\
&(ii) \, w X_i^{\min}(\theta) + l (1-X_i^{\min}(\theta)) = U_0 \notag \\
&(iii) \, \alpha (\theta-w)+\beta l \leq K \},  \tag{$\mathbf{R}$-Int}
\end{align}
where $U_0 := \int_{\underline \theta}^\theta X_i^{\min}(z) dz$ is a constant (since $x$, $i$, and $\theta$ are fixed).  Figure \ref{fig:w_l_plane} illustrates problem \eqref{eq:reduced_problem_interim} on the $(w, l)$-plane.  In Section \ref{sec:constraints_applications}, by solving problem \eqref{eq:reduced_problem_interim}, we derive the optimal transfer rule in two classes $\mathcal{T}$.

\subsection{Proof} \label{subsec:proof}

In this section, we prove Theorem \ref{thm:main}. \cite{Bo06} derive the optimality of full insurance mechanisms by showing that, for a given feasible transfer, the full insurance transfer providing the bidders with the same interim worst-case utility as the given satisfies the feasibility constraint and generates a higher revenue than the original. This argument does not extend to our setup: a win-lose dependent transfer providing the bidders with the same interim worst-case utility as the given is not necessarily feasible. We overcome this difficulty by using an alternative two-step construction. First, we take the conditional expectation of the given transfer with respect to the bidders' winning and losing events; the resulting transfer yields the same revenue as the given for the seller and provides higher interim worst-case utilities than the given for the bidders (Proposition \ref{prop:SOSD}), but needs not be feasible. Second, we adjust the transfer obtained in the first step so that it satisfies the feasibility constraint (Proposition \ref{prop:feasibility}) and generates a higher revenue than the original.

\medskip

We first explain Proposition \ref{prop:feasibility}, which provides a necessary condition and a sufficient condition for a mechanism to be feasible. When bidders are ambiguity averse, the so-called ``payoff equivalence theorem'' \citep{Myer81} no longer holds; i.e., the bidder's interim worst-case utilities under two feasible mechanisms with the same allocation rule need \textit{not} be equal (up to a constant). Thus, the bidder's worst-case utility under a feasible mechanism cannot be expressed in terms of the allocation rule alone, but depends also on the transfer rule.\footnote{\cite{Bodoh12, Bodoh14} shows that under suitable regularity conditions, a type $\theta$ bidder's worst-case utility under a feasible mechanism equals the sum of the winning probabilities under worst-case beliefs (i.e., beliefs minimizing the bidders' expected payoffs) of all types below $\theta$. Since worst-case beliefs depend on the transfer rule, so does the worst-case utility.} Despite these difficulties, the following characterizations hold:

\begin{proposition} [Characterizations of feasible mechanisms] \label{prop:feasibility} \,

\noindent Let $x$ be given. If Assumptions \ref{assum:div} and \ref{assum:X} hold, the following statements hold:

\noindent (i) \citep{Bo06} Suppose that $(x, t)$ is feasible. Then, for every $i$ and $\theta$,
\[
U_i^{\min}(\theta) \geq \int_{\underline \theta}^\theta X_i^{\min}(z) dz.
\]

\noindent (ii) Suppose that $\bar t$ is a win-lose dependent transfer such that for every $i$ and $\theta$,
\begin{equation} \label{eq:feasibility_sufficient}
\bar U_i^{\min}(\theta) = \int_{\underline \theta}^\theta X_i^{\min}(z) dz,
\end{equation}
where $\bar U_i^{\min}(\theta)$ denotes the bidder's interim utility under $(x, \bar t)$. Then, $(x, \bar t)$ is feasible.

\end{proposition}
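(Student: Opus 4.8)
The plan is to reduce both parts to a single ``one-shot deviation'' inequality linking the interim worst-case utility at two types through the minimum winning probability $X_i^{\min}$, and then to integrate this inequality using the monotonicity of $X_i^{\min}$ (Assumption \ref{assum:X}(iii)). Throughout, fix a bidder $i$ and write $V(\theta,\hat\theta):=\inf_Q\{\int_\Theta[\theta x_i(\hat\theta,\theta')-t_i(\hat\theta,\theta')]\,dQ(\theta'):D(Q||P)\le\eta\}$ for the worst-case utility of a type $\theta$ who reports $\hat\theta$, so that incentive compatibility (Definition \ref{def:IC_IR}) reads $U_i^{\min}(\theta)=V(\theta,\theta)\ge V(\theta,\hat\theta)$ for all $\hat\theta$. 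Since $P$ is atomless and $Q\ll P$ whenever $D(Q||P)<\infty$, ties occur with probability zero, so $\int_\Theta x_i(\hat\theta,\theta')\,dQ(\theta')=Q\{\theta':x_i(\hat\theta,\theta')=1\}$ under Assumption \ref{assum:X}(i).

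For part (i), fix $\hat\theta<\theta$. For every admissible $Q$ I would split the integrand as $\int[\theta x_i(\hat\theta,\cdot)-t_i(\hat\theta,\cdot)]dQ=\int[\hat\theta x_i(\hat\theta,\cdot)-t_i(\hat\theta,\cdot)]dQ+(\theta-\hat\theta)\int x_i(\hat\theta,\cdot)dQ$; the first summand is bounded below by $U_i^{\min}(\hat\theta)$ and, because $\theta-\hat\theta>0$, the second by $(\theta-\hat\theta)X_i^{\min}(\hat\theta)$. Taking the infimum over $Q$ and invoking incentive compatibility yields the deviation inequality $U_i^{\min}(\theta)\ge U_i^{\min}(\hat\theta)+(\theta-\hat\theta)X_i^{\min}(\hat\theta)$ for $\hat\theta<\theta$. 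Telescoping this over a partition $\underline\theta=\theta_0<\dots<\theta_n=\theta$ gives $U_i^{\min}(\theta)-U_i^{\min}(\underline\theta)\ge\sum_k(\theta_k-\theta_{k-1})X_i^{\min}(\theta_{k-1})$; as the mesh tends to zero the right-hand side is a lower Riemann sum converging to $\int_{\underline\theta}^\theta X_i^{\min}(z)\,dz$ (finite and well defined since $X_i^{\min}$ is monotone), and individual rationality gives $U_i^{\min}(\underline\theta)\ge0$, proving the claimed lower bound.

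For part (ii), individual rationality is immediate: $x_i\ge0$ (Definition \ref{def:direct_mech}) forces $X_i^{\min}\ge0$, so $\bar U_i^{\min}(\theta)=\int_{\underline\theta}^\theta X_i^{\min}(z)\,dz\ge0$ by \eqref{eq:feasibility_sufficient}. The substance is incentive compatibility, which I would obtain from a matching upper bound on the deviation value. Writing $w(\hat\theta):=\hat\theta-t_i^w(\hat\theta)$ and $l(\hat\theta):=-t_i^l(\hat\theta)$, the win-lose structure \eqref{eq:win_lose_dependent} gives, for a type $\theta$ reporting $\hat\theta$, expected payoff $[w(\hat\theta)X_i^{(Q)}(\hat\theta)+l(\hat\theta)(1-X_i^{(Q)}(\hat\theta))]+(\theta-\hat\theta)X_i^{(Q)}(\hat\theta)$ under $Q$, where $X_i^{(Q)}(\hat\theta):=\int x_i(\hat\theta,\cdot)dQ$. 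Evaluating along a sequence $Q_n$ with $X_i^{(Q_n)}(\hat\theta)\to X_i^{\min}(\hat\theta)$ and using that $w(\hat\theta)\ge l(\hat\theta)$ (inequality \eqref{eq:winning_is_better_than_losing}) makes the bracket converge to $\bar U_i^{\min}(\hat\theta)$ (cf.\ \eqref{eq:Umin_w_l}), so $V(\theta,\hat\theta)\le\bar U_i^{\min}(\hat\theta)+(\theta-\hat\theta)X_i^{\min}(\hat\theta)$. Combined with \eqref{eq:feasibility_sufficient}, this reduces the required inequality $\bar U_i^{\min}(\theta)\ge V(\theta,\hat\theta)$ to $\int_{\hat\theta}^\theta X_i^{\min}(z)\,dz\ge(\theta-\hat\theta)X_i^{\min}(\hat\theta)$, which holds for every $\hat\theta$---in both directions---because $X_i^{\min}$ is non-decreasing (Assumption \ref{assum:X}(iii)): the integrand dominates $X_i^{\min}(\hat\theta)$ when $\theta>\hat\theta$ and is dominated by it when $\theta<\hat\theta$, and the sign of $\theta-\hat\theta$ flips accordingly.

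I expect the main obstacle to be the incentive-compatibility argument in part (ii), for two reasons. First, one must pin down the worst-case belief for a misreport: the bracketed truthful-report payoff is minimized at $X_i^{\min}$ only because $w\ge l$, but the extra term $(\theta-\hat\theta)X_i^{(Q)}(\hat\theta)$ can move the worst-case belief elsewhere, so the clean split is available only as a one-sided (upper) bound, obtained by testing a single near-minimizing sequence $Q_n$ rather than by solving the deviation's infimum exactly. Second, the final integral inequality must be verified for both upward and downward deviations, and it is precisely the monotonicity of $X_i^{\min}$ that makes both directions work; care is needed to confirm that no attainment of the infimum is assumed, so that the argument goes through for any divergence satisfying Assumption \ref{assum:div}.
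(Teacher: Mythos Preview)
Your proposal is correct and follows essentially the same approach as the paper: both parts hinge on the one-shot deviation inequality $U_i^{\min}(\theta)\ge U_i^{\min}(\hat\theta)+(\theta-\hat\theta)X_i^{\min}(\hat\theta)$ (as a consequence of IC in (i), as the target in (ii)) together with the monotonicity of $X_i^{\min}$. The only technical differences are that the paper, in (i), sandwiches the difference quotient between $X_i^{\min}(\hat\theta)$ and $X_i^{\max}(\theta)$ to obtain Lipschitz/absolute continuity before differentiating and integrating, whereas you telescope directly over lower Riemann sums; and in (ii) the paper invokes an explicitly constructed worst-case belief $Q^*_{i,\hat\theta}$ (its Lemma~\ref{lem:belief_minimizing_winning_prob}) where you instead pass to a minimizing sequence---your variant is slightly more self-contained but otherwise identical in substance.
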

\begin{proof}
See Appendix \ref{appen:main_proof}.
\end{proof}

Proposition \ref{prop:feasibility} (i) states that a necessary condition for feasibility is that a type $\theta$ bidder's interim worst-case utility exceeds a certain lower bound---the sum of the minimum winning probabilities of all types below $\theta$. This result has been proved by \cite{Bo06}; we provide the proof for completeness. Proposition \ref{prop:feasibility} (ii) states that a sufficient condition for a \textit{win-lose dependent mechanism} to be feasible is that each bidder's interim worst-case utility achieves the lower bound in Proposition \ref{prop:feasibility} (i). Intuitively, under a win-lose dependent mechanism, the bidder's interim worst-case utility is evaluated under the belief minimizing her winning probability (see equation \eqref{eq:Umin_w_l}); hence, the interim worst-case utility equals the sum of \textit{minimum} winning probabilities $X_i^{\min}(z)$ (equation \eqref{eq:feasibility_sufficient}).  Thus, our result partially generalizes the standard sufficiency result \citep{Myer81} in the sense that equation \eqref{eq:feasibility_sufficient} reduces to the standard envelope formula when ambiguity is absent ($\eta=0$).

\medskip

Next, Proposition \ref{prop:SOSD} states that if a payoff schedule $\bar \pi$ is less variable than another payoff schedule $\pi$---in the sense that $\bar \pi$ is the conditional expectation of $\pi$ with respect to coarser information---then $\bar \pi$ yields a higher worst-case utility than $\pi$. Since $\bar \pi$ second-order stochastically dominates $\pi$ in this case,\footnote{To see this, suppose $\bar \pi = \mathbb{E}_P[\pi|\mathcal{E}]$ for some $\sigma$-algebra $\mathcal{E}$. Let $\epsilon := \pi - \bar \pi$. Then,
\[
\mathbb{E}_P[\epsilon|\bar \pi] = \mathbb{E}_P[\mathbb{E}_P[\epsilon|\mathcal{E}] | \bar \pi] = \mathbb{E}_P[ \mathbb{E}_P[\pi|\mathcal{E}]-\bar \pi| \bar \pi ] = 0.
\]
This shows that $\bar \pi$ is a mean-preserving contraction of $\pi$, or equivalently second-order stochastically dominates $\pi$.} Proposition \ref{prop:SOSD} implies that ambiguity averse agents prefer payoff schedules that rank higher in a special kind of the second-order stochastic dominance (SOSD).\footnote{Under slightly stronger assumptions than Assumption \ref{assum:div}, Proposition \ref{prop:SOSD} extends as follows: if $\bar \pi$ second-order stochastically dominates $\pi$, then $\bar \pi$ yields a higher worst-case utility than $\pi$. For our purpose, Proposition \ref{prop:SOSD} suffices.}

\begin{proposition} [Monotonicity with respect to SOSD] \label{prop:SOSD}
Suppose that Assumption \ref{assum:div} holds. Let $\pi: \Theta \rightarrow \mathbb{R}$ be a bounded measurable function. Given a $\sigma$-algebra $\mathcal{E}$, let
\[
\bar \pi:= \mathbb{E}_P[\pi|\mathcal{E}].
\]
Then,
\[
\inf_Q \{ \mathbb{E}_Q[\bar \pi]: D(Q||P) \leq \eta\} \geq \inf_Q \{ \mathbb{E}_Q[\pi]: D(Q||P) \leq \eta\}.
\]
\end{proposition}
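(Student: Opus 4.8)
The plan is to prove the inequality by a change-of-belief argument: for every $Q$ with $D(Q||P)\le\eta$ I will construct a companion belief $Q'$, again satisfying $D(Q'||P)\le\eta$, for which $\mathbb{E}_{Q'}[\pi]=\mathbb{E}_Q[\bar\pi]$. Granting this, since $Q'$ is admissible for the right-hand infimum we get $\mathbb{E}_Q[\bar\pi]=\mathbb{E}_{Q'}[\pi]\ge\inf_{Q'}\{\mathbb{E}_{Q'}[\pi]:D(Q'||P)\le\eta\}$; as this holds for every admissible $Q$, taking the infimum over $Q$ on the left yields the claim. So the entire argument reduces to producing the right $Q'$ from a given $Q$.

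First I would fix an admissible $Q$. Because $D(Q||P)\le\eta<\infty$, property \textbf{D3} gives $Q\ll P$, so the density $q:=dQ/dP$ exists. I define $Q'$ through the $\mathcal{E}$-measurable density $q_\mathcal{E}:=\mathbb{E}_P[q\,|\,\mathcal{E}]$, that is, $dQ'/dP:=q_\mathcal{E}$; this is a genuine probability measure since $q_\mathcal{E}\ge0$ and $\int_\Theta q_\mathcal{E}\,dP=\int_\Theta q\,dP=1$. The expectation identity is then a tower-property computation, using that $q_\mathcal{E}$ is $\mathcal{E}$-measurable and that $\bar\pi=\mathbb{E}_P[\pi|\mathcal{E}]$:
\[
\mathbb{E}_{Q'}[\pi]=\int_\Theta \pi\,q_\mathcal{E}\,dP=\int_\Theta \bar\pi\,q_\mathcal{E}\,dP=\int_\Theta \bar\pi\,q\,dP=\mathbb{E}_Q[\bar\pi],
\]
where the first middle equality pulls the $\mathcal{E}$-measurable factor $q_\mathcal{E}$ out of $\mathbb{E}_P[\,\cdot\,|\mathcal{E}]$ to replace $\pi$ by $\bar\pi$, and the second uses $\int_\Theta\bar\pi\,\mathbb{E}_P[q|\mathcal{E}]\,dP=\int_\Theta\mathbb{E}_P[\bar\pi\,q|\mathcal{E}]\,dP=\int_\Theta\bar\pi\,q\,dP$. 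Boundedness of $\pi$ together with $q_\mathcal{E}\in L^1(P)$ keeps every integral finite.

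For admissibility of $Q'$ I would compare divergences through the restriction to $\mathcal{E}$. Since $dQ'/dP=q_\mathcal{E}$ is already $\mathcal{E}$-measurable, its own conditional expectation returns itself, so $dQ'_\mathcal{E}/dP_\mathcal{E}=q_\mathcal{E}=dQ'/dP$ a.e.; property \textbf{D5} then gives $D(Q'||P)=D(Q'_\mathcal{E}||P_\mathcal{E})$. On the other hand $dQ_\mathcal{E}/dP_\mathcal{E}=\mathbb{E}_P[q|\mathcal{E}]=q_\mathcal{E}$ as well, so $Q'_\mathcal{E}=Q_\mathcal{E}$, and property \textbf{D4} gives $D(Q_\mathcal{E}||P_\mathcal{E})\le D(Q||P)\le\eta$. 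Chaining these, $D(Q'||P)=D(Q'_\mathcal{E}||P_\mathcal{E})=D(Q_\mathcal{E}||P_\mathcal{E})\le\eta$, so $Q'$ is admissible and the argument closes.

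The only real content is recognizing that replacing $Q$ by the belief whose density is the $\mathcal{E}$-conditional expectation of $q$ simultaneously converts the smoothed payoff $\bar\pi$ back into $\pi$ under the expectation and cannot raise the divergence, by \textbf{D4}--\textbf{D5}. I expect the mild obstacle to be the measure-theoretic bookkeeping around densities of restricted measures---in particular the fact that $dQ_\mathcal{E}/dP_\mathcal{E}=\mathbb{E}_P[dQ/dP\,|\,\mathcal{E}]$, which underlies both the identification $Q'_\mathcal{E}=Q_\mathcal{E}$ and the applicability of \textbf{D5}---rather than anything structurally deep. Note that properties \textbf{D1} and \textbf{D2} are not needed for this proposition.
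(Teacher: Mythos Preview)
Your proof is correct and follows essentially the same approach as the paper: you construct the companion belief $Q'$ via the $\mathcal{E}$-conditional expectation of the density, use the tower property to get $\mathbb{E}_{Q'}[\pi]=\mathbb{E}_Q[\bar\pi]$, and invoke \textbf{D4}--\textbf{D5} (together with the identity $dQ_\mathcal{E}/dP_\mathcal{E}=\mathbb{E}_P[dQ/dP\,|\,\mathcal{E}]$) to show admissibility of $Q'$. The paper additionally establishes the reverse inequality to obtain the equality $\inf_Q\{\mathbb{E}_Q[\bar\pi]:D(Q||P)\le\eta\}=\inf_Q\{\mathbb{E}_Q[\pi]:D(Q||P)\le\eta,\ dQ/dP\text{ is }\mathcal{E}\text{-measurable}\}$, but as you observe, only the direction you prove is needed for the proposition as stated.
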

\begin{proof}
See Appendix \ref{appen:main_proof}.
\end{proof}

The intuition of Proposition \ref{prop:SOSD} is that the minimum expectation of $\bar \pi$ equals the minimum expectation of $\pi$ over the restricted set of ``coarse'' beliefs---whose likelihood ratios are measurable with respect to the coarser information $\mathcal{E}$---and therefore exceeds the unrestricted minimum. Specifically, given a belief $Q$ in the set of priors, define a coarse belief $\bar Q$ as $d \bar Q := \frac{dQ_{\mathcal{E}}}{dP_{\mathcal{E}}} dP$. Then, properties \textbf{D4}-\textbf{D5} of Assumption \ref{assum:div} imply that the coarse belief $\bar Q$ is closer to $P$ than the original belief $Q$ is, and hence lies in the set of priors: $D(\bar Q||P) \leq D(Q||P) \leq \eta$. Hence
\[
     \mathbb{E}_Q[\bar \pi] = \mathbb{E}_{\bar Q}[\pi]  \geq  \inf_Q \{ \mathbb{E}_Q[\pi]: D(Q||P) \leq \eta\},
\]
where the first equality follows from the basic properties of conditional expectation. Thus, Proposition \ref{prop:SOSD} holds.

\com{The intuition of Proposition \ref{prop:SOSD} is as follows. When a decision maker minimizes the expectation of $\bar \pi$, she behaves as if she minimize the expectation of $\pi$ over a restricted set of ``coarse'' beliefs---whose probability densities relative to $P$ are measurable with respect to the coarser information $\mathcal{E}$---hence overestimating the minimum compared to the unrestricted minimum.}

We now explore the mechanism design implication of Proposition \ref{prop:SOSD}. Corollary \ref{cor:SOSD_implication} states that holding each bidder's interim worst-case utility above a lower bound $\int_{\underline \theta}^\theta X_i^{\min}(z) dz$, the win-lose dependent mechanisms yield greater revenues than others.

\begin{corollary} \label{cor:SOSD_implication}

Given $x$, suppose Assumptions \ref{assum:div}, \ref{assum:X} and \ref{assum:T} hold. Consider $t$ such that
\[
U_i^{\min}(\theta) \geq \int_{\underline \theta}^\theta X_i^{\min}(z) dz \quad \text{and} \quad t \in \mathcal{T}.
\]
Then, there exists a win-lose dependent transfer rule $\hat t$ such that
\[
\widehat T_i(\theta) \geq T_i(\theta), \quad \widehat U_i^{\min}(\theta) = \int_{\underline \theta}^\theta X_i^{\min}(z) dz \quad \text{and} \quad \hat t \in \mathcal{T},
\]
where $\widehat T_i(\theta)$ and $\widehat U_i^{\min}(\theta)$ denote the interim variables under $(x, \hat t)$.

\end{corollary}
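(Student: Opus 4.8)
The plan is to realize the two-step construction announced just before the statement. First I pass from $t$ to its conditional expectation on each bidder's winning and losing events, obtaining a transfer $\tilde t$ of win--lose form with the same revenue and weakly higher worst-case utility; then I slide $\tilde t$ onto the envelope locus $U_i^{\min}(\theta)=U_0:=\int_{\underline\theta}^\theta X_i^{\min}(z)\,dz$ to convert the slack in worst-case utility into revenue, while keeping both $w\ge l$ and $t\in\mathcal{T}$. The whole argument is carried out type by type in the $(w,l)$-plane of Reduced Problem-Interim \eqref{eq:reduced_problem_interim}.

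Fix $i$ and $\theta<\bar\theta$ and let $\mathcal{E}_\theta$ be the two-cell $\sigma$-algebra generated by the winning set $\{\theta':x_i(\theta,\theta')=1\}$ and the losing set $\{\theta':x_i(\theta,\theta')=0\}$, which partition $\Theta$ up to the $P$-null tie set by Assumption~\ref{assum:X}(i). Applying Proposition~\ref{prop:SOSD} to the payoff schedule $\pi(\theta')=\theta x_i(\theta,\theta')-t_i(\theta,\theta')$ with coarsening $\mathcal{E}_\theta$, the conditional expectation $\bar\pi=\mathbb{E}_P[\pi\mid\mathcal{E}_\theta]$ is constant on each cell, hence is exactly the payoff schedule of a win--lose form transfer $\tilde t$ with winning/losing payoffs $(\tilde w,\tilde l)$ equal to the conditional $P$-averages of $\pi$. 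Proposition~\ref{prop:SOSD} gives $\tilde U_i^{\min}(\theta)\ge U_i^{\min}(\theta)\ge U_0$; iterated expectations preserve the $P$-mean, so $\tilde T_i(\theta)=T_i(\theta)$ and the bidder's share $\tilde wX_i(\theta)+\tilde l(1-X_i(\theta))$ equals the original share $S_i(\theta):=\int_\Theta\pi\,dP$; and $\tilde t\in\mathcal{T}$, since averaging the inequality $\alpha t_i(\theta,\theta^w)-\beta t_i(\theta,\theta^l)\le K$ separately over winning $\theta^w$ and losing $\theta^l$ preserves it (as $\alpha,\beta\ge0$). I record two facts. Because $P\in\Delta$ ($D(P\|P)=0\le\eta$ by \textbf{D1}), the worst case lies below the $P$-mean, so $U_0\le U_i^{\min}(\theta)\le S_i(\theta)$. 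And $(\tilde w,\tilde l)$ lies on the $\ge U_0$ side of the envelope line $L:=\{(w,l):wX_i^{\min}(\theta)+l(1-X_i^{\min}(\theta))=U_0\}$; this is immediate if $\tilde w\ge\tilde l$, and in general follows from $(\tilde w-\tilde l)X_i^{\min}(\theta)+\tilde l\ge \inf_Q[(\tilde w-\tilde l)Q(\text{win})+\tilde l]=\tilde U_i^{\min}(\theta)\ge U_0$.

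For the second step I seek $\hat t$ of win--lose form with payoffs $(\hat w,\hat l)$ on $L$, with $\hat w\ge\hat l$, satisfying $\mathcal{T}$ (i.e. $\alpha(\theta-\hat w)+\beta\hat l\le K$), and with share $\hat wX_i(\theta)+\hat l(1-X_i(\theta))\le S_i(\theta)$, which by \eqref{eq:T_w_l} yields $\widehat T_i(\theta)\ge T_i(\theta)$; by \eqref{eq:Umin_w_l} the condition $\hat w\ge\hat l$ together with $(\hat w,\hat l)\in L$ gives $\widehat U_i^{\min}(\theta)=U_0$. Since $X_i^{\min}(\theta)\le X_i(\theta)<1$ for $\theta<\bar\theta$ by Assumption~\ref{assum:X}(ii), I parametrize $L$ by $w$, and $\hat w\ge\hat l$ holds precisely when $w\ge U_0$ (the diagonal meets $L$ at the full-insurance point $(U_0,U_0)$). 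Along $L$ the share is non-decreasing in $w$ (its $w$-derivative is proportional to $X_i-X_i^{\min}\ge0$) and the slack $\alpha(\theta-w)+\beta l(w)$ is non-increasing in $w$ (derivative $-\alpha-\beta X_i^{\min}/(1-X_i^{\min})\le0$). Hence $\{w\ge U_0:\text{share}\le S_i\}=[U_0,w^{**}]$ with $w^{**}\ge U_0$ (the share at $w=U_0$ is $U_0\le S_i$), and $\mathcal{T}$ holds on $[w^\ast,\infty)$ for a threshold $w^\ast$; a valid $\hat w$ exists iff $w^\ast\le w^{**}$.

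The crux is ruling out $w^\ast>w^{**}$, and this is the step I expect to be the main obstacle. If the full-insurance point already satisfies $\mathcal{T}$, then $w^\ast\le U_0\le w^{**}$ and I take $\hat w=U_0$. Otherwise I must show the share at the point $(w^\ast,l^\ast)$ where $L$ meets the $\mathcal{T}$-boundary is $\le S_i(\theta)$, which I obtain by comparing it to $(\tilde w,\tilde l)$: writing $d=(w^\ast-\tilde w,\,l^\ast-\tilde l)$, the $\ge U_0$ inequality of step one gives $(X_i^{\min},1-X_i^{\min})\cdot d\le0$, and $\tilde t\in\mathcal{T}$ gives $(-\alpha,\beta)\cdot d\ge0$. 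I then decompose the share gradient as $(X_i,1-X_i)=\lambda(X_i^{\min},1-X_i^{\min})-\mu(-\alpha,\beta)$, which forces $\lambda=1+\mu(\beta-\alpha)$ and $\mu=(X_i-X_i^{\min})/[\beta X_i^{\min}+\alpha(1-X_i^{\min})]$; both are non-negative exactly because $0\le\alpha\le\beta$ and $X_i\ge X_i^{\min}$, and the denominator is positive in this branch (a vanishing denominator would force $\alpha=0$ and $X_i^{\min}=0$, whence the original $\mathcal{T}$-inequality already puts us in the full-insurance case). Thus the share gap $(X_i,1-X_i)\cdot d=\lambda[(X_i^{\min},1-X_i^{\min})\cdot d]-\mu[(-\alpha,\beta)\cdot d]\le0$, giving $w^\ast\le w^{**}$. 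Choosing any $\hat w\in[\max(U_0,w^\ast),w^{**}]$ (measurably in $\theta$) and $\hat l=l(\hat w)$ defines the desired win--lose dependent $\hat t$. Finally, the boundary type $\theta=\bar\theta$, where $1-X_i^{\min}$ may vanish and the parametrization degenerates, is trivial because Assumption~\ref{assum:T} imposes no constraint there: one simply sets $\hat w=U_0$ and any $\hat l\le U_0$, for which the share equals $U_0\le S_i$.
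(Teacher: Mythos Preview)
Your argument is correct and your Step~1 matches the paper's exactly, but your Step~2 takes a genuinely different route. The paper, after conditional averaging, simply shifts both transfers by the constant $c=\bar U_i^{\min}(\theta)-U_0$: the resulting $(\hat w,\hat l)=(\bar w-c,\bar l-c)$ automatically lands on $L$, preserves $\hat w\ge\hat l$, and preserves $\mathcal{T}$ because the slack changes by $(\alpha-\beta)c\le0$ (here $\alpha\le\beta$ is used). The only wrinkle is that this requires $\bar w\ge\bar l$ beforehand, which the paper patches with a separate full-insurance lemma (Lemma~\ref{lem:w>=l}). Your approach instead parametrizes $L$ directly and uses the decomposition $(X_i,1-X_i)=\lambda(X_i^{\min},1-X_i^{\min})-\mu(-\alpha,\beta)$ with $\lambda,\mu\ge0$ to bound the share at the $\mathcal{T}$-boundary point; this is more work, but it handles the $\tilde w<\tilde l$ case in one stroke without a side lemma. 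One small caveat: when $X_i(\theta)=0$ the conditional average $\tilde w$ on the winning cell is undefined, so your Step~1 does not literally produce a point $(\tilde w,\tilde l)$; the paper treats this as a separate case (Case~B), and you should too---though as you would discover, any $\hat w$ large enough to meet $\mathcal{T}$ together with $\hat l=0$ works, since the share is identically zero there.
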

\begin{proof}
See Appendix \ref{appen:main_proof}.
\end{proof}

The intuition of Corollary \ref{cor:SOSD_implication} is as follows. Suppose that $\pi$ is the payoff schedule of bidder $i$ with type $\theta$ under $t$ (i.e., $\pi \equiv \theta x_i(\theta, \cdot)-t_i(\theta, \cdot)$). Let $\mathcal{E}$ be the $\sigma$-algebra generated by the bidder's winning and losing events, and $\bar \pi$ be the conditional expectation of $\pi$. Note that $\mathbb{E}_P[\bar \pi]=\mathbb{E}_P[\pi]$ by the law of iterated expectations. Proposition \ref{prop:SOSD} implies that if
\[
\inf_Q \mathbb{E}_Q[\pi] \geq  U_0 := \int_{\underline \theta}^\theta X_i^{\min}(z) dz,
\]
then the following holds:
\[
\begin{array} {l}
\text{There exists $\mathcal{E}$-measurable $\bar \pi$ s.t.}\\
(i) \, \mathbb{E}_P[\bar \pi]=\mathbb{E}_P[\pi] \\
(ii) \, \inf_{Q} \mathbb{E}_Q[\bar \pi] \geq U_0
\end{array} \iff \begin{array} {l}
\text{There exists $\mathcal{E}$-measurable $\hat \pi$ s.t.}\\
(i) \, \mathbb{E}_P[\hat \pi] \leq \mathbb{E}_P[\pi] \\
(ii) \, \inf_{Q} \mathbb{E}_Q[\hat \pi] = U_0,
\end{array}
\]
where the equivalence follows from the usual duality argument. Thus, we can construct a win-lose dependent transfer rule $\hat t$ which (i) extracts a higher revenue than $t$ for the seller, and (ii) guarantees the bidder with the lower bound for the interim worst-case utility:
\begin{align*}
    (i)   &\quad \widehat T_i(\theta) \geq T_i(\theta) \iff  \mathbb{E}_P[\hat \pi] \leq \mathbb{E}_P[\pi] \\
    (ii) &\quad  \widehat U_i^{\min}(\theta) = \int_{\underline \theta}^\theta X_i^{\min}(z) dz \iff  \inf_{Q} \mathbb{E}_Q[ \hat \pi] = U_0 .
\end{align*}

\medskip

\begin{proof} [Proof of Theorem \ref{thm:main}]

First, by Proposition \ref{prop:feasibility} (ii), the constraint set of the reduced problem \eqref{eq:reduced_problem} is contained in the constraint set of the optimal transfer problem \eqref{eq:opt_trans_problem}. Now, let $t$ be a transfer rule in the constraint set of problem \eqref{eq:opt_trans_problem}. By Propostion \ref{prop:feasibility} (i), $t$ satisfies the hypothesis of Corollary \ref{cor:SOSD_implication}. This implies that there exists a transfer rule $\hat t$ in the constraint set of problem \eqref{eq:reduced_problem} which yields a higher revenue than $t$. Thus, Theorem \ref{thm:main} holds.

\end{proof}

\section{Optimal transfer rules: limited premium and winner-favored transfers} \label{sec:constraints_applications}

Theorem \ref{thm:main} simplifies the optimal transfer problem \eqref{eq:opt_trans_problem} to the reduced problem \eqref{eq:reduced_problem}. Here, by solving the reduced problem \eqref{eq:reduced_problem}, we find the optimal transfer rule in two classes: (i) the limited premium transfers (Section \ref{subsec:limited_prem}) and (ii) the winner-favored transfers (Section \ref{subsec:winner_favored}).

\subsection{Limited premium transfers} \label{subsec:limited_prem}

\begin{table}
{\small
\begingroup
\setstretch{1.4}
\begin{tabular}{c|c|c|c||c|c|c}
  \hline
   & FIA & FPA & SPA & APA & WoA & SLA \\
   \hline
  Win & $t^w >  0$ & $t^w >0$ & $t^w >0$ & $t^w >0$ & $t^w >0 $& $t^w =0$ \\
  \hline
  Lose & $t^l < 0$ & $t^l =0$ & $t^l =0$ & $t^l>0$ & $ t^l>0 $ & $ t^l>0$\\
    \hline
   Comparison & $-t^l = \theta-t^w$ & $-t^l =0$  & $-t^l =0$  & $0 < t^w = t^l$ & $0< t^w < t^l$ & $0=t^w < t^l$ \\
  \hline
  Constraint  &  \multicolumn{3}{c||}{$-t^l \leq K$} & \multicolumn{3}{c}{$t^w \leq t^l$}\\
  \hline
\end{tabular}
\endgroup
}
\caption{\textbf{Comparison of transfers.} FIA, FPA, SPA, APA, WoA, and SLA refer to the full insurance auction, first-price auction, second-price auction, all-pay auction, war of attrition and sad loser auction, respectively. Each type of bidder's winning and losing transfers are denoted by $t^w$ and $t^l$, respectively.}
\end{table}

As mentioned in the introduction, in many standard formats such as the first-price and second-price auctions, bidders receive no premiums. Even when premiums are provided, their amounts are mostly limited and hence insufficient to fully insure bidders (\citealp{Mil04, Goe04, Hu11, Brun14, Hu18}; see Section \ref{sec:literature}). This motivates the following class of \textit{limited premium transfers}:

\begin{definition} \label{def:limited_prem}
    Let $K \geq 0$. We say $(x, t)$ is a $K$-limited premium mechanism if the premium provided to the loser is limited by $K$: for every $i$, $\theta < \bar \theta$, $\theta^w$ and $\theta^l$,
    \begin{equation} \label{eq:limited_prem}
    x_i(\theta, \theta^w)=1 \text{ and } x_i(\theta, \theta^l)=0 \implies -t_i(\theta, \theta^l) \leq K.
    \end{equation}
    If $(x, t)$ is a $K$-limited premium mechanism, we say $t$ is a $K$-limited premium transfer. Especially, if $K=0$, we simply say a no premium mechanism (or transfer).
\end{definition}

\noindent As mentioned, this class satisfies Assumption \ref{assum:T} with $\alpha=0$ and $\beta=1$.

We solve the optimal transfer problem \eqref{eq:opt_trans_problem} in the class of $K$-limited premium transfers. The interim formulation of the reduced problem \eqref{eq:reduced_problem_interim} becomes:

\noindent \textbf{Reduced Problem, Limited Premium.} Let $x$ be given. For given $i$ and $\theta$, solve
\begin{align}
\textrm{inf}_{(w, l)} \{ w X_i(\theta) + l (1-X_i(\theta)) : \,\, &(i) \, w \geq l \notag \\
&(ii) \, w X_i^{\min}(\theta) + l (1-X_i^{\min}(\theta)) = U_0 \notag \\
&(iii) \, l \leq K \}, \label{eq:two_dim_limited_prem}
\end{align}
where $U_0 := \int_{\underline \theta}^\theta X_i^{\min}(z) dz$. Constraint (iii) corresponds to condition \eqref{eq:limited_prem}.

\begin{figure} [t]
\centering
\includegraphics[scale=1]{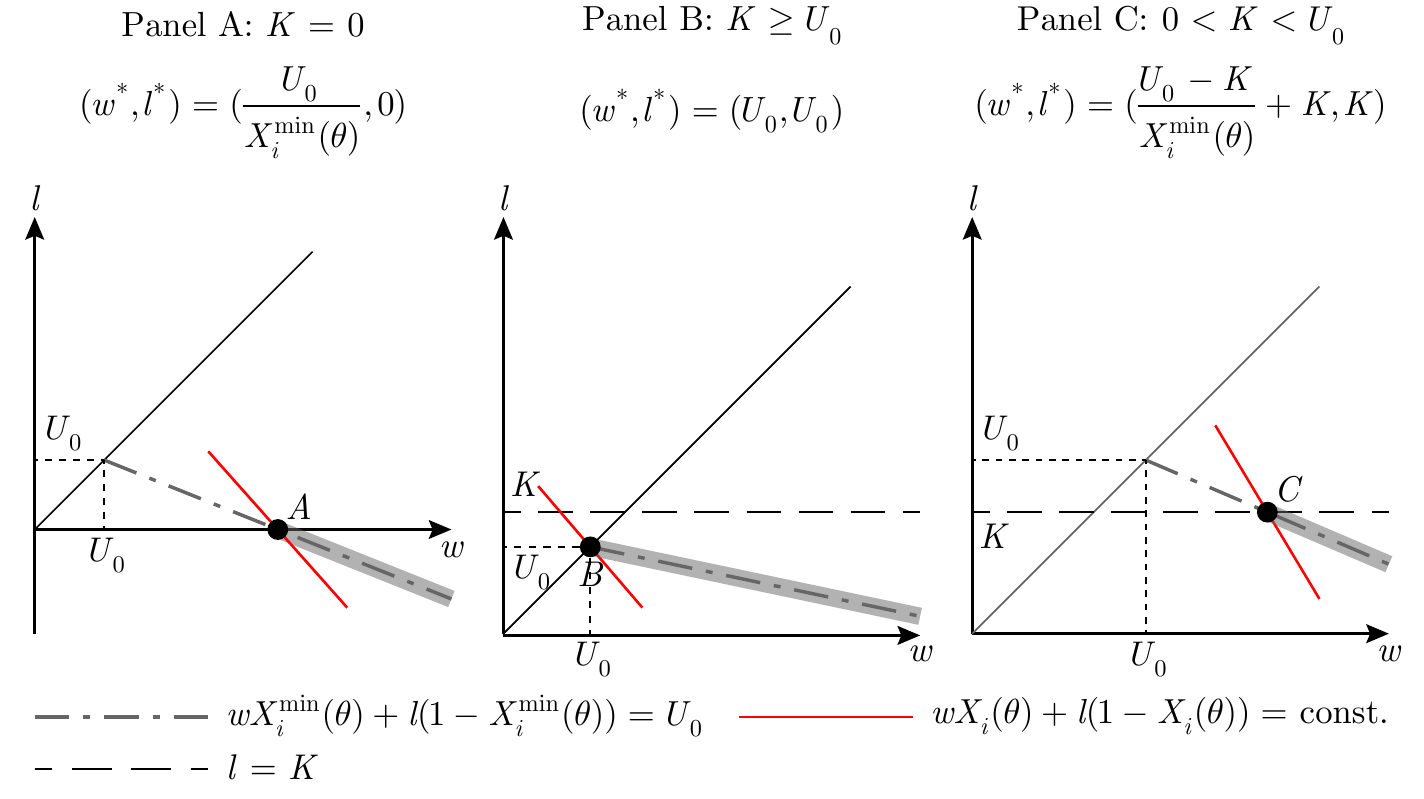}
\caption{\textbf{The interim formulation of the reduced problem.} Panels A, B and C correspond to the following cases: (A) $K=0$, (B) $U_0 \leq K$ and (C) $0<K<U_0$. Points $A$, $B$ and $C$ are the solutions to problem \eqref{eq:two_dim_limited_prem} in cases A, B and C. The shaded line represents the constraint set. Inequality \eqref{eq:w_l_intuition} means that the level curves of the objective function (the solid red lines) are steeper than the constraint line (the gray dash-dotted lines).} \label{fig:solution_w_l_limited_prem}
\end{figure}

Figure \ref{fig:solution_w_l_limited_prem} illustrates problem \eqref{eq:two_dim_limited_prem} and its solution $(w^*, l^*)$. We show that the solution is the point minimizing the difference between $w$ and $l$, i.e., the point closest to the $45^\circ$ line. The ambiguity averse bidder underestimates the winning probability (i.e., $X_i^{\min}(\theta)<X_i(\theta)$) and overestimates the losing probability (i.e., $1-X_i^{\min}(\theta)>1-X_i(\theta)$) relative to the ambiguity neutral seller. Hence, the ratio of the winning probability to the losing probability under the bidder's worst-case belief---the relative effectiveness of $w$ in terms of $l$ in providing the bidder's worst-case utility---is less than the corresponding ratio under the seller's belief---the relative cost of $w$ in terms of $l$ to the seller:
\begin{equation} \label{eq:w_l_intuition}
\begin{array}{c}
\text{relative effectiveness} \\ \text{of $w$ in terms of $l$}
\end{array} = \frac{X_i^{\min}(\theta)}{1-X_i^{\min}(\theta)} < \frac{X_i(\theta)}{1-X_i(\theta)} = \begin{array}{c}
\text{relative cost} \\ \text{of $w$ in terms of $l$}.
\end{array}
\end{equation}
This means that holding the bidder's worst-case utility fixed (constraint (ii)), the cost of substituting $w$ for $l$ exceeds the benefit; hence, the seller tries to reduce $w$ and raise $l$ as much as possible. Thus, the point with the minimum difference between $w$ and $l$ is optimal.

For instance, consider the case $K = 0$ (Panel A). Since the bidder receives no premiums, her losing payoff is at most $0$; hence, $l^*=0$ (point $A$). Next, suppose $K > 0$. For each $i$ and $\theta$ such that $U_0 := \int_{\underline \theta}^\theta X_i^{\min}(z) dz \leq K$ (Panel B), the seller fully insures the bidder, i.e., $w^*=l^*$ (point $B$). In contrast, if $U_0 := \int_{\underline \theta}^\theta X_i^{\min}(z) dz > K$ (Panel C), unable to provide full insurance, the seller provides the maximum available premium for the loser; hence, $l^*=K$ (point $C$). Thus, there exists a threshold type $\theta_i^K$ such that (i) bidder $i$ with type $\theta \leq \theta_i^K$ receives full insurance, whereas (ii) bidder $i$ with type $\theta > \theta_i^K$ receives only \textit{partial} insurance, i.e., her winning payoff is strictly higher than her losing payoff.

As a result, the optimal transfer rule is given as follows:

\begin{proposition} \label{prop:limited_prem}

Suppose Assumption \ref{assum:div} holds. Let $x$ be an allocation rule satisfying Assumptions \ref{assum:X}, and $\mathcal{T}$ be the class of $K$-limited premium transfers. Define $t^*$ as
\[
    t^*_i(\theta, \theta') := t_i^w(\theta) x_i(\theta, \theta') + t_i^l(\theta) [1-x_i(\theta, \theta')],
\]
where $t_i^w, t_i^l: \Theta \rightarrow \mathbb{R}$ are given as follows:\footnote{The fraction in equation \eqref{eq:tw_tl_limited_prem} is well-defined because, by the definition of $\theta_i^K$ and Assumption \ref{assum:X} (iii), we have $X_i^{\min}(\theta)>0$ whenever $\theta>\theta_i^K$.}
\begin{equation} \label{eq:tw_tl_limited_prem}
  (t_i^w(\theta), t_i^l(\theta))  := \begin{cases}
                                     (\theta - \int_{\underline \theta}^\theta X^{\min}_i(z) dz, -\int_{\underline \theta}^\theta X_i^{\min}(z)dz)  & \mbox{if }  \theta \leq \theta^K_{i} \\
                                     (\theta-K-\frac{1}{X_i^{\min}(\theta)}(\int_{\underline \theta}^\theta X_i^{\min}(z)dz - K ), -K)  & \mbox{if } \theta > \theta^K_{i},
                               \end{cases}
\end{equation}
and
\begin{equation} \label{eq:threshold}
\theta^K_i := \sup \{ \theta: \int_{\underline \theta}^{\theta} X_i^{\min}(z) dz \leq K \}.
\end{equation}
Then, $t^*$ is the unique solution to the reduced problem \eqref{eq:reduced_problem}, and hence a solution to the optimal transfer problem \eqref{eq:opt_trans_problem}.

\end{proposition}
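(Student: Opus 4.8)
The plan is to prove the proposition by reducing it, via Theorem~\ref{thm:main}, to the two-dimensional interim problem \eqref{eq:two_dim_limited_prem} and then solving that problem in closed form for each fixed $i$ and $\theta$. First I would note that the $K$-limited premium class satisfies Assumption~\ref{assum:T} with $\alpha=0$ and $\beta=1$, so Theorem~\ref{thm:main} applies: any transfer rule solving the reduced problem \eqref{eq:reduced_problem} also solves the optimal transfer problem \eqref{eq:opt_trans_problem}. Since the ex-ante reduced problem decomposes pointwise into problem \eqref{eq:two_dim_limited_prem} through \eqref{eq:T_w_l} and \eqref{eq:Umin_w_l}, it suffices to show that, for each $(i,\theta)$, the unique minimizer $(w^*,l^*)$ of \eqref{eq:two_dim_limited_prem} is the one induced by \eqref{eq:tw_tl_limited_prem} via the identities $w=\theta-t_i^w(\theta)$ and $l=-t_i^l(\theta)$, and then to reassemble the pointwise minimizers into a measurable win-lose dependent rule.

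The core computation is to solve \eqref{eq:two_dim_limited_prem}. For types with $X_i^{\min}(\theta)>0$ (which, by the footnoted remark, holds whenever $\theta>\theta_i^K$) I would use the equality constraint (ii) to eliminate $w$, writing $w=\bigl(U_0-l(1-X_i^{\min}(\theta))\bigr)/X_i^{\min}(\theta)$. Substituting into the objective yields the affine function $\tfrac{X_i(\theta)}{X_i^{\min}(\theta)}U_0+\tfrac{X_i^{\min}(\theta)-X_i(\theta)}{X_i^{\min}(\theta)}\,l$, whose slope in $l$ is strictly negative precisely because $X_i^{\min}(\theta)<X_i(\theta)$, i.e. inequality \eqref{eq:w_l_intuition}. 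Simultaneously, constraint (i) $w\ge l$ rearranges to $l\le U_0$, so the feasible range is $l\le\min\{U_0,K\}$. The objective therefore attains its minimum at the unique point $l^*=\min\{U_0,K\}$, with $w^*$ recovered from (ii); strict monotonicity in $l$ delivers uniqueness.

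It then remains to split into the two regimes cut out by the threshold \eqref{eq:threshold}, at which $U_0=\int_{\underline\theta}^\theta X_i^{\min}(z)\,dz$ crosses $K$. For $\theta\le\theta_i^K$ we have $U_0\le K$, so $l^*=U_0$ and then (ii) forces $w^*=U_0$: full insurance, which translates to the first branch of \eqref{eq:tw_tl_limited_prem}. For $\theta>\theta_i^K$ we have $U_0>K$, so $l^*=K$ and $w^*=(U_0-K)/X_i^{\min}(\theta)+K$: partial insurance, giving the second branch. In both cases $w^*\ge l^*$ holds (it is equivalent to $l^*\le U_0$), so the constructed rule is genuinely win-lose dependent, satisfies the envelope constraint (ii) of \eqref{eq:reduced_problem} by construction, and lies in $\mathcal{T}$ since $l^*\le K$; invoking Theorem~\ref{thm:main} upgrades it from a solution of \eqref{eq:reduced_problem} to a solution of \eqref{eq:opt_trans_problem}.

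I expect the main obstacle to be the two places where the generic argument degenerates. First, the uniqueness claim hinges on the strict inequality $X_i^{\min}(\theta)<X_i(\theta)$; I would justify it from genuine ambiguity ($\eta>0$) together with $0<X_i(\theta)<1$, the upper bound coming from Assumption~\ref{assum:X}(ii) for $\theta<\bar\theta$. Second, the low types with $X_i^{\min}(\theta)=0$ (e.g. types below an effective reserve) cannot be handled by the elimination step: there constraint (ii) directly fixes $l=U_0$ while $w$ multiplies the $P$-null winning event and is thus payoff-irrelevant, so the winning transfer is pinned down only up to a $P^2$-null set. I would accordingly read ``unique'' as uniqueness $P^2$-almost everywhere (equivalently, of the revenue- and utility-relevant quantities), and verify that the first branch of \eqref{eq:tw_tl_limited_prem} is consistent with the full-insurance value $w^*=l^*=U_0$ forced on this region.
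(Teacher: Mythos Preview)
Your proposal is correct and follows essentially the same route as the paper: invoke Theorem~\ref{thm:main} to pass to the reduced problem, decompose it pointwise into the two-dimensional interim problem \eqref{eq:two_dim_limited_prem}, and solve the latter using the slope comparison \eqref{eq:w_l_intuition} to conclude that the optimum is the feasible point with the smallest $w-l$, splitting into the full-insurance and partial-insurance regimes according to the threshold \eqref{eq:threshold}. Your explicit treatment of the degenerate cases (the strict inequality $X_i^{\min}(\theta)<X_i(\theta)$ needed for uniqueness, and the types with $X_i^{\min}(\theta)=0$ where $w$ enters only on a null set) is in fact more careful than the paper's informal graphical argument around Figure~\ref{fig:solution_w_l_limited_prem}.
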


Of special importance is the case where $x$ is the efficient allocation rule, which grants the object to the highest-valuation bidder (ties are randomly broken). Since $(x, t^*)$ is symmetric in this case, we drop the subscript $i$. Corollary \ref{cor:FPA_FIA} states that depending on $K$, the first-price auction, a full insurance mechanism, and a hybrid between these formats become optimal (see Figure \ref{fig:tw_tl_graph} and Table \ref{tab:summary_corollary}).

\begin{corollary} \label{cor:FPA_FIA}
In Proposition \ref{prop:limited_prem}, suppose that $x$ is the efficient allocation rule. Let
\[
\bar K := \int_{\underline \theta}^{\bar \theta} X^{\min}(z) dz.
\]
Then, the following statements hold:

\noindent (i) If $K = 0$, the first-price auction implements $(x, t^*)$.

\noindent (ii) If $K \geq \bar K$, $(x, t^*)$ is a full insurance auction.

\noindent (iii) If $0 < K < \bar K$, $(x, t^*)$ is a hybrid of the first-price and the full insurance auctions.

\end{corollary}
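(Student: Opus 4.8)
The plan is to specialize the optimal transfer rule $t^*$ from Proposition~\ref{prop:limited_prem} to the efficient allocation rule and then read off which classical auction format the resulting win-lose transfer functions $(t^w, t^l)$ coincide with, in each of the three regimes of $K$. Since $x$ is efficient and symmetric, dropping the subscript $i$ is justified, and the threshold $\theta^K$ of equation~\eqref{eq:threshold} together with $\bar K := \int_{\underline\theta}^{\bar\theta} X^{\min}(z)\,dz$ partitions the type space according to whether $\int_{\underline\theta}^\theta X^{\min}(z)\,dz$ has reached $K$. First I would observe that $\bar K$ is exactly the value of $\int_{\underline\theta}^\theta X^{\min}(z)\,dz$ at $\theta = \bar\theta$, so the comparison of $K$ with $\bar K$ determines whether the threshold $\theta^K$ falls inside $(\underline\theta, \bar\theta)$ or at/beyond the endpoint $\bar\theta$.

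For part (i), with $K=0$ the threshold is $\theta^0 = \underline\theta$ (since the integral is positive for any $\theta > \underline\theta$ by Assumption~\ref{assum:X}(iii) giving $X^{\min} > 0$ on a set of positive measure), so the second branch of equation~\eqref{eq:tw_tl_limited_prem} applies for almost every type, yielding $t^l(\theta) = -K = 0$ and $t^w(\theta) = \theta - \frac{1}{X^{\min}(\theta)}\int_{\underline\theta}^\theta X^{\min}(z)\,dz$. I would then verify that $t^l \equiv 0$ together with $t^w > 0$ is precisely the defining feature of the first-price auction (the loser pays nothing, the winner pays a positive amount depending only on her own type), matching the FPA column of Table~1. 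For part (ii), with $K \geq \bar K$ we get $\theta^K = \bar\theta$, so the first branch applies for all $\theta$, giving $t^w(\theta) = \theta - \int_{\underline\theta}^\theta X^{\min}(z)\,dz$ and $t^l(\theta) = -\int_{\underline\theta}^\theta X^{\min}(z)\,dz$; here one checks directly that $\theta - t^w(\theta) = -t^l(\theta)$, which is exactly the full-insurance condition \eqref{eq:winning_is_better_than_losing} holding with equality, identifying $(x, t^*)$ as a full insurance auction.

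For part (iii), with $0 < K < \bar K$ the threshold $\theta^K$ lies strictly interior to $(\underline\theta, \bar\theta)$; types below it fall in the first (full-insurance) branch and types above it fall in the second branch with $t^l = -K$ and $w > l$ (only partial insurance). I would spell out that this is literally a hybrid: the low types $\theta \leq \theta^K$ face the full-insurance transfer while the high types $\theta > \theta^K$ face a first-price-like transfer capped by the maximal available premium $K$ on the losing side. The only genuine content beyond substitution is confirming the boundary behavior of $\theta^K$ in each regime, which follows from the continuity and strict monotonicity of $\theta \mapsto \int_{\underline\theta}^\theta X^{\min}(z)\,dz$ (continuity is immediate from the integral, and strict increase past $\theta^K$ uses $X^{\min} > 0$ there).

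The main obstacle, though a mild one, is handling the endpoint and measure-zero subtleties cleanly: one must argue that the value of the transfer on the single type $\theta^K$ (or on the null set where $X^{\min}$ may vanish) does not affect the identification of the auction format, since revenue and feasibility are determined by integrals against the atomless measure $P$. I would dispatch this by noting that $P$ is atomless, so redefining $(t^w, t^l)$ on any $P$-null set changes neither the revenue $\int_\Theta T(\theta)\,dP(\theta)$ nor the interim worst-case utility, and hence the stated formats coincide with $(x, t^*)$ up to $P$-equivalence. This reduces the whole corollary to a direct inspection of equation~\eqref{eq:tw_tl_limited_prem} against the three $K$-regimes, with the format names assigned exactly as in Table~1.
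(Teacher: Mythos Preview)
Your approach is essentially the same as the paper's: specialize equation~\eqref{eq:tw_tl_limited_prem} to the efficient allocation rule, read off $(t^w,t^l)$ in each $K$-regime, and match the resulting properties to the auction format (the paper does this via the surrounding discussion and Table~\ref{tab:summary_corollary}, with a footnote pointing out that $t^w$ coincides with the equilibrium bid in the first-price auction).

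There is one inaccuracy. For $K=0$ you claim $\theta^0=\underline\theta$, invoking Assumption~\ref{assum:X}(iii) to get $X^{\min}>0$ on a set of positive measure. But Assumption~\ref{assum:X}(iii) only gives monotonicity, not strict positivity; depending on the divergence, $X^{\min}(\theta)$ can vanish on a non-null interval near $\underline\theta$. Indeed the paper records $\theta^K=\sup\{\theta:X^{\min}(\theta)=0\}$ for $K=0$ in Table~\ref{tab:summary_corollary}, which may exceed $\underline\theta$. Your null-set dismissal does not cover this, since $\{\theta:X^{\min}(\theta)=0\}$ need not be $P$-null. Fortunately the conclusion is unaffected: on $\{\theta\le\theta^0\}$ the first branch of \eqref{eq:tw_tl_limited_prem} gives $t^l(\theta)=-\int_{\underline\theta}^\theta X^{\min}(z)\,dz=0$ as well, so $t^l\equiv 0$ holds for \emph{all} types regardless of where $\theta^0$ sits, and the identification with the first-price auction goes through. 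Just correct the computation of $\theta^0$ and note that both branches deliver $t^l=0$.
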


\begin{figure} [t]
\centering
\includegraphics[scale=1]{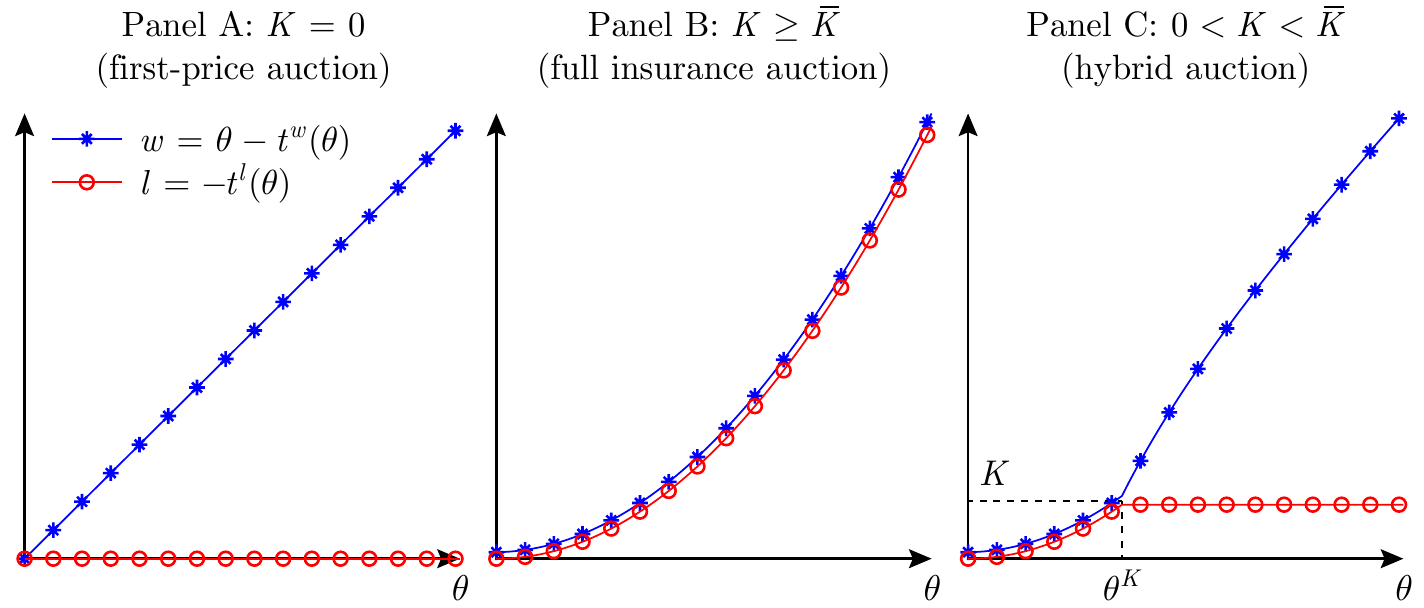}
\caption{\textbf{The bidder's winning and losing payoffs under the optimal transfer in Corollary \ref{cor:FPA_FIA}.} Horizontal axes represent the type $\theta$. Starred lines represent the winning payoff $\theta-t^w(\theta)$, and circled lines represent the losing payoff $-t^l(\theta)$.} \label{fig:tw_tl_graph}
\end{figure}

\begin{table}
  \centering
  {\small \renewcommand{\arraystretch}{1.25}
  \begin{tabular}{l|l|l}
    \hline
     & Optimal transfer & Threshold type \\
     \hline
    $K=0$
    & $t^w(\theta) = \begin{cases}
	\theta &\text{if $\theta \leq \theta^K$}\\
	\theta - \int_{\underline \theta}^\theta \frac{X^{\min}(z)}{X^{\min}(\theta)} dz &\text{if $\theta > \theta^K$}
\end{cases} $
    & $\theta^K = \sup \{ \theta: X^{\min}(\theta) = 0 \}$  \\
    & $t^l(\theta) = 0$ & \\
    \hline
    $K \geq \bar K$ &  $t^w(\theta) = \theta - \int_{\underline \theta}^\theta X^{\min}(z) dz$
 &  $\theta^K = \bar \theta$ \\
    &  $ t^l(\theta) = - \int_{\underline \theta}^\theta X^{\min}(z) dz$   &  \\
    \hline
    $0 < K < \bar K$  & Equation \eqref{eq:tw_tl_limited_prem} &  Equation \eqref{eq:threshold} \\
    \hline
  \end{tabular}
  }
  \caption{\textbf{The optimal transfer in Corollary \ref{cor:FPA_FIA}.} Since $x$ is the efficient allocation rule, we have $X^{\min}(\theta) = \inf_Q \{ Q([\underline \theta, \theta]): D(Q||P) \leq \eta\}$.} \label{tab:summary_corollary}

\end{table}


In the no premium case ($K=0$), the optimal mechanism $(x, t^*)$ has two properties: first, it is win-lose dependent; second, each bidder's losing transfer is zero. A feasible and efficient mechanism with these properties is precisely the first-price auction.\footnote{Alternatively, one can directly verify that the winning transfer $t^w(\theta)$ in Table \ref{tab:summary_corollary} is an equilibrium bidding strategy of the first-price auction \citep{Lo98}.} Corollary \ref{cor:FPA_FIA} (ii) reproduces \cites{Bo06} result as explained earlier. In addition, this result identifies the exact range of the maximum available premium $K$ for which the full insurance mechanism is optimal.

When the seller can provide some premiums but cannot fully insure all types ($0 < K < \bar K$), low types are fully insured ($\theta \leq \theta^K$) whereas high types are only partially insured ($\theta > \theta^K$). For high types, the winning and losing transfers (the second line of equation \eqref{eq:tw_tl_limited_prem}) are identical to the equilibrium transfers of a hybrid indirect mechanism where (i) the bidders first compete in the first-price auction with reserve price $\theta^K$, and (ii) then receive a premium of $K$ regardless of winning or losing. Therefore, the optimal mechanism can be regarded as a hybrid of the first-price auction and a full insurance mechanism.

Corollary \ref{cor:FPA_FIA} extends naturally to the case where the allocation rule excludes bidders with valuations below $r \geq 0$. In this case, if $K=0$, the first-price auction with reserve price $r$ implements the optimal mechanism.

\subsection{Winner-favored transfers} \label{subsec:winner_favored}



Winner-favored mechanisms are commonly observed in the context of contests where contestants' bids are sunk. Typically, the loser pays her own bid, while the winner pays her own bid (the all-pay auction), the second-highest bid (the war of attrition; \citealp{Kris97}), or less than her own bid with some reimbursement from the designer---in the extreme case of the full reimbursement, the winner pays nothing (the sad loser auction; \citealp{Ril81}). In contests, the designer often provides reimbursements (or any ``rewards'' other than the auctioned object) to encourage higher bids (\citealp{Kap02, Coh08, Matros09, Matros12, Min20}; see Section \ref{sec:literature}).

\begin{definition} \label{def:winner_favored}
    We say $(x, t)$ is a winner-favored mechanism if each bidder pays smaller amounts to the seller when she wins than loses: for every $i$, $\theta < \bar \theta$, $\theta^w$ and $\theta^l$,
    \begin{equation}\label{eq:winner_favored}
      x_i(\theta, \theta^w) = 1 \,\text{and}\, x_i(\theta, \theta^l) = 0 \implies t_i(\theta, \theta^w) \leq t_i(\theta, \theta^l).
    \end{equation}
    If $(x, t)$ is a winner-favored mechanism, we say $t$ is a winner-favored transfer.
\end{definition}

\noindent  As mentioned, this class satisfies Assumption \ref{assum:T} with $\alpha=\beta=1$ and $K = 0$.


We solve the optimal transfer problem \eqref{eq:opt_trans_problem} in the class of winner-favored transfers. The interim formulation of the reduced problem \eqref{eq:reduced_problem_interim} becomes:

\noindent \textbf{Reduced Problem, Winner-Favored.} Let $x$ be given. For given $i$ and $\theta$, solve
\begin{align}
\textrm{inf}_{(w, l)} \{ w X_i(\theta) + l (1-X_i(\theta)) : \,\, &(i) \, w \geq l \notag \\
&(ii) \, w X_i^{\min}(\theta) + l (1-X_i^{\min}(\theta)) = U_0 \notag \\
&(iii) \, l \leq w - \theta \}, \label{eq:two_dim_winfav}
\end{align}
where $U_0 := \int_{\underline \theta}^\theta X_i^{\min}(z) dz$. Constraint (iii) corresponds to condition \eqref{eq:winner_favored}.

\begin{figure}
\centering
\includegraphics[scale=1]{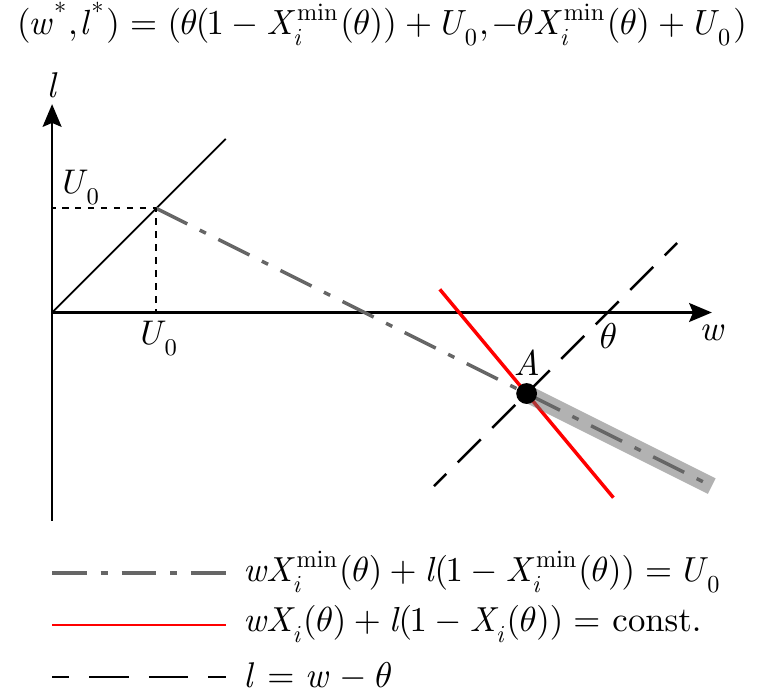}
\caption{\textbf{The interim formulation of the reduced problem.} Point $A$ is the solution $(w^*, l^*)$ to problem \eqref{eq:two_dim_winfav}. The shaded line represents the constraint set.} \label{fig:solution_w_l_APA}
\end{figure}

Figure \ref{fig:solution_w_l_APA} illustrates problem \eqref{eq:two_dim_winfav} and its solution $(w^*, l^*)$ (point $A$). The intuition for the solution is the same as in Section \ref{subsec:limited_prem}: points with smaller differences between $w$ and $l$ yield greater revenues for the seller. Under the constraint that the winning transfer ($=\theta-w$) is less than or equal to the losing transfer ($=-l$), the difference between $w$ and $l$ is minimized when the winning and losing transfers are equal; hence, $l^*=w^*-\theta$.

\begin{proposition} \label{prop:winner_favored}

Suppose Assumption \ref{assum:div} holds. Let $x$ be an allocation rule satisfying Assumptions \ref{assum:X}, and $\mathcal{T}$ be the class of winner-favored transfers. Define $t^*$ as
\[
t_i^*(\theta, \theta') = t_i^w(\theta) x_i(\theta, \theta') + t_i^l(\theta) (1-x_i(\theta, \theta')),
\]
where
\begin{equation} \label{eq:sol_winning_favored}
t_i^w(\theta) = t_i^l(\theta) = \theta X_i^{\min}(\theta) - \int_{\underline \theta}^\theta X_i^{\min}(z) dz.
\end{equation}
Then, $t^*$ is the unique solution to the reduced problem \eqref{eq:reduced_problem}, and hence a solution to the optimal transfer problem \eqref{eq:opt_trans_problem}.

\end{proposition}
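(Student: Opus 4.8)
The plan is to derive the proposition from Theorem~\ref{thm:main} together with the interim reduction already recorded as problem~\eqref{eq:two_dim_winfav}. By Theorem~\ref{thm:main} the solution set of the reduced problem~\eqref{eq:reduced_problem} is contained in that of the optimal transfer problem~\eqref{eq:opt_trans_problem}, so it suffices to show that $t^*$ is the \emph{unique} solution of~\eqref{eq:reduced_problem}. Since the ex-ante problem~\eqref{eq:reduced_problem} separates across fixed $(i,\theta)$ into the two-dimensional program~\eqref{eq:two_dim_winfav} in the winning and losing payoffs $(w,l)=(\theta-t_i^w(\theta),-t_i^l(\theta))$, I would prove that for each $i$ and (a.e.)\ $\theta$ the point $(w^*,l^*)$ with $l^*=w^*-\theta$ and $w^*=\theta(1-X_i^{\min}(\theta))+\int_{\underline\theta}^\theta X_i^{\min}(z)\,dz$ is the unique minimizer of~\eqref{eq:two_dim_winfav}, and then read off that this translates into the formula~\eqref{eq:sol_winning_favored}, since $t_i^w=\theta-w^*=\theta X_i^{\min}(\theta)-\int_{\underline\theta}^\theta X_i^{\min}(z)\,dz=-l^*=t_i^l$.

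For the linear program itself, I would first note that because $\theta>0$, constraint~(iii) $l\le w-\theta$ already forces $w-l\ge\theta>0$, so constraint~(i) $w\ge l$ is redundant. The equality constraint~(ii) is a line $L$ of slope $-X_i^{\min}(\theta)/(1-X_i^{\min}(\theta))$, and the feasible set is the ray $L\cap\{w-l\ge\theta\}$. Parametrizing $L$ by $w$ and differentiating the objective $c(w)=wX_i(\theta)+l(w)(1-X_i(\theta))$ along $L$ gives
\[
\frac{dc}{dw}=\frac{X_i(\theta)-X_i^{\min}(\theta)}{1-X_i^{\min}(\theta)}\ge 0,
\]
which is exactly the content of inequality~\eqref{eq:w_l_intuition}. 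Since moving toward larger $w$ along $L$ lowers $l$ and hence raises $w-l$, the objective is non-decreasing as one travels up the feasible ray away from its endpoint; the minimum is therefore attained at the endpoint, where~(iii) binds, $l^*=w^*-\theta$. Substituting $l=w-\theta$ into~(ii) and solving the single resulting linear equation produces $(w^*,l^*)$.

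It then remains to verify membership and uniqueness. The transfer $t^*$ is win-lose dependent by construction; it satisfies the envelope condition~(ii) of~\eqref{eq:reduced_problem} because $(w^*,l^*)\in L$ means, via~\eqref{eq:Umin_w_l}, that $U_i^{\min}(\theta)=w^*X_i^{\min}(\theta)+l^*(1-X_i^{\min}(\theta))=\int_{\underline\theta}^\theta X_i^{\min}(z)\,dz$; and it is winner-favored since $t_i^w=t_i^l$. Feasibility, and hence membership in the constraint set of~\eqref{eq:opt_trans_problem}, then follows from Proposition~\ref{prop:feasibility}(ii). For uniqueness the displayed derivative is \emph{strictly} positive whenever $X_i^{\min}(\theta)<X_i(\theta)$, making the endpoint the strict minimizer. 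I would establish this strict inequality for every $\theta<\bar\theta$ with $X_i(\theta)>0$: writing $x_i(\theta,\cdot)=\mathbf{1}_A$ (Assumption~\ref{assum:X}(i), the diagonal being $P$-null), Assumption~\ref{assum:X}(ii) gives $P(A)<1$, so $0<P(A)<1$; mixing $P$ with a measure placing less mass on $A$ produces, by properties~\textbf{D1}--\textbf{D2}, a $Q$ in the $\eta$-ball with $Q(A)<P(A)$, whence $X_i^{\min}(\theta)<X_i(\theta)$.

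The main obstacle is not the linear-programming step but the bookkeeping that turns these pointwise minimizers into one measurable, uniquely optimal transfer rule. One must confirm measurability of $\theta\mapsto(w^*(\theta),l^*(\theta))$ (inherited from measurability of $X_i,X_i^{\min}$ and absolute continuity of $\theta\mapsto\int_{\underline\theta}^\theta X_i^{\min}$), and handle the degenerate types where $X_i^{\min}(\theta)=X_i(\theta)$, for which the program is flat and the interim minimizer is not unique. These are exactly the top type $\bar\theta$ (a single point, $P$-null since $P$ is atomless) and the never-winning types $X_i(\theta)=0$; for the latter $t_i^w(\theta)$ multiplies $x_i(\theta,\cdot)=0$ a.e.\ and so does not affect the transfer rule. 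Making precise that uniqueness therefore survives at the level of transfer rules, despite the non-uniqueness of the $(w,l)$-representation on this $P$-null or payoff-irrelevant set, is where the care is needed.
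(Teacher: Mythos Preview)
Your proposal is correct and follows essentially the same route as the paper: invoke Theorem~\ref{thm:main}, separate the reduced problem into the two-dimensional interim program~\eqref{eq:two_dim_winfav}, and solve the latter by observing that the objective decreases as $(w,l)$ moves toward the boundary $l=w-\theta$ (your derivative computation is exactly the quantitative form of inequality~\eqref{eq:w_l_intuition} and Figure~\ref{fig:solution_w_l_APA}). Your treatment is in fact more careful than the paper's own sketch, since you explicitly verify the strict inequality $X_i^{\min}(\theta)<X_i(\theta)$ via properties~\textbf{D1}--\textbf{D2} and address the measurability and degenerate-type issues needed for uniqueness at the level of transfer rules.
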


Again, consider the case of the efficient allocation rule. The optimal mechanism $(x, t^*)$ satisfies two properties: first, it is win-lose dependent; second, each bidder pays the same amount whether she wins or loses. A feasible and efficient mechanism with these properties is precisely the all-pay auction.\footnote{Alternatively, one can directly verify that expression \eqref{eq:sol_winning_favored} is the equilibrium bidding strategy in the all-pay auction \citep{Stong18}.}

\begin{corollary} \label{cor:APA}
In Proposition \ref{prop:winner_favored}, suppose further that $x$ is the efficient allocation rule. Then, the all-pay auction implements $(x, t^*)$.
\end{corollary}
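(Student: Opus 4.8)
The plan is to present the all-pay auction as an indirect mechanism whose symmetric equilibrium reproduces the direct mechanism $(x,t^*)$ of Proposition \ref{prop:winner_favored}. Since $x$ is efficient the problem is symmetric, so I drop the subscript $i$ and write $X^{\min}(\theta)=\inf_Q\{Q([\underline\theta,\theta]):D(Q\|P)\le\eta\}$. By \eqref{eq:sol_winning_favored}, $t^*$ is win-lose dependent with a common value $\beta(\theta):=t^w(\theta)=t^l(\theta)=\theta X^{\min}(\theta)-\int_{\underline\theta}^\theta X^{\min}(z)\,dz$; that is, every type pays $\beta(\theta)$ whether she wins or loses. I read $\beta$ as a bidding strategy---type $\theta$ submits the bid $\beta(\theta)$---and aim to show that this is an equilibrium of the all-pay auction whose allocation and transfers coincide with $(x,t^*)$.

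First I would check that $\beta$ is non-decreasing, so that ``highest type wins'' matches ``highest bid wins'' and the induced allocation is the efficient rule $x$. For $\theta>\theta'$, monotonicity of $X^{\min}$ (Assumption \ref{assum:X}(iii)) gives $\int_{\theta'}^{\theta}X^{\min}(z)\,dz\le(\theta-\theta')X^{\min}(\theta)$, whence $\beta(\theta)-\beta(\theta')\ge\theta'\bigl[X^{\min}(\theta)-X^{\min}(\theta')\bigr]\ge0$ using $\theta'>0$. Granting monotonicity, the all-pay format makes each bidder pay her bid regardless of the outcome, so the induced direct mechanism is win-lose dependent with $t^w=t^l=\beta$, which is exactly $t^*$.

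It then remains to verify that bidding $\beta(\theta)$ is a best response when the opponent bids according to $\beta$, restricting as usual to non-negative bids. A deviation to a bid $\beta(\hat\theta)$ in the range of $\beta$ replicates, outcome by outcome, a report of $\hat\theta$ in $(x,t^*)$: the winning event $\{\theta':\hat\theta>\theta'\}$ and the constant payment $\beta(\hat\theta)$ are the same in both. Since the payment does not depend on the outcome and winning is better than losing, the worst-case belief minimizes the winning probability, so the deviation yields worst-case utility $\theta X^{\min}(\hat\theta)-\beta(\hat\theta)$, equal to the interim worst-case utility of reporting $\hat\theta$ in $(x,t^*)$; incentive compatibility of $(x,t^*)$, guaranteed by its feasibility in Proposition \ref{prop:winner_favored}, rules out all such deviations. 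Bids outside the range are dispatched by dominance: a bid above $\beta(\bar\theta)$ wins almost surely but costs strictly more than $\beta(\bar\theta)$, which already wins almost surely, while a bid below $\beta(\underline\theta)=0$ is excluded by non-negativity. Equivalently, one may derive $\beta$ from scratch: the equilibrium relation $U^{\min}(\theta)=\theta X^{\min}(\theta)-\beta(\theta)$ together with the envelope formula $U^{\min}(\theta)=\int_{\underline\theta}^\theta X^{\min}(z)\,dz$ of Proposition \ref{prop:feasibility} returns \eqref{eq:sol_winning_favored} upon rearrangement.

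The step I expect to be most delicate is the behavior of $\beta$ where $X^{\min}$ is flat, so that a range of low types pools at a common bid and the all-pay tie-breaking rule need not reproduce the efficient allocation pointwise (the analogue of the reserve-price threshold in Section \ref{subsec:limited_prem}). I would argue this is immaterial for the conclusion. On any pooling region every type pays the same constant $\beta$ regardless of the outcome, so the interim transfers---and hence the revenue---coincide with those of $(x,t^*)$ however ties are broken. Moreover, flatness of $X^{\min}$ forces the worst-case belief to assign vanishing mass to the pooling region, so beating the pool confers no worst-case gain; the interim worst-case utilities therefore also coincide and no pooled type wishes to outbid the pool, confirming the equilibrium. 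For the leading examples (the contamination and relative-entropy divergences) $X^{\min}$ is strictly increasing, so no pooling occurs and the identification is exact. Once this is settled, the monotonicity check and the off-range deviations are routine.
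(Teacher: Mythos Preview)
Your approach matches the paper's: the text notes that a feasible efficient mechanism with $t^w=t^l$ is precisely the all-pay auction, and the accompanying footnote says one can directly verify that \eqref{eq:sol_winning_favored} is the equilibrium bidding strategy---which is exactly the route you take, with the monotonicity and deviation arguments carried out correctly. One small correction to your final paragraph: under the relative entropy divergence $X^{\min}$ is \emph{not} strictly increasing, since $\Psi_\eta(p)=0$ for all $p\le 1-e^{-\eta}$ (set $q=0$ in \eqref{eq:Psi_def}), so an initial segment of types pools at bid zero; your general pooling argument already covers this case, so only the parenthetical about the ``leading examples'' needs amending.
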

\noindent As mentioned in the introduction, Corollary \ref{cor:APA} implies that within the class of auctions where the loser pays her own bid, the all-pay auction revenue dominates any auctions (i) in which the winner pays any amounts between the second-highest bid and her own bid, and (ii) in which the winner receives a partial or full reimbursement of her bid.

Corollary \ref{cor:APA} also extends to the case where the allocation rule excludes bidders with valuations below $r \geq 0$. In this case, the optimal mechanism can be implemented by the all-pay auction with reserve price
\[
r \cdot \inf_Q \{ Q([\underline \theta, r]): D(Q||P) \leq \eta \}.
\]

\section{Optimal allocation rules} \label{sec:opt_alloc_rule}

Sections \ref{sec:main}-\ref{sec:constraints_applications} focus on finding the optimal transfer rule, taking the allocation rule as exogenously given. In this section, under a simplifying assumption that the set of priors is defined by the contamination divergence (Example \ref{ex:div}), we endogenously determine the optimal allocation rule. We show that the main results in Section \ref{sec:constraints_applications} remain valid under endogenous allocation: (i) in the class of no premium mechanisms, the first-price auction with a suitable reserve price is optimal; and, (ii) in the class of winner-favored mechanisms, the all-pay auction with a suitable reserve price is optimal (Corollary \ref{cor:optimal_mechanism_endogenous_x}).

In the case of the contamination divergence, the worst-case utility is simply the weighted average of the expected payoff under $P$ and the lowest possible payoff: for a payoff schedule $\pi$,
\[
\inf_Q \{ \mathbb{E}_Q[\pi]: D(Q||P) \leq \eta\} = (1-\eta) \cdot \mathbb{E}_P[\pi] + \eta \cdot \textnormal{ess\,inf}_P \, \pi.
\]
Accordingly, the minimum winning probability simplifies to
\begin{equation} \label{eq:Xmin_contamination}
X_i^{\min}(\theta) = (1-\eta) X_i(\theta) + \eta \cdot \textnormal{ess\,inf}_P \, x_i(\theta, \cdot) = (1-\eta) X_i(\theta) \quad \text{for $\theta<\bar \theta$.}
\end{equation}

Now, we find the optimal mechanism in the class of no premium mechanisms, where the allocation rule is determined endogenously. For a given $x$, by Proposition \ref{prop:limited_prem} and equation \eqref{eq:Xmin_contamination}, the optimal winning and losing transfers are
\[
t_i^w(\theta) = \left\{ \begin{array} {ll}
\theta &\text{if $X_i(\theta)=0$}\\
\theta - \int_{\underline \theta}^\theta \frac{X_i^{\min}(z)}{X_i^{\min}(\theta)} dz = \theta - \int_{\underline \theta}^\theta \frac{X_i(z)}{X_i(\theta)} dz &\text{if $X_i(\theta)>0$}
\end{array} \right. \text{ and } t_i^l(\theta) = 0.
\]
Hence, the optimal ex-ante expected revenue for a given $x$ is
\[
\sum_i \int_\Theta [ t_i^w(\theta) X_i(\theta) + t_i^l(\theta) (1-X_i(\theta)) ] dP(\theta) = \sum_i \int_\Theta [ \theta X_i(\theta) - \int_{\underline \theta}^\theta X_i(z) dz ] dP(\theta).
\]
To find the optimal allocation, we maximize this expression with respect to $x$:

\medskip

\noindent \textnormal{\textbf{Optimal Allocation Problem \eqref{eq:opt_alloc_rule}, No Premium.}}
\begin{equation} \label{eq:opt_alloc_rule}
\sup_x \{ \sum_i \int_\Theta [ \theta X_i(\theta) - \int_{\underline \theta}^\theta X_i(z) dz ] dP(\theta): \text{$x$ satisfies Assumption \ref{assum:X}} \}. \tag{$\mathbf{A}$}
\end{equation}

We solve problem \eqref{eq:opt_alloc_rule} as in \cite{Myer81}. Integrating by parts, we rewrite the objective function as
\begin{align*}
&\sum_i \int_\Theta [ \theta X_i(\theta) - \int_{\underline \theta}^\theta X_i(z) dz ] dP(\theta) = \sum_i \int_\Theta \left( \theta - \frac{1-F(\theta)}{f(\theta)} \right) X_i(\theta) dP(\theta) \\
&= \int_{\Theta^2} \left[ \left(\theta_1 - \frac{1-F(\theta_1)}{f(\theta_1)}\right) x_1(\theta_1, \theta_2) + \left(\theta_2 - \frac{1-F(\theta_2)}{f(\theta_2)}\right) x_2(\theta_2, \theta_1) \right] dP^2(\theta_1, \theta_2),
\end{align*}
where $F$ and $f$ denote the cumulative distribution and probability density functions of $P$. The standard argument yields the following optimal allocation rule:

\begin{proposition} \label{prop:optimal_allocation_rule_contamination}
Let $D$ be the contamination divergence. Suppose that $F$ is absolutely continuous, $f>0$ on $\Theta$ and the function $\theta \mapsto \theta - \frac{1-F(\theta)}{f(\theta)}$ is strictly increasing in $\theta$. Define an allocation rule $x^*$ as
\[
x^*_i(\theta, \theta') := \begin{cases}
	1 &\text{if $\theta > \theta'$ and $\theta \geq r^*$}\\
	\frac{1}{2} &\text{if $\theta = \theta'$ and $\theta \geq r^*$}\\
	0 &\text{otherwise,}
\end{cases}	\quad \text{where $r^* := \sup\{ \theta: \theta - \frac{1-F(\theta)}{f(\theta)} \leq 0\}$.}
\]
Then, $x^*$ is the solution to the optimal allocation problem \eqref{eq:opt_alloc_rule}.
\end{proposition}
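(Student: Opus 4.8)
The plan is to follow the classical pointwise argument of \citet{Myer81}, exploiting the virtual-value reformulation displayed just before the statement. Write $\psi(\theta) := \theta - \frac{1-F(\theta)}{f(\theta)}$ for the virtual value. The first thing to record is that the integration-by-parts identity rewriting the objective of \eqref{eq:opt_alloc_rule} as
\[
\int_{\Theta^2} \big[ \psi(\theta_1)\, x_1(\theta_1,\theta_2) + \psi(\theta_2)\, x_2(\theta_2,\theta_1) \big]\, dP^2(\theta_1,\theta_2)
\]
uses only Fubini's theorem and the absolute continuity of $F$, and therefore holds for \emph{every} bounded measurable allocation rule, not merely those satisfying Assumption \ref{assum:X}. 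This lets me relax the feasible set: I will maximize this integral subject only to the probability constraints of Definition \ref{def:direct_mech} (namely $x_i \ge 0$ and $x_1(\theta,\theta')+x_2(\theta',\theta) \le 1$), which enlarges the constraint set and hence produces an upper bound on the value of \eqref{eq:opt_alloc_rule}.

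Next I would maximize the integrand pointwise. For each fixed profile $(\theta_1,\theta_2)$ the coupled variables $a := x_1(\theta_1,\theta_2)$ and $b := x_2(\theta_2,\theta_1)$ range over the simplex $\{(a,b): a,b \ge 0,\ a+b \le 1\}$, and $\psi(\theta_1) a + \psi(\theta_2) b$ is linear; its maximum places all mass on the coordinate carrying the larger \emph{positive} virtual value, and equals $0$ when both virtual values are non-positive. Since $\psi$ is strictly increasing, the larger virtual value belongs to the higher type, and the definition $r^* = \sup\{\theta: \psi(\theta) \le 0\}$ gives $\psi(\theta) > 0$ for $\theta > r^*$ and $\psi(\theta) < 0$ for $\theta < r^*$. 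Hence the pointwise maximizer awards the object to the higher type whenever that type exceeds $r^*$ and to no one otherwise---which is precisely $x^*$, up to the diagonal $\{\theta_1 = \theta_2\}$ that is $P^2$-null because $P$ is atomless and so irrelevant to the integral. Because $x^*$ is explicitly measurable and attains the pointwise maximum a.e., it attains the relaxed maximum.

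It remains to check that $x^*$ actually lies in the smaller feasible set, i.e.\ satisfies Assumption \ref{assum:X}; this closes the gap between the relaxed upper bound and \eqref{eq:opt_alloc_rule}. Part (i) is immediate, since $x^*_i(\theta,\theta') \in \{0,1\}$ whenever $\theta \ne \theta'$. For parts (ii) and (iii) I would compute the reference winning probability $X_i^*(\theta) := \int_\Theta x_i^*(\theta,\theta')\,dP(\theta')$, which equals $F(\theta)$ for $\theta \ge r^*$ and $0$ for $\theta < r^*$ (the diagonal again contributing nothing by atomlessness, using continuity of $F$). Part (ii) then follows because $f > 0$ makes $F$ strictly increasing, so $X_i^*(\theta) = F(\theta) = 1$ only at $\theta = \bar\theta$; part (iii) follows from the contamination identity \eqref{eq:Xmin_contamination}, which expresses the minimum winning probability as $(1-\eta) X_i^*(\theta)$, a non-decreasing function of $\theta$. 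Once Assumption \ref{assum:X} is verified, $x^*$ belongs to the constraint set of \eqref{eq:opt_alloc_rule} and attains the relaxed upper bound, hence solves \eqref{eq:opt_alloc_rule}.

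I expect no single difficult step; the only points demanding care are the measure-zero treatment of the diagonal and of ties, and confirming that the relaxed pointwise optimum genuinely meets the monotonicity requirement Assumption \ref{assum:X}(iii). This last point is where the contamination-specific formula \eqref{eq:Xmin_contamination} is essential: it renders the minimum winning probability proportional to the reference winning probability, so monotonicity of $F$ transfers directly, whereas for a general divergence $X_i^{\min}$ is a nonlinear functional of $x$ and need not inherit the monotonicity of $X_i^*$.
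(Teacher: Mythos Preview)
Your proposal is correct and follows precisely the approach the paper intends: the paper rewrites the objective via integration by parts into the virtual-value form and then simply invokes ``the standard argument'' of \citet{Myer81}, which is exactly the relax-then-pointwise-maximize-then-verify-feasibility argument you have spelled out. Your treatment of the diagonal, the verification of Assumption~\ref{assum:X}, and the use of the contamination identity \eqref{eq:Xmin_contamination} for part~(iii) are all appropriate and fill in details the paper leaves implicit.
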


The optimal transfer rule corresponding to the optimal allocation rule $x^*$ in Proposition \ref{prop:optimal_allocation_rule_contamination} can be implemented by the first-price auction with reserve price $r^*$ (see Section \ref{subsec:limited_prem}). A similar argument shows that the optimal mechanism in the class of winner-favored mechanisms can be implemented by the all-pay auction with a reserve price. Corollary \ref{cor:optimal_mechanism_endogenous_x} summarizes our results.

\begin{corollary} \label{cor:optimal_mechanism_endogenous_x}

Under the assumptions of Proposition \ref{prop:optimal_allocation_rule_contamination}, the following statements hold:

\noindent (i) The first-price auction with reserve price $r^*$ is optimal in the class of no premium mechanisms satisfying Assumption \ref{assum:X}.

\noindent (ii) The all-pay auction with reserve price $(1-\eta) r^* F(r^*)$ is optimal in the class of winner-favored mechanisms satisfying Assumption \ref{assum:X}.

\end{corollary}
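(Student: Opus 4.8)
The plan is to \emph{decouple} the joint optimization over $(x,t)$ into an inner transfer problem and an outer allocation problem. For each allocation rule $x$ satisfying Assumption~\ref{assum:X}, the results of Sections~\ref{sec:main}--\ref{sec:constraints_applications} already pin down the revenue-maximizing transfer and hence the maximal attainable revenue as a functional of $x$ alone; it then remains to maximize this functional over $x$. Since $\sup_{(x,t)} = \sup_x \sup_t$, establishing optimality in each class reduces to three steps: (a) write the optimal interim revenue for a fixed $x$, (b) maximize it over $x$, and (c) identify a standard auction format that implements the resulting mechanism. Parts~(i) and~(ii) differ only in which transfer-optimization result feeds step~(a).

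For part~(i), I would proceed exactly as the discussion preceding Proposition~\ref{prop:optimal_allocation_rule_contamination} indicates. Fixing $x$ and applying Proposition~\ref{prop:limited_prem} with $K=0$ together with the contamination identity~\eqref{eq:Xmin_contamination}, the optimal no-premium revenue equals $\sum_i \int_\Theta [\theta X_i(\theta) - \int_{\underline\theta}^\theta X_i(z)\,dz]\,dP(\theta)$, which is precisely the objective of the Optimal Allocation Problem~\eqref{eq:opt_alloc_rule}. Proposition~\ref{prop:optimal_allocation_rule_contamination} solves that problem and yields $x^*$, the efficient rule with reserve $r^*$. Since $x^*$ is efficient above the reserve, Corollary~\ref{cor:FPA_FIA}(i) in its reserve-price form (noted at the end of Section~\ref{subsec:limited_prem}) shows that the first-price auction with reserve $r^*$ implements $(x^*, t^*)$, which gives~(i).

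For part~(ii), the cleanest route is to observe that the outer allocation problem is the \emph{same} up to a positive multiplicative constant. Fixing $x$ and using Proposition~\ref{prop:winner_favored}, the equality $t_i^w = t_i^l$ makes the interim revenue independent of the winning probability, so $T_i(\theta) = \theta X_i^{\min}(\theta) - \int_{\underline\theta}^\theta X_i^{\min}(z)\,dz$; substituting~\eqref{eq:Xmin_contamination} gives
\[
\sum_i \int_\Theta \Big[\theta X_i^{\min}(\theta) - \int_{\underline\theta}^\theta X_i^{\min}(z)\,dz\Big]\,dP(\theta) = (1-\eta)\sum_i \int_\Theta \Big[\theta X_i(\theta) - \int_{\underline\theta}^\theta X_i(z)\,dz\Big]\,dP(\theta).
\]
Thus the winner-favored allocation objective is $(1-\eta)$ times the no-premium objective, so the very same maximizer $x^*$ from Proposition~\ref{prop:optimal_allocation_rule_contamination} is optimal here as well. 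Applying Corollary~\ref{cor:APA} in its reserve-price form then identifies the implementing format as the all-pay auction, whose reserve price is $r^* \cdot \inf_Q\{Q([\underline\theta, r^*]): D(Q||P) \leq \eta\} = (1-\eta) r^* F(r^*)$ under the contamination divergence, matching the stated value.

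The routine but necessary checks, which I expect to be the main (if modest) obstacle, are the following: verifying that $\sup_{(x,t)} = \sup_x \sup_t$ legitimately reduces the joint problem, so that optimizing the transfer pointwise in $x$ loses nothing; confirming that $x^*$ satisfies Assumption~\ref{assum:X}---in particular that $X_i^{\min} = (1-\eta) X_i$ is non-decreasing and that $X_i(\theta) = 1$ only at $\theta = \bar\theta$---so that the transfer-optimization results genuinely apply at $x = x^*$; and computing $\inf_Q Q([\underline\theta, r^*]) = (1-\eta) F(r^*)$ from the contamination representation (using $r^* < \bar\theta$, which holds because the virtual value is positive at $\bar\theta$ since $\underline\theta > 0$) to pin down the all-pay reserve. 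None of these is deep, and the structural insight that the two allocation objectives coincide up to the factor $(1-\eta)$ is what makes the second format fall out without re-solving the allocation problem.
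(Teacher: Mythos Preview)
Your proposal is correct and follows essentially the same route as the paper: decouple into an inner transfer problem (solved by Propositions~\ref{prop:limited_prem} and~\ref{prop:winner_favored}) and an outer allocation problem, then invoke Proposition~\ref{prop:optimal_allocation_rule_contamination} and the reserve-price extensions of Corollaries~\ref{cor:FPA_FIA} and~\ref{cor:APA}. Your explicit observation that the winner-favored allocation objective equals $(1-\eta)$ times the no-premium objective is a clean way to see why the same $x^*$ works in part~(ii); the paper merely says ``a similar argument'' without spelling this out, so you have in fact supplied the omitted detail rather than departed from the intended line.
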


\com{

\section{Optimal revenue and ambiguity} \label{sec:comparative_statics}

Does ambiguity benefit the seller? In this section, we study this question in the case of limited premium transfers (Section \ref{subsec:limited_prem}). Existing studies find that, in the standard setup of unlimited premiums ($K=\infty$), ambiguity raises the optimal revenue \citep{Bo06, Bodoh12}. This is because the bidder's share of the surplus is the same as her worst-case utility (see equation \eqref{eq:comparative_statics_K=infty}), and ambiguity reduces the worst-case utility required by the feasibility constraint. However, when the premium is limited, the bidder's share of the surplus exceeds the worst-case utility (see equation \eqref{eq:comparative_statics_K=0}). This complication makes it difficult to determine whether ambiguity benefits the seller or not in general.

Nevertheless, focusing on the relative entropy case (Example \ref{ex:div}), we find that ambiguity still raises the optimal revenue, but through a different mechanism from existing studies. In Section \ref{subsec:worst_case_winning_prob}, we first investigate how the minimum winning probability changes with ambiguity. Then, in Section \ref{subsec:comparative_statics}, we use these results to show that the optimal revenue increases with the degree of ambiguity.

\subsection{The minimum winning probability} \label{subsec:worst_case_winning_prob}

Here, we study how the minimum winning probability changes with the degree of ambiguity. Lemma \ref{lem:KL_min_prob} states that the minimum probability of an event over the set of priors can be expressed as a transformation, denoted as $\Psi_\eta$, of the probability under the reference belief.

\begin{lemma} \label{lem:KL_min_prob}
Suppose that $D$ is the relative entropy. For any Borel set $E$, we have
\begin{equation} \label{eq:KL_min_prob}
\inf_Q \{ Q(E): D(Q||P) \leq \eta\} = \Psi_\eta(P(E)),
\end{equation}
where $\Psi_\eta: [0, 1] \rightarrow [0, 1]$ is defined as follows: $\Psi_\eta(0) := 0$, $\Psi_\eta(1) := 1$ and
\begin{equation} \label{eq:Psi_def}
\Psi_\eta(p) := \inf_{q \in [0, 1]} \{ q: q \log \frac{q}{p} + (1-q) \log \frac{1-q}{1-p} \leq \eta\}, \quad \text{for $p \in (0, 1)$.}
\end{equation}
Furthermore, $\Psi_\eta(p)$ is non-decreasing in $p$ and non-increasing $\eta$.
\end{lemma}
\begin{proof}
See Appendix \ref{appen:comparative_statics}.
\end{proof}

\begin{figure} [t]
\centering
\includegraphics[scale=.8]{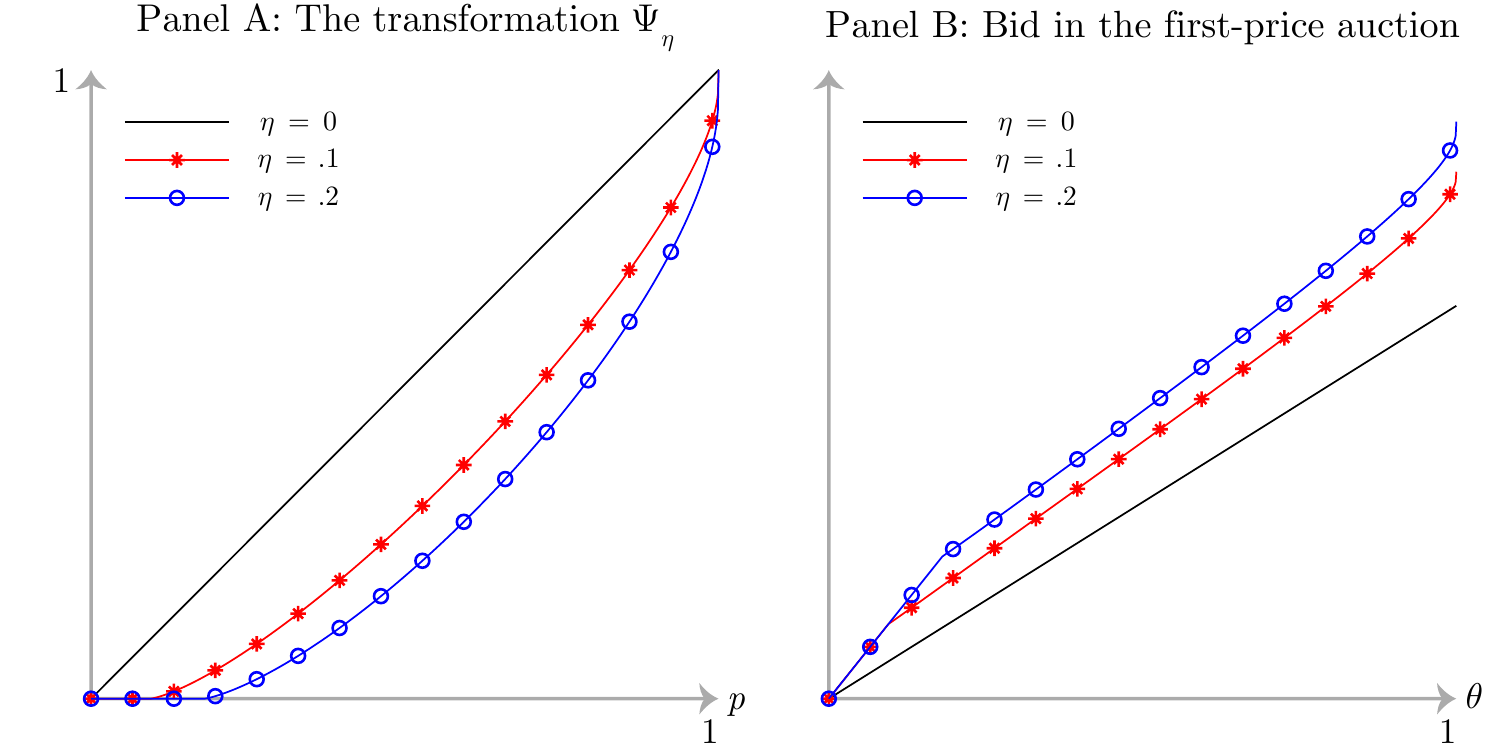}
\caption{\textbf{Impacts of ambiguity on the minimum winning probability and the revenue.} Panel A shows the graph of $\Psi_\eta$. Panel B depicts the equilibrium bidding strategy in the first-price auction, where the horizontal axis represents the type.} \label{fig:Psi_FPA}
\end{figure}

\noindent By Lemma \ref{lem:KL_min_prob}, the bidder's minimum winning probability is given by
\begin{equation} \label{eq:Xmin_Psi_X}
X_i^{\min}(\theta) = \inf_Q \{ Q\{\theta': x_i(\theta, \theta') = 1\}: D(Q||P) \leq \eta\} = \Psi_\eta(X_i(\theta)).\footnote{Although we focus on the relative entropy case, Lemma \ref{lem:KL_min_prob} holds more generally. For instance, in the contamination divergence case (Example \ref{ex:div}), equations \eqref{eq:KL_min_prob} and \eqref{eq:Xmin_Psi_X} hold with $\Psi_\eta(p) := (1-\eta)p$ for $0 \leq p < 1$ and $\Psi_\eta(1):=1$.}
\end{equation}

Panel A of Figure \ref{fig:Psi_FPA} depicts the transformation $\Psi_\eta$ for different values of $\eta$. Expression $q \log \frac{q}{p} + (1-q) \log \frac{1-q}{1-p}$ in equation \eqref{eq:Psi_def} is the relative entropy between two Bernoulli distributions with success probabilities $q$ and $p$. Hence Lemma \ref{lem:KL_min_prob} simply states that the minimum probability of $E$ equals the minimum success probability of a Bernoulli trial with the reference success probability $p = P(E)$.

Next, we study how the transformation $\Psi_\eta$ changes with ambiguity. Lemma \ref{lem:Psi_more_convex} states that $\Psi_\eta$ becomes more convex as ambiguity increases. Thus, Lemmas \ref{lem:KL_min_prob}-\ref{lem:Psi_more_convex} together imply that the minimum winning probability function $X_i^{\min} = \Psi_\eta \circ X_i$ (see equation \eqref{eq:Xmin_Psi_X}) becomes more convex as ambiguity increases.

\begin{lemma} \label{lem:Psi_more_convex}
Suppose $D$ is the relative entropy. For $\eta_H > \eta_L$, $\Psi_{\eta_H}$ is a convex transformation of $\Psi_{\eta_L}$; i.e., there exists a convex non-decreasing function $c: [0, 1] \rightarrow [0, 1]$ such that $\Psi_{\eta_H} = c \circ \Psi_{\eta_L}$.
\end{lemma}
\begin{proof}
See Appendix \ref{appen:comparative_statics}.
\end{proof}

\noindent In other words, for $\eta_H > \eta_L$, $\Psi_{\eta_H}$ likelihood ratio dominates $\Psi_{\eta_L}$ \citep[see][Section 1.C]{Sha94}.

To understand the implication of Lemma \ref{lem:Psi_more_convex}, consider an efficient win-lose dependent mechanism. Then a type $\theta$ bidder's winning probability under her worst-case belief is $X_i^{\min}(\theta) = \Psi_\eta(F(\theta))$ (see equation \eqref{eq:Umin_w_l}). As a result, each bidder behaves as if she were an expected utility maximizer who believes that the competitor's valuation is drawn from a cdf $\Psi_\eta \circ F$.\footnote{We caution that $\Psi_\eta \circ F$ is \textit{not} the cdf of any type's worst-case belief. Instead, $\Psi_\eta \circ F$ is equal to the lower envelope of the cdfs of all types' worst-case beliefs.} Lemma \ref{lem:Psi_more_convex} implies that, for $\eta_H>\eta_L$, $\Psi_{\eta_H} \circ F$ likelihood ratio dominates $\Psi_{\eta_L} \circ F$. This means that, as the degree of ambiguity increases, each bidder behaves as if her competitor's valuation is more likely to be high in the sense of likelihood ratio dominance.

\subsection{The impact of ambiguity on the optimal revenue} \label{subsec:comparative_statics}

In this section, we show that ambiguity raises the optimal revenue. To do this, we introduce an assumption slightly stronger than Assumption \ref{assum:X_monotonicity}:

\begingroup
\renewcommand\theassumption{X2$'$}
\begin{assumption} \label{assum:X_ref_monotonicity}
For each $i$, $X_i(\theta)$ is non-decreasing in $\theta$.
\end{assumption}
\endgroup

\noindent By Lemma \ref{lem:KL_min_prob}, Assumption \ref{assum:X_ref_monotonicity} ensures that Assumption \ref{assum:X_monotonicity} holds for every $\eta$.

Proposition \ref{prop:comparative_statics} states the main result of this section:
\begin{proposition} \label{prop:comparative_statics}
Let $x$ satisfying Assumptions \ref{assum:X_regularity} and \ref{assum:X_ref_monotonicity} be given. Suppose $D$ is the relative entropy and $\mathcal{T}$ is the class of $K$-limited premium transfers. Then, the expected revenue generated by the optimal transfer rule in Proposition \ref{prop:limited_prem} is non-decreasing in $\eta$.
\end{proposition}
\begin{proof}
See Appendix \ref{appen:comparative_statics}.
\end{proof}

\noindent Panel B of Figure \ref{fig:Psi_FPA} illustrates Proposition \ref{prop:comparative_statics}; it shows that the equilibrium bid in the first-price auction---or the  winning transfer under the optimal mechanism in the class of efficient no premium mechanisms (Corollary \ref{cor:FPA_FIA} (i))---increases with ambiguity. Proposition \ref{prop:comparative_statics} implies that, if the seller can manipulate the degree of ambiguity (e.g., by disseminating misleading information about the bidders' type distributions), then she benefits from creating ambiguity.

We explain the intuition of Proposition \ref{prop:comparative_statics} by examining two extreme cases: (i) the unlimited premium case ($K = \infty$) which corresponds to the standard setup studied in the existing literature \citep{Bo06, Bodoh12}, and (ii) the no premium case ($K = 0$). In the case $K=\infty$, the optimal interim revenue for a given $x$ is (see Table \ref{tab:summary_corollary})
\begin{equation} \label{eq:comparative_statics_K=infty}
\underbrace{T_{i, \eta}^*(\theta)}_{\text{revenue}} = \underbrace{\theta X_i(\theta)}_{\text{surplus}} - \underbrace{\int_{\underline \theta}^\theta \Psi_\eta(X_i(z)) dz.}_{\substack{\text{bidder's share of the surplus}\\\text{$=$ worst-case utility}}}
\end{equation}
The bidder's share of the surplus equals her worst-case utility in this case. Since the worst-case utility $\int_{\underline \theta}^\theta \Psi_\eta(X_i(z)) dz$ decreases in ambiguity (Lemma \ref{lem:KL_min_prob}), so does the bidder's share of the surplus; thus the revenue increases in ambiguity.

Next, suppose $K = 0$.  Then the optimal interim revenue for a given $x$ becomes (see Table \ref{tab:summary_corollary})
\begin{equation} \label{eq:comparative_statics_K=0}
\underbrace{T_{i, \eta}^*(\theta)}_{\text{revenue}} = \underbrace{\theta X_i(\theta)}_{\text{surplus}} - \underbrace{ \frac{X_i(\theta)}{\Psi_\eta(X_i(\theta))} \overbrace{\int_{\underline \theta}^\theta \Psi_\eta(X_i(z)) dz.}^{\text{worst-case utility}}}_{\text{bidder's share of the surplus}}
\end{equation}
Note that, compared to equation \eqref{eq:comparative_statics_K=infty}, the bidder's share of the surplus exceeds the worst-case utility by a factor of $X_i(\theta)/\Psi_\eta(X_i(\theta))>1$. Due to the limit on premiums, the bidder's share of the surplus required to guarantee the bidder with a given level of worst-case utility is greater than in the unlimited premium case. As a result, ambiguity now has two opposing effects on the revenue: (i) it reduces the bidder's worst-case utility $\int_{\underline \theta}^\theta \Psi_\eta(X_i(z)) dz$, and (ii) it raises the unit cost of providing the worst-case utility $X_i(\theta)/\Psi_\eta(X_i(\theta))$. In general, it is unclear which of the two effects is stronger.

However, in the case of the relative entropy, the reduction in the worst-case utility turns out to dominate the additional effect by the factor $X_i(\theta)/\Psi_\eta(X_i(\theta))$, and thus ambiguity raises the seller's revenue as in the existing literature. More precisely, Lemma \ref{lem:Psi_more_convex} implies that ambiguity entails greater proportionate reductions in the minimum winning probabilities for lower valuation bidders; i.e.,
\[
\frac{\Psi_{\eta_H}(X_i(z))}{\Psi_{\eta_L}(X_i(z))} \leq \frac{\Psi_{\eta_H}(X_i(\theta))}{\Psi_{\eta_L}(X_i(\theta))} \quad \text{for $\eta_H>\eta_L$ and $z \leq \theta$}.
\]
Therefore, as ambiguity increases, the proportionate fall in the worst-case utility $\int_{\underline \theta}^\theta \Psi_\eta(X_i(z)) dz$ dominates the proportionate rise in the unit cost of providing interim worst-case utility $X_i(\theta)/\Psi_\eta(X_i(\theta))$.
}

\section{Discussion} \label{sec:diss}

\subsection{Related literature}\label{sec:literature}

\paragraph{Auctions with ambiguity} Our paper primarily contributes to the literature on auctions with ambiguity. Early works study sealed-bid IPV auctions where bidders' preferences follow the MMEU model \citep{Lo98, Bo06, Bodoh12}. Recent papers examine alternative setups: dynamic mechanisms \citep{Bo09, Aus20, Ghosh21}, ambiguity over the joint distribution of valuations and signals \citep{Lao19}, or alternative models of preferences under ambiguity \citep{Chi15, Kocyi20}.\footnote{\cite{Bo09} show that if dynamic mechanisms are allowed, the seller can extract almost all surplus using a variant of the Dutch auction. \cite{Aus20} find that under ambiguity, the strategic equivalence between the first-price auction and the Dutch auction breaks down, because bidders gradually learn about other bidders' types in the latter. \cite{Ghosh21} studies sequential auctions and show that bidders' ambiguity aversion can provide an explanation of the ``declining price anomaly''. \cite{Lao19} study the two-bidder, binary-value and binary-signal setup where the joint distribution of valuations and signals is ambiguous. They find that with common values, the optimal mechanism provides only partial insurance to bidders, in contrast to \cites{Bo06} finding. Some studies assume that bidders' preferences follow \cites{Bew02} model instead of the MMEU model, and show that the second-price auction is desirable in terms of both revenue and efficiency \citep{Chi15, Kocyi20}.} Our paper studies the standard setup of sealed-bid IPV auctions with MMEU bidders which has been analyzed most extensively, but differs from existing works in that we focus on empirically plausible classes of transfers: the limited premium transfers (Section \ref{subsec:limited_prem}) and the winner-favored transfers (Section \ref{subsec:winner_favored}).

\paragraph{Information theory} Recently, concepts in information theory are being used in various economic contexts, such as robust control under model uncertainty in macroeconomics \citep{Han01, Han08}, rational inattention \citep{Sims03, Mat15} and Bayesian persuasion \citep{Gent14, Treu19}. Among them, the most closely related to our paper is the literature on robust control under model uncertainty; this literature studies decision processes maximizing the worst-case performance over a set of priors, given by the relative entropy neighborhood around the ``approximating model''.\footnote{This set of priors is also popular in the literature on robust optimization in operations research  \citep{Ben13, Duchi21} and uncertainty quantification in applied mathematics \citep{Gour20, Dupuis20}.} Our paper generalizes this set of priors by assuming the basic properties of the relative entropy (Assumption \ref{assum:div}) motivated from the statistics and information theory literature \citep{Kull51, Kull59, Ali66}. We use these properties to show that an ambiguity averse agent prefers payoff schedules with less variation (Proposition \ref{prop:SOSD}).

\paragraph{Limited premium mechanisms}

The class of limited premium mechanisms is related to the literature on premium auctions \citep{Mil04, Goe04, Hu11, Brun14, Hu18}. In practice, the premiums are typically limited and insufficient to provide the full insurance. For example, in Amsterdam auctions, bidders first compete in an ascending auction until all but two bidders drop out; then, the two finalists submit bids, and the loser receives a premium equal to 10-30\% of the difference between her bid and the price at which the first stage was terminated \citep{Goe04}. Our approach is in line with this literature in that we allow various amounts of maximum available premiums that the seller can provide.

\paragraph{Winner favored mechanisms}

The class of winner-favored mechanisms is related to the the literature on contests with reimbursements \citep{Coh08, Matros09, Matros12, Min20}, in which contestants' bids (or efforts) are sunk and the seller reimburses contestants' bids to encourage higher bids. In many cases, only the winner receives the reimbursement, or more generally, the winner receives greater amounts of reimbursements than the loser. For instance, the Bush administration's R\&D tax policy provides the winner with tax credits---a form of reimbursement \citep{Matros12, Min20}. As an another example, in Canadian elections, candidates may be reimbursed for up to 25\% of their expenses if the candidate receives 5\% of valid votes in the electoral division. In these cases, each contestant's bid net of the reimbursement---which corresponds to the transfer in our setup---is smaller when she wins than loses.



\subsection{Conclusion} \label{sec:conclusion}

This paper studies the optimal auction design problem where bidders' preferences follow the MMEU model. Assuming that the set of priors consists of beliefs that are close---in terms of a divergence---to the reference belief, we identify a class of optimal transfer candidates satisfying two properties: first, the transfer rule is win-lose dependent; second, a type $\theta$ bidder's interim worst-case utility equals the sum of minimum winning probabilities of all types below $\theta$. Using this result, we find the optimal transfer rule in two empirically plausible classes of transfers: (i) in the class of efficient no premium mechanisms, the first-price auction is optimal; and, (ii) in the class of efficient winner-favored mechanisms, the all-pay auction is optimal.

Our paper focuses on the ambiguity neutral seller. The difficulty in studying the ambiguity averse seller lies in extending Corollary \ref{cor:SOSD_implication}. Our proof of Corollary \ref{cor:SOSD_implication} relies on the fact that given a transfer, its conditional expectation with respect to winning and losing events provides the bidders with higher interim worst-case utilities than the original, while leaving the seller's revenue unchanged. When the seller is ambiguity averse, it is unclear whether the new transfer yields a higher or lower worst-case revenue for the seller than the original. We leave a closer investigation of this case to future research.

\bigskip

\appendix

\noindent\textbf{\Large{}Appendix}{\Large \par} \medskip

\section{Proofs for Section \ref{sec:model}} \label{appen:model}

We use the following well-known fact: if $\mathcal{E} \subset \mathcal{A}$ are $\sigma$-algebras and $P, Q \in \Delta(\Theta, \mathcal{A})$ satisfy $Q \ll P$, then
\begin{equation} \label{eq:restricted_measures_RN}
\frac{dQ_\mathcal{E}}{dP_\mathcal{E}} = \mathbb{E}_P \left[ \left. \frac{dQ}{dP} \right| \mathcal{E} \right].
\end{equation}

\begin{lemma} \label{lem:phi_div}
The $\phi$-divergence (Example \ref{ex:div}) satisfies Assumption \ref{assum:div}.
\end{lemma}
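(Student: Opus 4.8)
The plan is to verify the five properties \textbf{D1}--\textbf{D5} one at a time, treating the convex function $\phi$ and the likelihood ratio $f := \frac{dQ}{dP}$ as the basic objects. Three of the properties are essentially immediate. \textbf{D3} holds by fiat: the definition sets $D(Q||P) = \infty$ whenever $Q \not\ll P$, so finiteness forces $Q \ll P$. \textbf{D1} is a one-line computation: if $Q = P$ then $f = 1$ a.e., whence $D(Q||P) = \int_\Theta \phi(1)\, dP = 0$ since $\phi(1)=0$. For \textbf{D2}, with $f$ bounded the Radon--Nikodym derivative of $\epsilon Q + (1-\epsilon)P$ against $P$ is $1 + \epsilon(f-1)$, which stays in a fixed compact interval as $\epsilon$ ranges over $[0,1]$; since $\phi$ is continuous it is bounded on that interval, so $\phi(1+\epsilon(f-1))$ is dominated by a constant and converges pointwise as $\epsilon \to \epsilon_0$. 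Continuity of $\epsilon \mapsto \int_\Theta \phi(1+\epsilon(f-1))\,dP$ then follows from the dominated convergence theorem.

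The heart of the argument is \textbf{D4}, monotonicity under coarsening. I would first dispose of the case $D(Q||P) = \infty$, where the inequality is trivial, and assume $Q \ll P$. The key identity is the formula \eqref{eq:restricted_measures_RN} for restricted densities, which gives
\[
\frac{dQ_\mathcal{E}}{dP_\mathcal{E}} = \mathbb{E}_P[\,f \mid \mathcal{E}\,].
\]
Applying the conditional form of Jensen's inequality to the convex function $\phi$,
\[
\phi\!\left(\mathbb{E}_P[\,f\mid\mathcal{E}\,]\right) \;\leq\; \mathbb{E}_P\!\left[\,\phi(f)\mid \mathcal{E}\,\right] \quad \text{$P$-a.e.,}
\]
and then integrating both sides with respect to $P$ (legitimate since both sides are $\mathcal{E}$-measurable, so the integral agrees with integration against $P_\mathcal{E}$) and invoking the law of iterated expectations yields
\[
D(Q_\mathcal{E}||P_\mathcal{E}) = \int_\Theta \phi\!\left(\tfrac{dQ_\mathcal{E}}{dP_\mathcal{E}}\right) dP_\mathcal{E} \leq \int_\Theta \mathbb{E}_P[\phi(f)\mid\mathcal{E}]\, dP = \int_\Theta \phi(f)\, dP = D(Q||P).
\]
A small preliminary remark is needed so that conditional Jensen applies cleanly: since $\phi$ is convex with $\phi(1)=0$, it is bounded below by an affine function $\phi(x)\geq c(x-1)$ for a subgradient $c$ at $1$, and because $f$ is a $P$-density ($\int f\,dP = 1$) this forces the negative part of $\phi(f)$ to be $P$-integrable, so every integral above is well defined in $[0,\infty]$.

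Finally, \textbf{D5} is the equality case and falls out of the same identity. When $\frac{dQ_\mathcal{E}}{dP_\mathcal{E}} = \frac{dQ}{dP}$ a.e., the likelihood ratio $f$ is ($P$-a.e. equal to) an $\mathcal{E}$-measurable function, so $\phi(f)$ is likewise $\mathcal{E}$-measurable and its $P$-integral coincides with its $P_\mathcal{E}$-integral; thus $D(Q||P) = \int_\Theta \phi(f)\,dP = \int_\Theta \phi\!\left(\tfrac{dQ_\mathcal{E}}{dP_\mathcal{E}}\right) dP_\mathcal{E} = D(Q_\mathcal{E}||P_\mathcal{E})$. Equivalently, $\mathcal{E}$-measurability of $f$ makes the conditional Jensen step in \textbf{D4} an equality. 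The only genuinely delicate point in the whole argument is ensuring the integrability and well-definedness required for conditional Jensen in \textbf{D4}; once that is secured, each property reduces to a short measure-theoretic computation.
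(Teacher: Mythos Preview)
Your proof is correct and follows essentially the same route as the paper: the paper declares \textbf{D1}--\textbf{D3} and \textbf{D5} ``clear'' and proves \textbf{D4} exactly as you do, via the restricted-density identity \eqref{eq:restricted_measures_RN} and conditional Jensen. Your additional remarks on dominated convergence for \textbf{D2} and the affine lower bound ensuring integrability in \textbf{D4} are welcome details the paper omits.
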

\begin{proof}

As properties \textbf{D1}-\textbf{D3} and \textbf{D5} are clear, we verify \textbf{D4}. Let $\mathcal{E} \subset \mathcal{A}$ be $\sigma$-algebras and $P, Q \in \Delta(\Theta, \mathcal{A})$. By equation \eqref{eq:restricted_measures_RN},
\begin{align*}
D(Q_\mathcal{E}||P_\mathcal{E}) &= \int_\Theta \phi \left( \mathbb{E}_P \left[ \left. \frac{dQ}{dP} \right| \mathcal{E} \right] \right) dP \\
&\leq \int_\Theta \mathbb{E}_P \left[ \left. \phi\left( \frac{dQ}{dP} \right) \right| \mathcal{E} \right] dP = \int_\Theta \phi\left( \frac{dQ}{dP} \right) dP = D(Q||P),
\end{align*}
where the second inequality holds by Jensen's inequality for conditional expectations.

\end{proof}

\begin{lemma} \label{lem:contamination_div}
The contamination divergence (Example \ref{ex:div}) satisfies Assumption \ref{assum:div}. In addition, it generates the contamination model: for $P \in \Delta(\Theta, \mathcal{B})$,
\begin{equation} \label{eq:contamination_div_contamination_model}
\{ Q: D(Q||P) \leq \eta \} = \{ Q: Q = (1-\eta) P + \eta R \text{ for some $R \in \Delta(\Theta, \mathcal{B})$, $R \ll P$} \}.
\end{equation}
\end{lemma}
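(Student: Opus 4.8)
The plan is to treat the two assertions separately: first verifying that the contamination divergence obeys properties \textbf{D1}--\textbf{D5}, and then establishing the set identity \eqref{eq:contamination_div_contamination_model}. The organizing observation throughout is that, since $\frac{dQ}{dP}\geq 0$ whenever $Q\ll P$, we always have $1-\frac{dQ}{dP}\leq 1$, so the contamination divergence is finite (indeed bounded by $1$) on the absolutely continuous part; moreover the constraint $D(Q||P)\leq\eta$ is equivalent to the pointwise lower bound $\frac{dQ}{dP}\geq 1-\eta$ a.e.\ $P$. This equivalence is the engine behind both halves of the argument.

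Properties \textbf{D1} and \textbf{D3} I would read off the definition: if $Q=P$ then $\frac{dQ}{dP}=1$ a.e., so the essential supremum of $1-\frac{dQ}{dP}$ is $0$; and $D(Q||P)=\infty$ by convention whenever $Q\not\ll P$, which is the contrapositive of \textbf{D3}. For \textbf{D2}, a one-line computation gives $\frac{d(\epsilon Q+(1-\epsilon)P)}{dP}=1-\epsilon\bigl(1-\frac{dQ}{dP}\bigr)$, so that $D(\epsilon Q+(1-\epsilon)P||P)=\epsilon\,D(Q||P)$ for $\epsilon\in[0,1]$, which is linear and hence continuous in $\epsilon$. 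Properties \textbf{D4} and \textbf{D5} carry the content: here I would invoke \eqref{eq:restricted_measures_RN} to write $\frac{dQ_\mathcal{E}}{dP_\mathcal{E}}=\mathbb{E}_P\bigl[\frac{dQ}{dP}\,\big|\,\mathcal{E}\bigr]$, so that $1-\frac{dQ_\mathcal{E}}{dP_\mathcal{E}}=\mathbb{E}_P\bigl[1-\frac{dQ}{dP}\,\big|\,\mathcal{E}\bigr]$. For \textbf{D4} I use that conditional expectation is an averaging operator: if $1-\frac{dQ}{dP}\leq M$ a.e.\ then its conditional expectation is $\leq M$ a.e., whence $\textnormal{ess\,sup}_{P_\mathcal{E}}\bigl(1-\frac{dQ_\mathcal{E}}{dP_\mathcal{E}}\bigr)\leq\textnormal{ess\,sup}_P\bigl(1-\frac{dQ}{dP}\bigr)$. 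For \textbf{D5}, if the two densities agree a.e.\ then the $\mathcal{E}$-measurable function $1-\frac{dQ_\mathcal{E}}{dP_\mathcal{E}}$ has the same essential supremum under $P$ and under $P_\mathcal{E}$ (since $P$ and $P_\mathcal{E}$ coincide on $\mathcal{E}$), giving equality of the two divergences.

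For the set identity \eqref{eq:contamination_div_contamination_model} I would argue by double inclusion. If $Q=(1-\eta)P+\eta R$ with $R\ll P$ a probability measure, then $Q\ll P$ and $\frac{dQ}{dP}=(1-\eta)+\eta\frac{dR}{dP}\geq 1-\eta$, so $1-\frac{dQ}{dP}\leq\eta$ a.e.\ and thus $D(Q||P)\leq\eta$; this gives ``$\supseteq$''. Conversely, given $D(Q||P)\leq\eta$, the observation above yields $Q\ll P$ with $\frac{dQ}{dP}\geq 1-\eta$ a.e., so I define $R$ through $\frac{dR}{dP}:=\tfrac{1}{\eta}\bigl(\frac{dQ}{dP}-(1-\eta)\bigr)$. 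This is nonnegative by the lower bound, integrates to $\tfrac{1}{\eta}(1-(1-\eta))=1$ so that $R$ is a probability measure with $R\ll P$, and by construction $Q=(1-\eta)P+\eta R$, giving ``$\subseteq$''.

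I expect the only delicate points to be the two measure-theoretic facts about essential suprema and conditional expectations used in \textbf{D4} and \textbf{D5}---namely that conditioning cannot increase the essential supremum, and that the essential supremum of an $\mathcal{E}$-measurable function is insensitive to whether it is computed under $P$ or $P_\mathcal{E}$. Both are standard, and once they are in hand the remaining steps, including the explicit construction of the contaminating measure $R$, are routine verifications.
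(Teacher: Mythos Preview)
Your proposal is correct and follows essentially the same route as the paper: the key step \textbf{D4} is handled identically via \eqref{eq:restricted_measures_RN} and the fact that conditional expectation cannot raise the essential supremum, and the set identity is proved by the same explicit construction $R=\tfrac{1}{\eta}\bigl(Q-(1-\eta)P\bigr)$. You are simply more explicit than the paper on the easy parts (\textbf{D1}--\textbf{D3}, \textbf{D5}, and the ``$\supseteq$'' inclusion), which the paper dismisses as clear or straightforward.
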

\begin{proof}

We first verify Assumption \ref{assum:div}. As properties \textbf{D1}-\textbf{D3} and \textbf{D5} are clear, we show \textbf{D4}. Let $\mathcal{E} \subset \mathcal{A}$ be $\sigma$-algebras and $P, Q \in \Delta(\Theta, \mathcal{A})$. By equation \eqref{eq:restricted_measures_RN},
\[
D(Q_\mathcal{E}||P_\mathcal{E}) = \textnormal{ess\,sup}_{P} \left( 1 - \mathbb{E}_P \left[ \left. \frac{dQ}{dP} \right| \mathcal{E} \right] \right) \leq \textnormal{ess\,sup}_{P} \left( 1 - \frac{dQ}{dP} \right) = D(Q||P),
\]
where the second inequality holds because the conditional expectation of a function has a smaller maximum than the original.

Next, we prove equation \eqref{eq:contamination_div_contamination_model}. We show only that the left-hand side is contained in the right-hand side, as the reverse direction is straightforward. Suppose $Q$ satisfies $D(Q||P) \leq \eta$. Define $R := (1/\eta) [Q - (1-\eta)P]$. Rearranging yields $Q = (1-\eta)P + \eta R$. Also, $R \in \Delta(\Theta, \mathcal{B})$ because (i) $R(\Theta) = (1/\eta) [1-(1-\eta)]=1$, and (ii) $D(Q||P) \leq \eta$ implies that $\frac{dR}{dP} = (1/\eta) [\frac{dQ}{dP}-(1-\eta)] \geq 0$. Hence, $Q$ belongs to the right-hand side of equation \eqref{eq:contamination_div_contamination_model}.

\end{proof}

\section{Proofs for Section \ref{subsec:proof}} \label{appen:main_proof}

\begin{proof} [Proof of Proposition \ref{prop:feasibility}]

(i) By incentive compatibility,
\begin{align} \label{eq:necessity_key}
U_i^{\min}(\theta) &\geq \inf_{Q \in \Delta} \int_\Theta [\theta x_i(\hat \theta, \theta') - t_i(\hat \theta, \theta')] dQ(\theta') \notag \\
&= \inf_{Q \in \Delta} \left[  \int_\Theta [ \hat \theta x_i(\hat \theta, \theta') - t_i(\hat \theta, \theta') ] dQ(\theta') + (\theta - \hat \theta) \int_\Theta x_i(\hat \theta, \theta') dQ(\theta')\right] \notag \\
&\geq U_i^{\min}(\hat \theta) +\inf_{Q \in \Delta} (\theta - \hat \theta) \int_\Theta x_i(\hat \theta, \theta') dQ(\theta'),
\end{align}
where the last inequality holds because the minimum of the sum of two functions is greater than the sum of the minimums of two functions.

Now, suppose $\theta > \hat \theta$. Then, inequality \eqref{eq:necessity_key} becomes
\begin{equation} \label{eq:necessity_ineq1}
U_i^{\min}(\theta) \geq U_i^{\min}(\hat \theta) + (\theta - \hat \theta) X_i^{\min}(\hat \theta).
\end{equation}
Interchanging the roles of $\theta$ and $\hat \theta$ in inequality \eqref{eq:necessity_key} yields
\begin{equation} \label{eq:necessity_ineq2}
U_i^{\min}(\hat \theta) \geq U_i^{\min}(\theta) + \inf_{Q \in \Delta} (\hat \theta - \theta) \int_\Theta x_i(\theta, \theta') dQ(\theta') = U_i^{\min}(\theta) + (\hat \theta - \theta) X_i^{\max}(\theta),
\end{equation}
where the second equality holds because $\theta>\hat \theta$. By inequalities \eqref{eq:necessity_ineq1}-\eqref{eq:necessity_ineq2},
\[
X_i^{\min}(\hat \theta) \leq \frac{U_i^{\min}(\theta) - U_i^{\min}(\hat \theta)}{\theta - \hat \theta} \leq X_i^{\max}(\theta).
\]
Thus, $U_i^{\min}$ is Lipschitz and hence absolutely continuous. At a.e. $\theta$, $U_i^{\min}$ is differentiable and $X_i^{\min}$ is continuous (by Assumption \ref{assum:X} (iii)). Taking $\hat \theta \rightarrow \theta$,
\[
X_i^{\min}(\theta) \leq (U_i^{\min})'(\theta) \implies U_i^{\min}(\theta) \geq \int_{\underline \theta}^\theta X_i^{\min}(z) dz.
\]

\noindent (ii) As individual rationality is clear, we show only incentive compatibility. Let $Q^*_{i, \theta} := \argmin_{Q \in \Delta} \int_\Theta x_i(\theta, \theta') dQ(\theta')$ (Lemma \ref{lem:belief_minimizing_winning_prob} shows its existence). Then,
\begin{equation} \label{eq:Xmin_Q*}
X_i^{\min}(\theta) = \int_\Theta x_i(\theta, \theta') dQ_{i, \theta}^*(\theta').
\end{equation}
Since $(x, t)$ is win-lose dependent, each type of bidder's worst-case utility is evaluated under the belief minimizing the winning probability (equation \eqref{eq:Umin_w_l}):
\begin{equation} \label{eq:Umin_Q*}
U_i^{\min}(\theta) = \int_\Theta [ \theta x_i(\theta, \theta') - t_i(\theta, \theta') ] dQ_{i, \theta}^*(\theta').
\end{equation}
We establish incentive compatibility as follows:
\begin{align*}
&U_i^{\min}(\theta) = U_i^{\min}(\hat \theta) + \int_{\hat \theta}^\theta X_i^{\min}(z) dz \geq U_i^{\min}(\hat \theta) + (\theta-\hat \theta) X_i^{\min}(\hat \theta)\\
&= \int_\Theta [\hat \theta x_i(\hat \theta, \theta') - t_i(\hat \theta, \theta')] dQ_{i, \hat \theta}^*(\theta') + (\theta - \hat \theta) \int_\Theta x_i(\hat \theta, \theta') dQ_{i, \hat \theta}^*(\theta')\\
&= \int_\Theta [\theta x_i(\hat \theta, \theta') - t_i(\hat \theta, \theta')] dQ_{i, \hat \theta}^*(\theta') \geq \inf_{Q \in \Delta} \int_\Theta [ \theta x_i(\hat \theta, \theta')-t_i(\hat \theta, \theta') ] dQ(\theta'),
\end{align*}
where the first equality follows by condition \eqref{eq:feasibility_sufficient}, the second inequality by Assumption \ref{assum:X} (iii) (this holds whether $\hat \theta \leq \theta$ or $\hat \theta \geq \theta$), and the third equality by equations \eqref{eq:Xmin_Q*}-\eqref{eq:Umin_Q*}.

\end{proof}

\begin{proof} [Proof of Proposition \ref{prop:SOSD}]

We first prove the following: if $g, h: \Theta \rightarrow \mathbb{R}$ are integrable functions such that $gh$ is integrable, $\bar g := \mathbb{E}_P[g|\mathcal{E}]$ and $\bar h := \mathbb{E}_P[h|\mathcal{E}]$, then
\begin{equation} \label{eq:gh}
\mathbb{E}_P[\bar g h] = \mathbb{E}_P[g \bar h].
\end{equation}
To see this, note that by the law of iterated expectations,
\[
\mathbb{E}_P[\bar g h] = \mathbb{E}_P[ \mathbb{E}_P[\bar g h|\mathcal{E}] ] = \mathbb{E}_P[\bar g \cdot \mathbb{E}_P[h|\mathcal{E}]] = \mathbb{E}_P[\bar g \bar h].
\]
A similar argument yields $\mathbb{E}_P[g \bar h]=\mathbb{E}_P[\bar g \bar h]$. Thus, equation \eqref{eq:gh} holds.

To show Proposition \ref{prop:SOSD}, it suffices to prove the following equation:
\begin{equation} \label{eq:SOSD_proof_key}
\inf_Q \{ \mathbb{E}_Q[\bar \pi]: D(Q||P) \leq \eta \} = \inf_Q \{ \mathbb{E}_Q[\pi]: \text{$D(Q||P) \leq \eta$, \, $\frac{dQ}{dP}$ is $\mathcal{E}$-measurable} \}.
\end{equation}

We first show that the left-hand side exceeds the right-hand side in equation \eqref{eq:SOSD_proof_key}. Let $Q \in \Delta(\Theta, \mathcal{B})$ with $D(Q||P) \leq \eta$ be given. Define $\bar Q \in \Delta(\Theta, \mathcal{B})$ as
\[
d \bar Q := \frac{dQ_{\mathcal{E}}}{dP_{\mathcal{E}}} dP = \mathbb{E}_P[\frac{dQ}{dP}|\mathcal{E}] dP,
\]
where the second equality holds by equation \eqref{eq:restricted_measures_RN}. By construction, the restrictions of $\bar Q$ and $Q$ to $\mathcal{E}$ coincide, i.e., $\bar Q_\mathcal{E}=Q_\mathcal{E}$. By \textbf{D4}-\textbf{D5} of Assumption \ref{assum:div},
\[
D(\bar Q||P) = D(\bar Q_{\mathcal{E}}||P_\mathcal{E}) = D(Q_{\mathcal{E}}||P_{\mathcal{E}}) \leq D(Q||P) \leq \eta.
\]
Thus, by equation \eqref{eq:gh}, we obtain the desired inequality:
\[
\mathbb{E}_Q[\bar \pi] = \mathbb{E}_P[\bar \pi \frac{dQ}{dP}] = \mathbb{E}_P[\pi \frac{d \bar Q}{dP}] = \mathbb{E}_{\bar Q}[\pi] \geq \text{right-hand side of equation \eqref{eq:SOSD_proof_key}}.
\]

It remains to prove the reverse inequality. Consider $Q \in \Delta(\Theta, \mathcal{B})$ such that $D(Q||P) \leq \eta$ and $\frac{dQ}{dP}$ is $\mathcal{E}$-measurable. Since $\mathbb{E}_P[\frac{dQ}{dP}|\mathcal{E}] = \frac{dQ}{dP}$, by equation \eqref{eq:gh},
\[
\mathbb{E}_Q[\pi] = \mathbb{E}_P[ \pi \frac{dQ}{dP} ] = \mathbb{E}_P[\bar \pi \frac{dQ}{dP}] = \mathbb{E}_Q[\bar \pi] \geq \text{left-hand side of equation \eqref{eq:SOSD_proof_key}}.
\]

\end{proof}

\begin{proof} [Proof of Corollary \ref{cor:SOSD_implication}]

To find $\hat t$, for each $i$ and $\theta$, we construct an interim transfer
\[
\hat t_i(\theta, \cdot) = \hat t_i^w(\theta) x_i(\theta, \cdot) + \hat t_i^l(\theta) (1-x_i(\theta, \cdot))
\]
satisfying the following properties:
\begin{align}
&(i) \, \widehat T_i(\theta) \geq T_i(\theta) \label{eq:main_hat_T} \\
&(ii) \, \widehat U_i^{\min}(\theta) = \int_{\underline \theta}^\theta X_i^{\min}(z) dz \label{eq:main_hat_U} \\
&(iii) \, \alpha \hat t_i^w(\theta)-\beta \hat t_i^l(\theta) \leq K \quad \text{if $\theta<\bar \theta$} \label{eq:main_hat_constraint} \\
&(iv) \, \theta - \hat t_i^w(\theta) \geq -\hat t_i^l(\theta). \label{eq:main_hat_w>=l}
\end{align}
We divide into three cases: (A) $0<X_i(\theta)<1$, (B) $X_i(\theta)=0$ and (C) $X_i(\theta)=1$. Case A presents our main idea, while Cases B and C deal with boundary cases.

\noindent \textbf{Case A:} \textit{if $0<X_i(\theta)<1$.} We proceed in two steps ($t \rightarrow \bar t \rightarrow \hat t$).

\noindent \textbf{Case A-Step 1.} We find $\bar t_i(\theta, \cdot) := \bar t_i^w(\theta) x_i(\theta, \cdot) + \bar t_i^l(\theta) (1-x_i(\theta, \cdot))$ satisfying
\begin{align}
&(i) \, \bar T_i(\theta) \geq T_i(\theta) \label{eq:main_bar_T} \\
&(ii) \, \bar U_i^{\min}(\theta) \geq \int_{\underline \theta}^\theta X_i^{\min}(z) dz \label{eq:main_bar_U}\\
&(iii) \, \alpha \bar t_i^w(\theta) - \beta \bar t_i^l(\theta) \leq K \label{eq:main_bar_constraint}\\
&(iv) \, \theta - \bar t_i^w(\theta) \geq -\bar t_i^l(\theta), \label{eq:main_bar_w>=l}
\end{align}
where $\bar U_i^{\min}(\theta)$ and $\bar T_i(\theta)$ denote the interim variables under $(x, \bar t)$.

Define $\bar t_i^w(\theta)$ and $\bar t_i^l(\theta)$ as
\[
\bar t_i^w(\theta) = \int_{\{\theta^w: x_i(\theta, \theta^w)=1\}} t_i(\theta, \theta^w) \frac{dP(\theta^w)}{X_i(\theta)}, \, \bar t_i^l(\theta) =  \int_{\{\theta^l: x_i(\theta, \theta^l)=0\}} t_i(\theta, \theta^l) \frac{dP(\theta^l)}{1-X_i(\theta)}.
\]
Let $\mathcal{E}_{i, \theta}$ be the $\sigma$-algebra generated by the winning and losing events:
\[
\mathcal{E}_{i, \theta} = \sigma( \{ \theta^w: x_i(\theta, \theta^w) = 1 \}, \{ \theta^l: x_i(\theta, \theta^l) = 0 \} ).
\]
Then, $\bar t_i(\theta, \cdot)$ is the conditional expectation of $t_i(\theta, \cdot)$ with respect to $\mathcal{E}_{i, \theta}$:
\begin{equation} \label{eq:cond_expectation}
\bar t_i(\theta, \theta') = \mathbb{E}_P[t_i(\theta, \cdot)|\mathcal{E}_{i, \theta}](\theta') \quad \text{a.e. $\theta'$}.
\end{equation}

Now, we verify that $\bar t_i(\theta, \cdot)$ satisfies properties \eqref{eq:main_bar_T}-\eqref{eq:main_bar_w>=l}. The first property \eqref{eq:main_bar_T} follows directly from equation \eqref{eq:cond_expectation}:
\begin{align*}
\bar T_i(\theta) &:= \int_\Theta \bar t_i(\theta, \theta') dP(\theta') =\int_\Theta \mathbb{E}_P[t_i(\theta, \cdot)|\mathcal{E}_{i, \theta}] dP(\theta') = \mathbb{E}_P[t_i(\theta, \cdot)] = T_i(\theta).
\end{align*}

The second property \eqref{eq:main_bar_U} follows by equation \eqref{eq:cond_expectation} and Proposition \ref{prop:SOSD}:
\begin{align*}
&\bar U_i^{\min}(\theta) := \inf_Q \{ \int_\Theta [\theta x_i(\theta, \theta') - \bar t_i(\theta, \theta')] dQ(\theta'): D(Q||P) \leq \eta\}\\
&= \inf_Q \{ \int_\Theta \mathbb{E}_P[\theta x_i(\theta, \cdot) - t_i(\theta, \cdot)|\mathcal{E}_{i, \theta}](\theta') dQ(\theta'): D(Q||P) \leq \eta\}\\
&\geq U_i^{\min}(\theta) \geq \int_{\underline \theta}^\theta X_i^{\min}(z) dz.
\end{align*}

We verify the third property \eqref{eq:main_bar_constraint} as follows. By Assumption \ref{assum:T}, for $\theta^w$ and $\theta^l$ with $x_i(\theta, \theta^w)=1$ and $x_i(\theta, \theta^l)=0$,
\[
\alpha t_i(\theta, \theta^w) - \beta t_i(\theta, \theta^l) \leq K \implies \alpha \sup_{\theta^w} t_i(\theta, \theta^w) - \beta \inf_{\theta^l} t_i(\theta, \theta^l) \leq K.
\]
Since $\bar t_i^w(\theta) \leq \sup_{\theta^w} t_i(\theta, \theta^w)$ and $\bar t_i^l(\theta) \geq \inf_{\theta^l} t_i(\theta, \theta^l)$, property \eqref{eq:main_bar_constraint} follows.

Finally, if the fourth property \eqref{eq:main_bar_w>=l} holds, the construction is complete. Otherwise, we adjust $\bar t_i(\theta, \cdot)$ using Lemma \ref{lem:w>=l} to obtain $\widetilde t_i(\theta, \cdot)$ which satisfies properties \eqref{eq:main_bar_T}-\eqref{eq:main_bar_w>=l}, and then proceed to Step 2 with $\widetilde t_i(\theta, \cdot)$ instead of $\bar t_i(\theta, \cdot)$.

\noindent \textbf{Case A-Step 2.} Define $\hat t_i^w(\theta)$ and $\hat t_i^l(\theta)$ as
\begin{align*}
\hat t_i^w(\theta) &= \bar t_i^w(\theta) + [ \bar U_i^{\min}(\theta) - \int_{\underline \theta}^\theta X_i^{\min}(z) dz ]\\
\hat t_i^l(\theta) &= \bar t_i^l(\theta) + [ \bar U_i^{\min}(\theta) - \int_{\underline \theta}^\theta X_i^{\min}(z) dz ].
\end{align*}
We verify that $\hat t_i(\theta, \cdot)$ satisfies the desired properties \eqref{eq:main_hat_T}-\eqref{eq:main_hat_w>=l}.

The first property \eqref{eq:main_hat_T} holds because properties \eqref{eq:main_bar_T} and \eqref{eq:main_bar_U} imply that
\[
\widehat T_i(\theta) = \bar T_i(\theta) + [ \bar U_i^{\min}(\theta) - \int_{\underline \theta}^\theta X_i^{\min}(z) dz ] \geq T_i(\theta).
\]

To prove the second property \eqref{eq:main_hat_U}, note that
\[
\theta x_i(\theta, \theta')-\hat t_i(\theta, \theta') = [ \theta x_i(\theta, \theta')-\bar t_i(\theta, \theta') ] - [ \bar U_i^{\min}(\theta) - \int_{\underline \theta}^\theta X_i^{\min}(z) dz ].
\]
It follows that
\[
\widehat U_i^{\min}(\theta) = \bar U_i^{\min}(\theta) - [ \bar U_i^{\min}(\theta) - \int_{\underline \theta}^\theta X_i^{\min}(z) dz ] = \int_{\underline \theta}^\theta X_i^{\min}(z) dz.
\]

We verify the third property \eqref{eq:main_hat_constraint} as follows:
\begin{align*}
\alpha \hat t_i^w(\theta) - \beta \hat t_i^l(\theta) = \alpha \bar t_i^w(\theta) - \beta \bar t_i^l(\theta) - (\beta - \alpha) [ \bar U_i^{\min}(\theta) - \int_{\underline \theta}^\theta X_i^{\min}(z) dz ] \leq K,
\end{align*}
where the last inequality holds because $\alpha \leq \beta$ (see Assumption \ref{assum:T}).

The final property \eqref{eq:main_hat_w>=l} follows immediately from property \eqref{eq:main_bar_w>=l}.

\medskip

\noindent \textbf{Case B:} \textit{if $X_i(\theta)=0$.} Let $\hat t_i^w(\theta) = \hat t_i^l(\theta) := 0$. First, property \eqref{eq:main_hat_T} holds because
\[
\widehat T_i(\theta) = 0 = \theta X_i(\theta) = \mathbb{E}_P[\theta x_i(\theta, \cdot)-t_i(\theta, \cdot)]+T_i(\theta) \geq U_i^{\min}(\theta) + T_i(\theta) \geq T_i(\theta).
\]

Next, we verify property \eqref{eq:main_hat_U}. Since $X_i(\theta)=0$ implies $X_i^{\min}(\theta)=0$,
\[
\widehat U_i^{\min}(\theta) = (\theta-\hat t_i^w(\theta)) \cdot X_i^{\min}(\theta) + (-\hat t_i^l(\theta)) \cdot (1-X_i^{\min}(\theta)) = \theta X_i^{\min}(\theta) = 0.
\]
By Assumption \ref{assum:X} (iii), $X_i^{\min}(z) = 0$ for $z \leq \theta$. Hence, $\widehat U_i^{\min}(\theta) = 0 = \int_{\underline \theta}^\theta X_i^{\min}(z) dz$.

The remaining properties \eqref{eq:main_hat_constraint} and \eqref{eq:main_hat_w>=l} hold by construction.

\medskip

\noindent \textbf{Case C:} \textit{if $X_i(\theta)=1$.} Let $\hat t_i^w(\theta) := \theta - \int_{\underline \theta}^\theta X_i^{\min}(z) dz$ and $\hat t_i^l(\theta) := -\int_{\underline \theta}^\theta X_i^{\min}(z) dz$. Then
\begin{equation} \label{eq:main_X=1_full_insurance}
\theta x_i(\theta, \theta') - \hat t_i(\theta, \theta') = \int_{\underline \theta}^\theta X_i^{\min}(z) dz \quad \text{for every $\theta'$.}
\end{equation}

To verify the first property \eqref{eq:main_hat_T}, observe the following:
\begin{align}
&\theta X_i(\theta)-\widehat T_i(\theta) = \int_{\underline \theta}^\theta X_i^{\min}(z) dz \label{eq:main_X=1_T1} \\
&\theta X_i(\theta) \geq U_i^{\min}(\theta) + T_i(\theta) \geq \int_{\underline \theta}^\theta X_i^{\min}(z) dz + T_i(\theta), \label{eq:main_X=1_T2}
\end{align}
where equation \eqref{eq:main_X=1_T1} follows from equation \eqref{eq:main_X=1_full_insurance}. Comparing equation \eqref{eq:main_X=1_T1} with inequality \eqref{eq:main_X=1_T2} yields $\widehat T_i(\theta) \geq T_i(\theta)$.

The second property \eqref{eq:main_hat_U} follows by equation \eqref{eq:main_X=1_full_insurance}. The third property \eqref{eq:main_hat_constraint} holds automatically, since $\theta = \bar \theta$ by Assumption \ref{assum:X} (ii). The final property \eqref{eq:main_hat_w>=l} holds by construction.

\end{proof}

\begin{lemma} \label{lem:w>=l}

Suppose that the assumptions of Corollary \ref{cor:SOSD_implication} hold. Fix $i$ and $\theta$ such that $0<X_i(\theta)<1$. Consider an interim transfer
\[
\bar t_i(\theta, \cdot) := \bar t_i^w(\theta) x_i(\theta, \cdot) + \bar t_i^l(\theta) (1-x_i(\theta, \cdot))
\]
satisfying $\alpha \bar t_i^w(\theta) - \beta \bar t_i^l(\theta) \leq K$ and $\theta - \bar t_i^w(\theta) < -\bar t_i^l(\theta)$. Then, there exists
\[
\widetilde t_i(\theta, \cdot) := \widetilde t_i^w(\theta) x_i(\theta, \cdot) + \widetilde t_i^l(\theta) (1-x_i(\theta, \cdot))
\]
satisfying the following properties:
\begin{align}
&(i) \, \widetilde T_i(\theta) \geq \bar T_i(\theta) \label{eq:FI_tilde_T} \\
&(ii) \, \widetilde U_i^{\min}(\theta)=\bar U_i^{\min}(\theta) \label{eq:FI_tilde_U} \\
&(iii) \, \alpha \widetilde t_i^w(\theta) - \beta \widetilde t_i^l(\theta) \leq K \label{eq:FI_tilde_constraint}\\
&(iv) \, \theta - \widetilde t_i^w(\theta) = -\widetilde t_i^l(\theta), \label{eq:FI_tilde_w>=l}
\end{align}
where $\widetilde U_i^{\min}(\theta)$ and $\widetilde T_i(\theta)$ denote the interim variables under $(x, \widetilde t)$.

\end{lemma}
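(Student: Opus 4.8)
The plan is to exploit the fact that we are in the ``wrong'' region $\bar w < \bar l$ of the $(w,l)$-plane, where $\bar w := \theta - \bar t_i^w(\theta)$ and $\bar l := -\bar t_i^l(\theta)$ denote the winning and losing payoffs, and to replace $\bar t_i(\theta,\cdot)$ by the full insurance transfer sitting on the $45^\circ$ line at exactly the same worst-case utility level. Because the winning payoff $\bar w$ is strictly below the losing payoff $\bar l$, the expected payoff $\bar w\, Q(W) + \bar l\, Q(W^c)$ is minimized by the belief that \emph{maximizes} the winning probability $Q(W)$, where $W := \{\theta': x_i(\theta,\theta')=1\}$. Hence, writing $X_i^{\max}(\theta) := \sup_Q\{Q(W): D(Q||P)\le\eta\}$ as in the proof of Proposition \ref{prop:feasibility}, we obtain the closed form $\bar U_i^{\min}(\theta) = \bar l + (\bar w - \bar l)\,X_i^{\max}(\theta)$. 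I would then set the common payoff $m := \bar U_i^{\min}(\theta)$ and define $\widetilde t_i^w(\theta) := \theta - m$ and $\widetilde t_i^l(\theta) := -m$.

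With this definition, properties \eqref{eq:FI_tilde_U} and \eqref{eq:FI_tilde_w>=l} are immediate: by construction $\theta - \widetilde t_i^w(\theta) = -\widetilde t_i^l(\theta) = m$, so the payoff is constant at $m$ whatever the competitor reports, and therefore $\widetilde U_i^{\min}(\theta) = m = \bar U_i^{\min}(\theta)$.

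For the revenue property \eqref{eq:FI_tilde_T}, note that $\widetilde T_i(\theta) = \theta X_i(\theta) - m$ while $\bar T_i(\theta) = \theta X_i(\theta) - [\bar l + (\bar w - \bar l) X_i(\theta)]$, so the claim reduces to the inequality $m \le \bar l + (\bar w - \bar l)X_i(\theta)$, i.e.\ $(\bar w - \bar l)(X_i^{\max}(\theta) - X_i(\theta)) \le 0$. This holds because $\bar w - \bar l < 0$ by hypothesis, and because $P$ itself is a feasible belief ($D(P||P)=0\le\eta$ by \textbf{D1}), so that $X_i^{\max}(\theta)\ge P(W) = X_i(\theta)$. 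Intuitively, the worst-case belief over-weights the low-payoff (winning) event, pushing the worst-case utility below the expected payoff under $P$, and collapsing to full insurance at that lower level hands the seller the difference.

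The step I expect to require the most care is the constraint property \eqref{eq:FI_tilde_constraint}. Here I would compute the difference directly: substituting $m = \bar l + (\bar w - \bar l)X_i^{\max}(\theta)$ gives
\[
\big[\alpha \widetilde t_i^w(\theta) - \beta \widetilde t_i^l(\theta)\big] - \big[\alpha \bar t_i^w(\theta) - \beta \bar t_i^l(\theta)\big] = (\bar w - \bar l)\big[\alpha + (\beta-\alpha)X_i^{\max}(\theta)\big].
\]
Since $0 \le \alpha \le \beta$ by Assumption \ref{assum:T} and $X_i^{\max}(\theta)\ge 0$, the bracket is non-negative, while $\bar w - \bar l < 0$; hence the difference is non-positive, so $\alpha \widetilde t_i^w(\theta) - \beta \widetilde t_i^l(\theta) \le \alpha \bar t_i^w(\theta) - \beta \bar t_i^l(\theta) \le K$, the last bound being the hypothesis. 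The only delicate point is keeping the signs consistent through the change of variables between transfers $(\bar t_i^w,\bar t_i^l)$ and payoffs $(\bar w,\bar l)$; once the worst-case belief is correctly identified as the winning-probability maximizer, the remaining verifications are elementary algebra, and attainment of the supremum $X_i^{\max}(\theta)$ is never needed since only its value and the bound $X_i^{\max}(\theta)\ge X_i(\theta)$ enter.
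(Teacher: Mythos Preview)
Your construction is exactly the paper's: set $\widetilde t_i^w(\theta)=\theta-\bar U_i^{\min}(\theta)$ and $\widetilde t_i^l(\theta)=-\bar U_i^{\min}(\theta)$, and your verifications of \eqref{eq:FI_tilde_T}--\eqref{eq:FI_tilde_w>=l} are correct. The only difference is that you route the checks of \eqref{eq:FI_tilde_T} and \eqref{eq:FI_tilde_constraint} through the explicit formula $m=\bar l+(\bar w-\bar l)X_i^{\max}(\theta)$, whereas the paper avoids $X_i^{\max}$ altogether. For \eqref{eq:FI_tilde_T} the paper simply uses that $P\in\Delta$ gives $\mathbb{E}_P[\theta x_i-\bar t_i]\ge\bar U_i^{\min}(\theta)=m$, which is your inequality $m\le\bar l+(\bar w-\bar l)X_i(\theta)$ without naming $X_i^{\max}$. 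For \eqref{eq:FI_tilde_constraint} the paper observes that, since the payoff under $\bar t$ is a convex combination of $\bar w$ and $\bar l$ under every $Q$, one has $\bar w\le m\le\bar l$; this immediately yields $\widetilde t_i^w(\theta)\le\bar t_i^w(\theta)$ and $\widetilde t_i^l(\theta)\ge\bar t_i^l(\theta)$, hence $\alpha\widetilde t_i^w-\beta\widetilde t_i^l\le\alpha\bar t_i^w-\beta\bar t_i^l\le K$. Your algebraic identity $(\bar w-\bar l)[\alpha+(\beta-\alpha)X_i^{\max}(\theta)]\le 0$ reaches the same conclusion but with more bookkeeping; the paper's softer bound $\bar w\le m\le\bar l$ is all that is needed.
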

\begin{proof}

Let $\widetilde t_i^w(\theta) := \theta - \bar U_i^{\min}(\theta)$ and $\widetilde t_i^l(\theta) := -\bar U_i^{\min}(\theta)$. A similar argument as in Case C in the proof of Corollary \ref{cor:SOSD_implication} yields properties \eqref{eq:FI_tilde_T}, \eqref{eq:FI_tilde_U} and \eqref{eq:FI_tilde_w>=l}. To verify the remaining property \eqref{eq:FI_tilde_constraint}, note that
\[
\theta-\widetilde t_i^w(\theta) = \widetilde U_i^{\min}(\theta) = \bar U_i^{\min}(\theta) \geq \theta-\bar t_i^w(\theta),
\]
where the first equality follows from $\theta-\widetilde t_i^w(\theta)=-\widetilde t_i^l(\theta)$, the second equality from property \eqref{eq:FI_tilde_U}, and the third inequality because $\theta-\bar t_i^w(\theta)<-\bar t_i^l(\theta)$. It follows that $\widetilde t_i^w(\theta) \leq \bar t_i^w(\theta)$. A similar argument yields $\widetilde t_i^l(\theta) \geq \bar t_i^l(\theta)$. Hence,
\[
\alpha \widetilde t_i^w(\theta) - \beta \widetilde t_i^l(\theta) \leq \alpha \bar t_i^w(\theta) - \beta \bar t_i^l(\theta) \leq K.
\]

\end{proof}

\begin{lemma} \label{lem:belief_minimizing_winning_prob}

Suppose that Assumption \ref{assum:div} holds. Given a Borel set $E$, let
\begin{equation} \label{eq:belief_minimizing_winning_prob}
p^{\min} := \inf_Q \{ Q(E): D(Q||P) \leq \eta \}.
\end{equation}
Define $Q^*$ as $\frac{dQ^*}{dP} = \frac{p^{\min}}{p} \cdot \mathbf{1}_E + \frac{1-p^{\min}}{1-p} \cdot \mathbf{1}_{\Theta \setminus E}$, where $p := P(E)$. Then, $Q^*$ is a minimizer of problem \eqref{eq:belief_minimizing_winning_prob}.

\end{lemma}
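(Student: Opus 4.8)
The plan is to show that $Q^*$ is feasible, i.e.\ $D(Q^*\|P)\le\eta$, and that it attains the value $p^{\min}$, so that it solves problem \eqref{eq:belief_minimizing_winning_prob}. Attainment is immediate: with $p := P(E)$ and assuming $0<p<1$ (the cases $p\in\{0,1\}$ are trivial, since then $Q(E)$ is forced by $Q\ll P$), integrating the prescribed density gives $Q^*(E) = \frac{p^{\min}}{p}\,P(E) = p^{\min}$ and $Q^*(\Theta) = p^{\min} + (1-p^{\min}) = 1$, with $\frac{dQ^*}{dP}\ge 0$, so $Q^*$ is a probability measure achieving the objective value $p^{\min}$. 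The entire difficulty is therefore to verify feasibility of $Q^*$, i.e.\ that the infimum $p^{\min}$ is actually attained by a belief whose likelihood ratio takes only two values.

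First I would reduce the problem to the one-parameter family of beliefs that are ``coarse'' with respect to the partition $\{E,\Theta\setminus E\}$. For $q\in[0,1]$ let $Q^q$ denote the belief with $\frac{dQ^q}{dP} = \frac{q}{p}\mathbf{1}_E + \frac{1-q}{1-p}\mathbf{1}_{\Theta\setminus E}$, so that $Q^q(E)=q$ and $Q^*=Q^{p^{\min}}$. Let $\mathcal{E} := \sigma(E) = \{\emptyset,E,\Theta\setminus E,\Theta\}$. Given any feasible $Q$ (so $D(Q\|P)<\infty$, hence $Q\ll P$ by \textbf{D3}), I would coarsen it exactly as in the proof of Proposition \ref{prop:SOSD}: the belief $\bar Q$ with $d\bar Q := \frac{dQ_{\mathcal{E}}}{dP_{\mathcal{E}}}\,dP$ satisfies, by equation \eqref{eq:restricted_measures_RN}, $\frac{d\bar Q}{dP} = \mathbb{E}_P[\frac{dQ}{dP}\mid\mathcal{E}]$, which equals $\frac{Q(E)}{p}$ on $E$ and $\frac{1-Q(E)}{1-p}$ on $\Theta\setminus E$; that is, $\bar Q = Q^{Q(E)}$. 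Properties \textbf{D4}--\textbf{D5} give $D(\bar Q\|P) = D(Q_{\mathcal{E}}\|P_{\mathcal{E}})\le D(Q\|P)\le\eta$, while $\bar Q(E)=Q(E)$. Hence every objective value attainable by a feasible belief is attainable by a feasible member of the family $\{Q^q\}$, and conversely these beliefs form a subfamily; therefore $p^{\min} = \inf\{q: D(Q^q\|P)\le\eta\}$.

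Next I would establish continuity of $q\mapsto D(Q^q\|P)$ on $[0,p]$ in order to pass the constraint to the limit. The key observation is a convex-combination representation: letting $R_0$ be the belief with $\frac{dR_0}{dP} = \frac{1}{1-p}\mathbf{1}_{\Theta\setminus E}$ (so $R_0\ll P$ with bounded density), a direct computation shows $Q^q = (1-\epsilon)P + \epsilon R_0$ with $\epsilon = 1 - q/p$, for every $q\in[0,p]$. Since $q\mapsto\epsilon$ is an affine homeomorphism of $[0,p]$ onto $[0,1]$, property \textbf{D2} yields that $q\mapsto D(Q^q\|P)$ is continuous on $[0,p]$.

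Finally I would conclude. Because $P=Q^p$ is feasible ($D(P\|P)=0$ by \textbf{D1}), we have $p^{\min}\le p$; the case $p^{\min}=p$ is trivial, as then $Q^*=P$. Assuming $p^{\min}<p$, I would take a minimizing sequence of feasible beliefs, coarsen it to obtain $q_n := \bar Q_n(E)\to p^{\min}$ with $D(Q^{q_n}\|P)\le\eta$ and, after discarding finitely many terms, $q_n\in[0,p]$. Continuity then gives $D(Q^*\|P)=D(Q^{p^{\min}}\|P)=\lim_n D(Q^{q_n}\|P)\le\eta$, so $Q^*$ is feasible and hence a minimizer. The main obstacle is precisely this attainment step: properties \textbf{D4}--\textbf{D5} are what allow one to restrict to the two-valued family without increasing the divergence, and \textbf{D2}, applied along the segment from $P$ toward $R_0$, is what upgrades the infimum to a minimum; everything else is bookkeeping.
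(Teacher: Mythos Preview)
Your proof is correct and follows essentially the same route as the paper's: both restrict to the one-parameter family $\{Q^q\}$ of $\sigma(E)$-measurable beliefs via the coarsening argument of Proposition~\ref{prop:SOSD} (i.e., properties \textbf{D4}--\textbf{D5}), and then invoke \textbf{D2} to pass the constraint $D(\cdot\|P)\le\eta$ to the limit and conclude that the infimum is attained at $q=p^{\min}$. The paper's version is terser---it simply asserts that ``by \textbf{D2}, the right-hand side has a minimizer''---whereas you make explicit the convex-combination representation $Q^q=(1-\epsilon)P+\epsilon R_0$ on $[0,p]$ and the minimizing-sequence argument, which is exactly the content needed to justify that assertion.
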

\begin{proof}

Plugging $\pi \equiv \mathbf{1}_E$ into equation \eqref{eq:SOSD_proof_key} in the proof of Proposition \ref{prop:SOSD} yields
\begin{equation} \label{eq:belief_minimizing_winning_prob_coarse}
\inf_Q \{ Q(E): D(Q||P) \leq \eta \} = \inf_Q \{ Q(E): \text{$D(Q||P) \leq \eta$, \, $\frac{dQ}{dP}$ is $\mathcal{E}$-measurable} \},
\end{equation}
where $\mathcal{E}$ denotes the $\sigma$-algebra generated by $E$. For $q \in [0, 1]$, there exists a unique $Q_q \in \Delta(\Theta, \mathcal{B})$ such that $\frac{dQ_q}{dP}$ is $\mathcal{E}$-measurable and $Q_q(E)=q$, defined as $\frac{dQ_q}{dP} := \frac{q}{p} \cdot \mathbf{1}_E + \frac{1-q}{1-p} \cdot \mathbf{1}_{\Theta \setminus E}$.
Hence, equation \eqref{eq:belief_minimizing_winning_prob_coarse} can be rewritten as
\begin{equation} \label{eq:belief_minimizing_winning_prob_one_dim}
\inf_Q \{ Q(E): D(Q||P) \leq \eta \} = \inf_{q \in [0, 1]} \{ q: D(Q_q||P) \leq \eta \}.
\end{equation}
By \textbf{D2} of Assumption \ref{assum:div}, the right-hand side of equation \eqref{eq:belief_minimizing_winning_prob_one_dim} has a minimizer, which must be $p^{\min}$. Thus, $Q^*=Q_{p^{\min}}$ is a minimizer to problem \eqref{eq:belief_minimizing_winning_prob}.

\end{proof}

\bibliographystyle{elsarticle-harv}
\bibliography{bib_draft}

\end{document}